\journal{Nuclear Physics B}
\newtheorem{theorem}{Theorem}
\newtheorem{lemma}{Lemma}
\newtheorem{definition}{Definition}
\newtheorem{corollary}{Corollary}
\newtheorem{observation}{Observation}
\newtheorem{reduction rule}{Reduction Rule}[section]
\newtheorem{claim}{Claim}[section]
\newtheorem{branching rule}{Branching Rule}[section]
\newenvironment{lemma-l}[1]{\noindent {\bf Lemma~#1.~}\em }{\smallskip}
\newenvironment{theorem-t}[1]{\noindent {\bf Theorem~#1.~}\em }{\smallskip}
\newtheorem{obs}[theorem]{Observation}
\newtheorem{Reduction Rule}{Reduction Rule}
\newcommand{\containment}{\NP~$\subseteq$~\coNP/poly\xspace}
\newcommand{\Oh}{\mathcal{O}}
\newcommand{\cdp}{\textsc{cd-Partization}}
\newcommand{\tw}{{\mathbf{tw}}}
\newcommand{\inc}{{\mathbf{inc}}}
\newcommand{\adj}{{\mathbf{adj}}}
\newcommand{\defdecproblem}[3]{
	\vspace{3mm}
	\noindent\fbox{
		\begin{minipage}{.95\textwidth}
			\begin{tabular*}{\textwidth}{@{\extracolsep{\fill}}lr} \textsc{#1}  &  \\ \end{tabular*}
			{\bf{Input:}} #2  \\
			{\bf{Question:}} #3
		\end{minipage}
	}
	\vspace{2mm}
}
\begin{document}

\begin{frontmatter}



\title{Parameterized and Exact Algorithms for Class Domination Coloring \tnoteref{conf-version}}
\tnotetext[conf-version]{A preliminary version of this paper appeared in the proceedings of $43^{rd}$ International Conference on Current Trends in Theory and Practice of Computer Science (SOFSEM 2016).}

\author{R. Krithika}
\ead{krithika@iitpkd.ac.in}
\address{Indian Institute of Technology Palakkad, Palakkad, India}

\author{Ashutosh Rai}
\ead{ashutosh.rai@maths.iitd.ac.in}
\address{Indian Institute of Technology Delhi, Delhi, India}

\author{Saket Saurabh\fnref{label2}}
\ead{saket@imsc.res.in}
\fntext[label2]{The research leading to these results has received funding from the European Research Council under the European Union's Seventh Framework Programme (FP7/2007-2013) / ERC grant agreement no. 306992}
\address{The Institute of Mathematical Sciences, HBNI, Chennai, India and University of Bergen, Bergen, Norway}

\author{Prafullkumar Tale\fnref{label3}}
\ead{prafullkumar.tale@cispa.saarland}
\address{CISPA Helmholtz Center for Information Security, Saarbr$\ddot{\text{u}}$cken, Germany}

\fntext[label3]{This research is a part of a project that has received funding from the European Research Council (ERC) under the European Union's Horizon $2020$ research and innovation programme under grant agreement SYSTEMATICGRAPH (No. $725978$).}

\begin{abstract}
A class domination coloring (also called cd-Coloring or dominated coloring) of a graph is a proper coloring in which every color class is contained in the neighbourhood of some vertex. 
The minimum number of colors required for any cd-coloring of $G$, denoted by $\chi_{cd}(G)$, is called the class domination chromatic number (cd-chromatic number) of $G$.
In this work, we consider two problems associated with the cd-coloring of a graph in the context of exact exponential-time algorithms and parameterized complexity.
(1) Given a graph $G$ on $n$ vertices, find its cd-chromatic number.
(2) Given a graph $G$ and integers $k$ and $q$, can we delete at most $k$ vertices such that the cd-chromatic number of the resulting graph is at most $q$?
For the first problem, we give an exact algorithm with running time $\Oh(2^n n^4 \log n)$.
Also, we show that the problem is \FPT\ with respect to the number $q$ of colors as the parameter on chordal graphs.
On graphs of girth at least 5, we show that the problem also admits a kernel with $\Oh(q^3)$ vertices.
For the second (deletion) problem, we show \NP-hardness for each $q \geq 2$.
Further, on split graphs, we show that the problem is \NP-hard if $q$ is a part of the input and \FPT\ with respect to $k$ and $q$ as combined parameters.
As recognizing graphs with cd-chromatic number at most $q$ is \NP-hard in general for $q \geq 4$, the deletion problem is unlikely to be \FPT\ when parameterized by the size of the deletion set on general graphs.
We show fixed parameter tractability for $q \in \{2,3\}$ using the known algorithms for finding a vertex cover and an odd cycle transversal as subroutines. 
\end{abstract}







\end{frontmatter}



	\section{Introduction}
Graph coloring is a classical problem in the fields of combinatorics and algorithm design. A {\em proper coloring} of a graph is an assignment of colors to its vertices such that no two adjacent vertices receive the same color. 
Equivalently, a proper coloring is a partition of the vertex set into independent sets.
In this context, these independent sets are also called {\em color classes}.
A proper coloring of a graph $G$ using $q$ colors is called a {\em $q$-coloring} of $G$ and the minimum number of colors required in a proper coloring is called as the {\em chromatic number} of $G$. Determining the chromatic number of a graph is a classical \NP-hard problem. This problem has been widely investigated in the areas of exact algorithms~\cite{BjorklundHK09,GaspersKLT09,GaspersL12,Kratsch08,LAWLER197666,RooijB11}, approximation algorithms~\cite{BlumK97,GuhaK99,Kim10,LenzenW10}, and parameterized algorithms~\cite{AlberBFKN02,AlonG09,Cai03a,DowneyFMR08}.  Further, variants of the graph coloring like \textsc{Edge-Chromatic Number}, \textsc{Achromatic Number}, \textsc{$b$-Chromatic Number}, \textsc{Total Chromatic Number}, \textsc{Dominator Coloring} and \textsc{Class Domination Coloring} have also been well studied~\cite{Gera2006,Gera2007,dom-col-3}. 

In this work, we initiate the study of \textsc{Class Domination Coloring} (also called \textsc{cd-Coloring} or \textsc{Dominated Coloring}) in the realm of parameterized complexity and exact exponential time algorithms.
A {\em cd-coloring} is a proper coloring of the graph in which every color class is contained in the neighbourhood of some vertex. 
See Figure~\ref{fig:cd-col} for an example.
The minimum number of colors needed in any cd-coloring of $G$ is called the {\em class domination chromatic number} or {\em cd-chromatic number} of $G$ and is denoted by $\chi_{cd}(G)$. Also, $G$ is said to be $q$-cd-colorable if $\chi_{cd}(G) \leq q$. The \textsc{cd-Coloring} problem is formally defined as follows. 
	
\defdecproblem{cd-Coloring}{A graph $G$ and a positive integer $q$.}{Is $\chi_{cd}(G) \leq q$?}
	
\textsc{cd-Coloring} is \NP-complete for $q \geq 4$ and polynomial-time solvable for $q \leq 3$ \cite{caldam16}. A characterization of graphs that admit 3-cd-colorings is also known \cite{caldam16}. \textsc{cd-Coloring} has also been studied on many restricted graph classes like split graphs, $P_4$-free graphs~\cite{caldam16} and middle and central graphs of $K_{1,n}$, $C_n$ and $P_n$~\cite{vs10}. 
See also \cite{abid2018dominated, merouane2015dominated,  shalu2017lower, shalu2020complexity, choopani2018dominated}.
	
We study this problem in the context of exact exponential-time algorithms and parameterized complexity.
The field of exact algorithms typically deals with designing algorithms for \NP-hard problems that are faster than brute-force search while the goal in parameterized complexity is to provide efficient algorithms for \NP-complete problems by switching from the classical view of single-variate measure of the running time to a multi-variate one. In parameterized complexity, we consider instances $(I,k)$ of a parameterized problem $\Pi \subseteq \Sigma^* \times \mathbb{N}$, where $\Sigma$ is a finite alphabet. Algorithms in this area have running times of the form $f(k)|I|^{\Oh(1)}$, where $k$ is an integer measuring some part of the instance. This integer $k$ is called the {\em parameter}, and a problem that admits such an algorithm is said to be {\em fixed-parameter tractable} (\FPT). In most of the cases, the solution size is taken to be the parameter, which means that this approach results in efficient (polynomial-time) algorithms when the solution is of small size. A \emph{kernelization} algorithm for a parameterized problem $\Pi$ is a polynomial time procedure which takes as input an instance $(x,k)$ of $\Pi$ and returns an instance $(x',k')$ such that $(x,k) \in \Pi$ if and only if $(x',k')\in \Pi$ and $|x'| \leq h(k)$ and $k' \leq g(k)$, for some computable functions $h,g$. The returned instance is called a {\it kernel} and $h(k)+g(k)$ is its {\it size}. We say that $\Pi$ admits a {\em polynomial kernel} if $h$ and $g$ are polynomials. For more background on parameterized complexity, we refer the reader to the monographs \cite{fpt-book,ParameterizedComplexityBook,FlumGroheBook,RN}.
	
\begin{figure}[t]
	\centering
	  \includegraphics[width=0.5\textwidth]{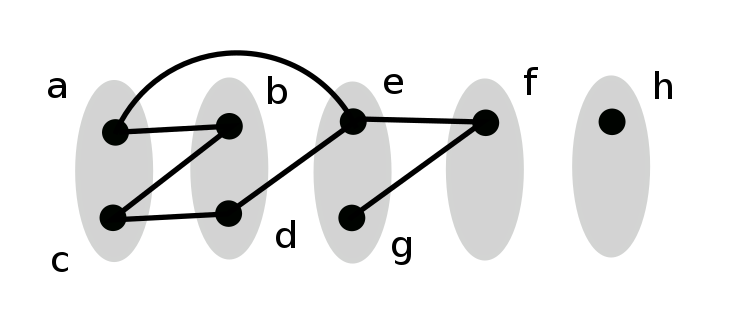}
	\caption{An example of a cd-Coloring of a graph}
	\label{fig:cd-col}
\end{figure}

We first observe that parameterizing \textsc{cd-Coloring} by the solution size (which is the number of colors) does not help in designing efficient algorithms as the problem is para-\NP-hard (\NP-hard even when the parameter is a constant).
Hence, this problem is unlikely to be \FPT\ when parameterized by the solution size. 
Then, we describe an $\Oh(2^n n^4 \log n)$-time algorithm for finding the cd-chromatic number of a graph using polynomial method.
Next, we show that \textsc{cd-Coloring} is \FPT\ when parameterized by the number of colors and the treewidth of the input graph. Further, we show that the problem is \FPT\ when parameterized by the number of colors on chordal graphs. Kaminski and Lozin \cite{lozin2007coloring} showed that determining if a graph of girth at least $g$ admits a proper coloring with at most $q$ colors or not is \NP-complete for any fixed $q \ge 3$ and $g \ge 3$. In particular, \textsc{Chromatic Number} is para-\NP-hard for graphs of girth at least 5. In contrast, we show that \textsc{cd-Coloring} is \FPT\ on this graph class and admits a kernel with $\Oh(q^3)$ vertices.

On a graph $G$ that is not $q$-cd-colorable, a natural optimization question is to check if we can delete at most $k$ vertices from $G$ such that the cd-chromatic number of the resultant graph is at most $q$. We define this problem as follows. 
	
	\defdecproblem{\cdp}{Graph $G$, integers $k$ and $q$}{Does there exist $S \subseteq V(G)$, $|S| \leq k$, such that $\chi_{cd}(G-S)\leq q$?}
	
If $q$ is fixed, then we refer to the problem as \textsc{$q$-\cdp}. Once again, from parameterized complexity point of view, this question is not interesting on general graphs for values of $q$ greater than three, as in those cases, an \FPT\ algorithm with deletion set (solution) size as the parameter is a polynomial-time recognition algorithm for $q$-cd-colorable graphs. Hence, the deletion question is interesting only on graphs where the recognition problem is polynomial-time solvable. We show that \textsc{$q$-\cdp} is \NP-complete for each $q \geq 2$, and that for $q \in \{2,3\}$, the problem is \FPT\ with respect to the solution size as the parameter. Our algorithms use the known parameterized algorithms for finding a vertex cover and an odd cycle transversal of a graph as subroutines. We also show that \cdp\ remains \NP-complete on split graphs and is \FPT\ when parameterized by the number of colors and solution size.
	
	\section{Preliminaries}
	
	The set of integers $\{1,2,\ldots,k\}$ is denoted by $[k]$. All graphs considered in this paper are finite, undirected and simple. For the terms which are not explicitly defined here, we use standard notations from \cite{diestel2000graph}. For a graph $G$, its vertex set is denoted by $V(G)$ and its edge set is denoted by $E(G)$. For a vertex $v \in V(G)$, its (open) {\em neighbourhood} $N_G(v)$ is the set of all vertices adjacent to it and its {\em closed neighborhood} is the set $N_G(v) \cup \{v\}$. We omit the subscript in the notation for neighbourhood if the graph under consideration is clear from the context. The degree of a vertex $v$ is the size of its open neighborhood. 
	
	For a set $S\subseteq V(G)$, the {\it subgraph of $G$ induced by $S$}, denoted by $G[S]$, is defined as the subgraph of $G$ with vertex set $S$ and edge set $\{(u,v) \in E(G) :u,v\in S\}$. The subgraph of $G$ obtained after deleting $S$ (and the edges incident on it) is denoted as $G- S$. The {\em girth} of a graph is the length of a smallest cycle. A set $D \subseteq V(G)$ is said to be a {\em dominating set} of $G$ if every vertex in $V(G) \setminus D$ is adjacent to some vertex in $D$. 
	
	A {\em proper coloring} of $G$ with $q$ colors is a function $f: V(G) \rightarrow [q]$ such that for all $(u,v) \in E(G)$, $f(u) \neq f(v)$. For a proper coloring $f$ of $G$ with $q$ colors and $i \in [q]$, $f^{-1} (i) \subseteq V(G)$ is called a {\em color class} in the coloring $f$. The {\em chromatic number} $\chi(G)$ of $G$ is the minimum number of colors required in a proper coloring of $G$. A {\em clique} is a graph which has an edge between every pair of vertices. The {\em clique number} $\omega(G)$ of $G$ is the size of a largest clique which is a subgraph of $G$. A {\em vertex cover} is a set of vertices that contains at least one endpoint of every edge in the graph. An {\em independent set} is a set of pairwise nonadjacent vertices. A graph is said to be a {\em bipartite graph} if its vertex set can be partitioned into two independent sets. An {\em odd cycle transversal} is a set of vertices whose deletion from the graph results in a bipartite graph. A {\em tree-decomposition} of a graph $G$ is a pair 
	$(\mathbb{T},\mathcal{ X}=\{X_{t}\}_{t\in V({\mathbb T})})$ such that
	\begin{itemize}
		\item $\bigcup_{t\in V(\mathbb{T})}{X_t}=V(G)$,
		\item for every edge $(x,y)\in E(G)$ there is a $t\in V(\mathbb{T})$ such that  $\{x,y\}\subseteq X_{t}$, and 
		\item for every  vertex $v\in V(G)$ the subgraph of $\mathbb{T}$ induced by the set  $\{t\mid v\in X_{t}\}$ is connected.
	\end{itemize}
	
	\noindent The {\em width} of a tree decomposition is $\max_{t\in V(\mathbb{T})} |X_t| -1$ and the {\em treewidth} of $G$, denoted by $\tw(G)$, is the minimum width over all tree decompositions of $G$. The syntax of {\em Monadic Second Order Logic (MSO)} of graphs includes the logical connectives $\vee$, $\wedge$, $\neg$, $\Rightarrow$, $\Leftrightarrow$, variables for vertices, edges, sets of vertices, sets of edges, the quantifiers $\forall$, $\exists$ that can be applied to these variables and the following five binary relations.
	\begin{itemize}
		\item $u \in U$ where $u$ is a vertex variable and $U$ is a vertex set variable;
		\item $e \in F$ where $e$ is an edge variable and $F$ is an edge set variable;
		\item $\inc(e,u)$, where $e$ is an edge variable, $u$ is a vertex variable, and the interpretation is that the edge $e$ is incident with the vertex $u$;
		\item $\adj(u,v)$, where $u$ and $v$ are vertex variables and the interpretation is that $u$ and $v$ are adjacent;
		\item equality of variables representing vertices, edges, sets of vertices, and sets of edges.
	\end{itemize}
	
	
	
	\noindent For an MSO formula $\phi$, $||\phi||$ denotes the length of its encoding as a string.
	
	\begin{theorem}[Courcelle's theorem, \cite{Courcelle90,Courcelle92}] \label{thm:courcelle} Let $\phi$ be a graph property that is expressible in MSO. Suppose $G$ is a graph on $n$ vertices with treewidth $tw$ equipped with the evaluation of all the free variables of $\phi$. Then, there is an algorithm that verifies whether $\phi$ is satisfied in $G$ in $f(||\phi||, tw) \cdot n$ time for some computable function $f$.
	\end{theorem}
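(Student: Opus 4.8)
Since the statement is classical, I only outline the standard argument. The plan is to reduce the question ``does $G \models \phi$ hold?'' to a membership test for a fixed finite tree automaton run on a suitably labelled version of a tree decomposition of~$G$.

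First I would compute, in time $g(tw)\cdot n$, a tree decomposition $(\mathbb{T},\mathcal{X})$ of~$G$ of width $\Oh(tw)$ and reshape it into a \emph{nice} tree decomposition with $\Oh(tw\cdot n)$ nodes, each of type \textsf{leaf}, \textsf{introduce vertex}, \textsf{introduce edge}, \textsf{forget}, or \textsf{join}. Read bottom-up, such a decomposition exhibits~$G$ as built, via a finite repertoire of operations on $(\leq tw+1)$-boundaried graphs, from single labelled vertices; the operations are: parallel-compose two boundaried graphs (identifying equally-labelled boundary vertices), add an edge between two boundary vertices, and forget a boundary label. The evaluation of the free element- and set-variables of~$\phi$ is folded into this picture by enlarging the vocabulary with the corresponding unary and binary predicates, so that henceforth we work with one fixed labelled structure and a sentence.

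The heart of the matter is a Feferman--Vaught-style composition lemma. Fix the quantifier rank~$r$ of~$\phi$. The two facts to establish are: (i) over a vocabulary with a bounded number of boundary labels there are only finitely many MSO types of quantifier rank~$r$, and their number is bounded by a computable function of $||\phi||$ and~$tw$; and (ii) these types compose, i.e.\ the type of a boundaried graph obtained by any one of the operations above is determined by the types of the operands and by which operation is applied. Granting (i) and (ii), I would precompute --- in time depending only on~$\phi$ and~$tw$ --- the finite transition tables for the node types; evaluating them bottom-up over the nice tree decomposition is then exactly the run of a deterministic tree automaton: at each node we store the type of the boundaried graph assembled so far, update it in $\Oh(1)$ time via the table, and at the root accept iff the stored type ``contains''~$\phi$. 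With $\Oh(tw\cdot n)$ nodes and $\Oh(1)$ work per node after preprocessing, the total running time is $f(||\phi||,tw)\cdot n$ for a suitable computable~$f$.

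The hard part is the composition lemma, and within it the treatment of \emph{set} quantifiers: unlike the first-order case, where Ehrenfeucht--Fra\"iss\'e games localize cleanly, at a \textsf{join} node one must track how every quantified vertex set or edge set splits between the two parts and verify that this bookkeeping never leaves the finite (albeit tower-of-exponential in~$||\phi||$) set of types. The inductive step for the $\exists X$ case then reads: a set of the prescribed type exists in the composite structure iff there is a compatible pair of restricted types in the two parts whose union realizes it --- this is the delicate point. An alternative route that hides an explicit composition lemma is to \emph{interpret} $\phi$ on~$G$ as a derived MSO sentence~$\phi'$ on the tree underlying the decomposition and then invoke the Thatcher--Wright/Doner equivalence of MSO over trees with finite (bottom-up) tree automata; the combinatorial content is, however, the same.
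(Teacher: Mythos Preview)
The paper does not prove this theorem at all: it is stated with citations to Courcelle's original work and used as a black box, so there is no ``paper's own proof'' to compare against. Your sketch is the standard argument (nice tree decomposition, Feferman--Vaught composition of bounded-rank MSO types, bottom-up automaton run) and is correct in outline; it goes well beyond what the paper itself supplies.
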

	
	\noindent We end the preliminaries section with following simple observations.
	
	\begin{observation} \label{obs:graph-connected} If $G_1,\ldots,G_l$ are the connected components of $G$, then $\chi_{cd}(G) = \sum_{i=1}^{l} \chi_{cd}(G_i)$.
	\end{observation}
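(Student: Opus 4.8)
The plan is to establish the two inequalities separately. For the direction $\chi_{cd}(G) \le \sum_{i=1}^{l} \chi_{cd}(G_i)$, I would take an optimal cd-coloring $f_i$ of each component $G_i$, relabel colours so that the palettes used by $f_1, \dots, f_l$ are pairwise disjoint, and let $f$ be their common extension to $V(G)$. Properness of $f$ is immediate since $G$ has no edges joining distinct components. Moreover, every colour class of $f$ coincides with a colour class of exactly one $f_i$, and the vertex $v$ witnessing its domination inside $G_i$ satisfies $N_{G_i}(v) = N_G(v)$, so the same class is dominated in $G$. Hence $f$ is a cd-coloring of $G$ with $\sum_{i=1}^{l} \chi_{cd}(G_i)$ colours.

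For the reverse inequality $\chi_{cd}(G) \ge \sum_{i=1}^{l} \chi_{cd}(G_i)$, the key observation is that in any cd-coloring $f$ of $G$ each (nonempty) colour class $C$ is confined to a single connected component: by definition $C \subseteq N_G(v)$ for some $v$, and $N_G(v)$ lies entirely inside the component containing $v$. Consequently, for each $i$ the restriction $f|_{V(G_i)}$ is a proper coloring of $G_i$ in which every colour class remains dominated by the same witness vertex (which also belongs to $G_i$), i.e. a cd-coloring of $G_i$; and since no colour class meets two components, the colour sets appearing on distinct components are disjoint. Therefore $f$ uses at least $\sum_{i=1}^{l} \chi_{cd}(G_i)$ colours, and combining with the first part gives equality.

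There is no genuinely hard step here; the argument is just bookkeeping about disjoint colour palettes once one notices that colour classes cannot straddle two components. The only point to handle with mild care is the degenerate case of an isolated vertex $v$, where $N_G(v) = \emptyset$ and no cd-coloring exists; in that case the identity is read with both sides equal to $+\infty$, or equivalently one restricts attention to graphs without isolated vertices, as is standard for dominated colorings.
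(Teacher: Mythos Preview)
The paper does not supply a proof for this observation; it is simply asserted as a preliminary fact. Your two-inequality argument is correct and is precisely the standard justification: the key point, that every nonempty colour class lies inside a single component because it is contained in some $N_G(v)$, forces the colour palettes of distinct components to be disjoint, and the rest is bookkeeping. Regarding your caveat on isolated vertices, note that the paper later declares $\chi_{cd}(G)=1$ when $G$ is a single vertex, so its implicit convention avoids the $+\infty$ reading; either way this does not affect the substance of the argument.
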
  
	
	
	\begin{observation} \label{obs:color-dom-set} If $G$ is $q$-cd-colorable, then $G$ has a dominating set of size at most $q$.
	\end{observation}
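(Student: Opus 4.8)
The plan is to exhibit an explicit dominating set built directly from the vertices that witness the class-domination property. Fix a cd-coloring $f \colon V(G) \to [q]$ of $G$ and let $C_1, \dots, C_q$ be its color classes, where $C_i = f^{-1}(i)$; note some of these may be empty if $\chi_{cd}(G) < q$. By the definition of a cd-coloring, for every $i \in [q]$ with $C_i \neq \emptyset$ there is a vertex $v_i \in V(G)$ with $C_i \subseteq N(v_i)$; for each of the at most $q$ nonempty classes I would pick one such witness $v_i$ arbitrarily and let $D$ be the set of all chosen vertices, so that $|D| \leq q$.

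It then remains to verify that $D$ dominates $G$. Take an arbitrary $u \in V(G)$. Then $u \in C_i$ for some $i \in [q]$, this class is nonempty, so $v_i \in D$ is defined and $u \in C_i \subseteq N(v_i)$; hence $u$ is adjacent to $v_i \in D$. Thus every vertex of $G$ has a neighbour in $D$ (in fact $D$ is a total dominating set), so in particular every vertex outside $D$ has a neighbour in $D$, and $D$ is a dominating set of size at most $q$. I expect no genuine obstacle here, since the statement is essentially a reformulation of the defining condition of a cd-coloring; the only minor points are to handle possibly empty color classes and the fact that the witnesses $v_i$ need not be distinct, both of which only help, yielding $|D| \leq q$ rather than exactly $q$.
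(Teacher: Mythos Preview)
Your proof is correct and is exactly the natural argument; the paper itself states this observation without proof, so there is nothing to compare against beyond noting that your construction (take one witness vertex per nonempty color class) is precisely the intended one-line justification. Your additional remarks about empty classes, possible coincidence of witnesses, and the fact that $D$ is even a total dominating set are all accurate and only strengthen the conclusion.
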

	

	\section{Exact Algorithm for cd-Chromatic Number}
	Let $G$ denote the input graph on $n$ vertices. Given a coloring of $V(G)$, we can check in polynomial time whether it is a cd-coloring or not. Therefore, to compute $\chi_{cd}(G)$, we can iterate over all possible colorings of $V(G)$ with at most $n$ colors and return a valid cd-coloring that uses the minimum number of colors. This brute force algorithm runs in $2^{\mathcal{O}(n \log n)}$ time. In this section we present an algorithm which runs in $\mathcal{O}(2^{n}n^4 \log n)$ time. The idea for this algorithm is inspired by an exact algorithm for \textsc{$b$-Chromatic Number} presented in \cite{panolan2015b}. We first list some preliminaries on polynomials and Fast Fourier Transform following the framework of \cite{panolan2015b}.
	
	A binary vector $\phi$ is a finite sequence of bits and $val(\phi)$ denotes the integer $d$ of which $\phi$ is the binary representation. All vectors considered here are binary vectors and are synonymous to binary numbers. Further, they are the binary representations of integers less than $2^n$ and are assumed to consist of $n$ bits. $\phi_1 + \phi_2$ denotes the vector obtained by the bitwise addition of the binary numbers (vectors) $\phi_1$ and $\phi_2$. Let $U= \{ u_1 , u_2,\dots , u_n \}$ denote a universe with a fixed ordering on its elements. The {\em characteristic vector} of a set $S \subseteq U$, denoted by $\psi(S)$, is the vector of length $|U|$ whose $j^{\text{th}}$ bit is $1$ if $u_j \in S$ and $0$ otherwise. The {\em Hamming weight} of a vector $\phi$ is the number of $1$s in $\phi$ and it is denoted by $\mathcal{H}(\phi)$. Observe that $\mathcal{H}(\psi(S)) = |S|$. The Hamming weight of an integer is define as hamming weight of its binary representation. To obtain the claimed running time bound for our exponential-time algorithm, we make use of the algorithm for multiplying polynomials based on the Fast Fourier Transform.
	
	\begin{lemma}[\cite{schonhage1971schnelle}] \label{lemma:fft} Two polynomials of degree at most $d$ over any commutative ring $\mathcal{R}$ can be multiplied using $\mathcal{O}(d \cdot \log d \cdot \log \log d)$ additions and multiplications in $\mathcal{R}$.
	\end{lemma}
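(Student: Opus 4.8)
The plan is to prove this by the recursive fast-Fourier-transform construction of Schönhage and Strassen, which supplies ``virtual'' roots of unity inside an auxiliary ring. First I would record the easy special case: if $\mathcal R$ contains a principal $N$-th root of unity $\omega$ with $N=2^k>2d$ and if $2$ is a unit in $\mathcal R$, then the length-$N$ discrete Fourier transform and its inverse are computed by the usual butterfly network in $\Oh(N\log N)=\Oh(d\log d)$ ring operations, and by the convolution theorem the product of two polynomials of degree $<d$ is obtained by one forward transform of each, $N$ pointwise multiplications, and one inverse transform, in $\Oh(d\log d)$ operations total. The whole difficulty is to remove the hypotheses on $\mathcal R$.

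Next I would introduce roots of unity artificially. The key point is that in the ring $\mathcal R_m:=\mathcal R[x]/(x^{2m}+1)$ the indeterminate $x$ behaves as a principal $4m$-th root of unity, and multiplication by any power of $x$ is a cyclic shift of the coefficient vector together with sign changes, costing only $\Oh(m)$ additions and \emph{no} multiplications. To multiply $f,g\in\mathcal R[y]$ of degree $<d$, one chooses a parameter $m$ (a power of two, of order $\sqrt d$), cuts the coefficient sequences of $f$ and $g$ into $\Theta(d/m)$ consecutive blocks of length $m$, and reinterprets each block as an element of $\mathcal R_m$ after enough zero-padding that every block-level convolution---which has degree $<2m$---survives the reduction modulo $x^{2m}+1$ without truncation. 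This turns $f$ and $g$ into polynomials of degree $\Theta(d/m)$ over $\mathcal R_m$, and since $\mathcal R_m$ carries a principal root of unity of order $4m=\Theta(\sqrt d)$, these two polynomials can be multiplied by the transform method of the previous paragraph: the $\Theta(\sqrt d\log\sqrt d)$ butterflies each cost $\Oh(m)=\Oh(\sqrt d)$ ring additions (the twiddle-factor multiplications being free), for $\Oh(d\log d)$ ring operations overall, while the $\Theta(\sqrt d)$ pointwise products are each a multiplication of two elements of $\mathcal R_m$, i.e.\ of two polynomials over $\mathcal R$ of degree $<2m=\Theta(\sqrt d)$---an instance of the original problem of size $\Theta(\sqrt d)$. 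One recurses on these.

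This produces a recurrence of the shape $M(d)\le c_1\sqrt d\cdot M(c_2\sqrt d)+\Oh(d\log d)$ for the number of ring operations. The recursion has only $\Theta(\log\log d)$ levels, since the subproblem size is repeatedly replaced by a constant times its square root; solving the recurrence, with the non-recursive work accounted level by level, gives $M(d)=\Oh(d\log d\log\log d)$. One also reduces at the outset to the case that $d$ is a power of two by zero-padding, at most doubling the size, and terminates the recursion at $\Oh(1)$-size instances handled by schoolbook multiplication.

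I expect the genuine obstacles to be two matters of bookkeeping rather than the recursion itself. First, the block length $m$ and the choice of the negacyclic modulus $x^{2m}+1$ (as opposed to the cyclic $x^m-1$) must be tuned so precisely that (a) no true coefficient of the product is ever lost to wrap-around, and (b) the data-expansion factor incurred per recursion level is driven down to the boundary value that separates an $\Oh(d\log d)$ running time from a strictly larger one---it is exactly this knife-edge that is responsible for the extra $\log\log d$ factor, and getting the constants to fall on the right side of it is the delicate part of the argument. Second---and this is where ``any commutative ring'' really bites---the inverse transform divides by the transform length, a power of two, so the argument above needs $2$ to be invertible. To cover rings in which $2$ is a zero divisor (such as fields of characteristic $2$) one runs, in parallel, an analogous construction with transforms whose length is a power of $3$, using that $\mathcal R[x]/(x^{2\cdot 3^{t}}-x^{3^{t}}+1)$ supplies a principal root of unity of order $6\cdot 3^{t}$, and recombines the two computations by the Chinese Remainder Theorem, the only denominators ever introduced being powers of $2$ and of $3$, which are coprime. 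Verifying that this dual-radix variant still obeys the same operation count completes the proof.
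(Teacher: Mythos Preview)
The paper does not prove this lemma at all: it is stated as a citation to Sch\"onhage--Strassen and used as a black box. Your proposal, by contrast, is a genuine sketch of the Sch\"onhage--Strassen construction itself---recursive FFT over $\mathcal{R}[x]/(x^{2m}+1)$ with synthetic roots of unity, the $M(d)\le c_1\sqrt{d}\,M(c_2\sqrt{d})+\Oh(d\log d)$ recurrence, and the characteristic-$2$ workaround via a radix-$3$ companion transform. That outline is essentially correct and faithful to the original source, so you have supplied far more than the paper does; for the purposes of this paper, simply citing the result would suffice.
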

	
	\noindent Let $z$ denote an indeterminate variable. We use the monomial $z^{val(\psi(S))}$ to represent the set $S \subseteq U$ and as a natural extension, we use univariate polynomials to represent a family of sets. 
	
	\begin{definition}[Characteristic Polynomial of a Family of Sets] For a family $\mathcal{F} = \{S_1, S_2, \dots , S_q\}$ of subsets of $U$, the characteristic polynomial of $\mathcal{F}$ is defined as $p_\psi(\mathcal{F}) = \sum_{i = 1}^{q} z ^{val(\psi (S_i))}$.
	\end{definition}
	
	\begin{definition}[Representative Polynomial] For a polynomial $p(z) = \sum_{i = 1}^{q} a_i \cdot z ^{i} $, we define its representative polynomial as $ \sum_{i = 1}^{q} b_i \cdot z ^{i} $ where $b_i = 1$ if $a_i \neq 0$ and $b_i = 0$ if $a_i = 0$. 
	\end{definition}
	
	\begin{definition} [Hamming Projection]The Hamming projection of the polynomial $p(z)= \sum_{i = 1}^{q} a_i \cdot z ^{i}$ to the integer $h$ is defined as $\mathcal{H}_{h}(p(z)) := \sum_{i = 1}^{q} b_i \cdot z ^{i} $ where $b_i = a_i$ if $\mathcal{H}(i) = h$ and $b_i = 0$ otherwise. 
	\end{definition}
	
	\noindent Next, for two sets $S_1,S_2 \subseteq U$, we define a modified multiplication operation $(\star)$ of the monomials $z^{\psi(S_1)}$ and $z^{\psi(S_2)}$ in the following way. 
	
	\[
	z^{val(\psi(S_1))} \star z^{val(\psi(S_2))} =
	\begin{cases}
	z^{val(\psi(S_1)) + val(\psi(S_2))} & \text{if } S_1 \cap S_2 = \emptyset \\
	0 & \text{otherwise}
	\end{cases}
	\]

	\noindent For a polynomial function $p(z)$ of $z$ and a positive integer $\ell \ge 2$, we inductively define the polynomial $p(z)^{\ell}$ as $p(z)^{\ell} := p(z)^{\ell - 1} \star p(z)$. Here, coefficients of monomials follow addition and multiplications defined over underlying field. We now describe an algorithm for implementing the $\star$ operation using the standard multiplication operation and the notion of Hamming weights of bit strings associated with exponents. 
	
	\begin{algorithm}[H]
		\KwIn{Two polynomials $q(z), r(z)$ of degree at most $2^n$}
		\KwOut{$q(z) \star r(z)$ }
		Initialize polynomials $t(z)$ and $t'(z)$ to 0\\
		
		\For{ each ordered pair $(i, j) \text{ such that } i + j \le n$}{
			Compute $s_i(z) = \mathcal{H}_i(q(z))$ and $s_j(z) = \mathcal{H}_j(r(z))$\\
			Compute $s_{ij}(z) = s_i(z) * s_j(z)$ using Lemma~\ref{lemma:fft} \label{step:multiply}\\ 
			$t'(z) = t(z) + \mathcal{H}_{i + j}(s_{ij}(z))$\\
			Set $t(z)$ as the representative polynomial of $t'(z)$ \label{step:rep-poly}
		}
		\Return $t(z)$
		\caption{Compute ($\star$) product of two polynomials}
		\label{alg:compute-star}
	\end{algorithm}
	
	\begin{lemma} 
		\label{lemma:star}
		Let $\mathcal{F}_1$ and $\mathcal{F}_2$ be two families of subsets of $U$. Let $\mathcal{F}$ denote the collection $\{S_1 \cup S_2 |\ S_1 \in \mathcal{F}_1, S_2 \in \mathcal{F}_2 \text{ and } S_1 \cap S_2 = \emptyset\}$. Then, $p_\psi(\mathcal{F}_1) \star p_\psi(\mathcal{F}_2)$ computed by Algorithm~\ref{alg:compute-star} is $p_\psi(\mathcal{F})$.
	\end{lemma}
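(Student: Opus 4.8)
The plan is to argue in two directions. First I would establish that the polynomial $t(z)$ produced by Algorithm~\ref{alg:compute-star} has a nonzero coefficient on $z^{m}$ precisely when $m = val(\psi(S))$ for some $S \in \mathcal{F}$; then, since $p_\psi(\mathcal{F})$ is already a $0/1$ polynomial (each $S$ appears once as a monomial in the characteristic polynomial, and taking the representative polynomial in Step~\ref{step:rep-poly} collapses multiplicities), matching supports suffices to conclude equality. The heart of the matter is the well-known bijection-with-a-catch: for two sets $S_1, S_2 \subseteq U$ with characteristic vectors $\psi(S_1), \psi(S_2)$, the integer addition $val(\psi(S_1)) + val(\psi(S_2))$ equals $val(\psi(S_1 \cup S_2))$ \emph{if and only if} $S_1 \cap S_2 = \emptyset$ (no carries occur), in which case moreover $\mathcal{H}(\psi(S_1)) + \mathcal{H}(\psi(S_2)) = \mathcal{H}(\psi(S_1 \cup S_2)) = |S_1 \cup S_2|$. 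When $S_1 \cap S_2 \neq \emptyset$, a carry propagates and the Hamming weight of the resulting integer is strictly less than $|S_1| + |S_2|$. This is the key combinatorial fact that makes the Hamming-projection filtering correct, and verifying it carefully (carry analysis on binary addition) is the main obstacle.

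Granting that fact, I would proceed as follows. Write $p_\psi(\mathcal{F}_1) = \sum_{S_1 \in \mathcal{F}_1} z^{val(\psi(S_1))}$ and likewise for $\mathcal{F}_2$. In the ordinary product $q(z) * r(z)$ computed in Step~\ref{step:multiply}, grouping the factors by Hamming weight $i$ of exponents from $q$ and $j$ from $r$, the term $s_i(z) * s_j(z)$ collects all pairs $(S_1, S_2)$ with $|S_1| = i$, $|S_2| = j$, contributing a monomial $z^{val(\psi(S_1)) + val(\psi(S_2))}$ for each such pair. Applying $\mathcal{H}_{i+j}$ retains exactly those pairs for which the exponent $val(\psi(S_1)) + val(\psi(S_2))$ has Hamming weight $i + j = |S_1| + |S_2|$; by the key fact, these are exactly the disjoint pairs, and for them the exponent equals $val(\psi(S_1 \cup S_2))$. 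Summing over all ordered pairs $(i,j)$ with $i + j \le n$ (the constraint is harmless since $|S_1 \cup S_2| \le n$ always, and any pair with $i + j > n$ cannot be disjoint) therefore accumulates in $t'(z)$ a monomial $z^{val(\psi(S_1 \cup S_2))}$ for every disjoint pair $(S_1, S_2) \in \mathcal{F}_1 \times \mathcal{F}_2$, i.e. for every element of $\mathcal{F}$, possibly with multiplicity (the same union may arise from several disjoint pairs, or not at all).

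Finally I would handle the bookkeeping. Taking the representative polynomial in Step~\ref{step:rep-poly} replaces every nonzero coefficient by $1$, so after the loop the support of $t(z)$ is exactly $\{\, val(\psi(S)) : S \in \mathcal{F} \,\}$ and all coefficients are $1$. That is precisely $p_\psi(\mathcal{F})$ by definition of the characteristic polynomial of a family (which is likewise a sum of distinct monomials, one per set). A small point to address: the definition of $p_\psi$ writes $\sum_i z^{val(\psi(S_i))}$ over an indexed family, but since $\mathcal{F}$ is a set of sets, distinct $S_i$ give distinct exponents and no cancellation occurs, so $p_\psi(\mathcal{F})$ is genuinely the $0/1$ indicator polynomial of the support — consistent with what the algorithm returns. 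This completes the argument; the only real work is the carry-free-addition lemma, everything else is unwinding definitions.
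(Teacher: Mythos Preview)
Your proposal is correct and takes essentially the same approach as the paper: both argue the two inclusions on supports, relying on the same key fact that the binary addition $val(\psi(S_1)) + val(\psi(S_2))$ is carry-free (and hence Hamming-weight additive, with sum $val(\psi(S_1\cup S_2))$) precisely when $S_1\cap S_2=\emptyset$, so that the Hamming projection $\mathcal{H}_{i+j}$ retains exactly the contributions from disjoint pairs. If anything, you are more explicit than the paper about why the representative-polynomial step is needed to collapse multiplicities and why the constraint $i+j\le n$ loses nothing.
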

	\begin{proof} 
		Define $q(z)=p_\psi(\mathcal{F}_1)$, $r(z)= p_\psi(\mathcal{F}_2)$ and $t(z)=q(z) \star r(z)$. Let $S_1 \in \mathcal{F}_1$ and $S_2 \in \mathcal{F}_2$ be sets such that $S_1 \cap S_2 = \emptyset$. Define $S=S_1 \cup S_2$ and let $\phi_1, \phi_2$ and $\phi$ be the characteristic vectors of $S_1, S_2$, and $S$ respectively. We claim that the term $z^{val(\phi)}$ is present in $t(z)$. For a vector $\phi$ and an integer $i \in [n]$, let $\phi[i]$ denote the $i^{th}$ bit in $\phi$. As $\phi[i]$ is 1 if and only if exactly one of the two bits $\phi_1[i]$, $\phi_2[i]$ is 1, it follows that there is no carry at any position (and hence no overflow) while adding $\phi_1$ and $\phi_2$. Therefore, $\phi=\phi_1 + \phi_2$ is a binary string of $n$ bits and $\mathcal{H}(\phi) = \mathcal{H}(\phi_1) + \mathcal{H}(\phi_2)$. Now, as $q(z)$ contains $z^{val(\phi_1)}$ and $r(z)$ contains $z^{val(\phi_2)}$, in the execution of Algorithm \ref{alg:compute-star}, for $i = |S_1|$ and $j = |S_2|$, polynomials $s_i(z)$ and $s_j(z)$ contain $z^{val(\phi_1)}$ and $z^{val(\phi_2)}$ respectively. Step~\ref{step:multiply} multiplies $s_i(z)$ and $s_j(z)$ using Fast Fourier Transformation to obtain $s_{ij}(z)$. As $\mathcal{H}(\phi_1)=i$, $\mathcal{H}(\phi_2)=j$ and $\mathcal{H}(\phi_1) + \mathcal{H}(\phi_2) = i + j$, $s_{ij}(z)$ contains the term $z^{val(\phi)}=z^{val(\phi_1) + val(\phi_2)}$. Moreover, $z^{val(\phi)}$ is present in $\mathcal{H}_{i + j}(s_{ij}(z))$ and hence it is a monomial in $t(z)$ as Step~\ref{step:rep-poly} ensures that every monomial in $t(z)$ is of the form $z^d$ for some integer $d$. 
		
		Next, we show that for every monomial $z^d$ in $t(z)$, there is a set $S \in \mathcal{F}$ such that $d=val(\psi(S))$. Let $i$ and $j$ be integers such that $\mathcal{H}_{i + j}(s_{ij}(z))$ contains the term $z^d$. As $t(z)$ was initialized to $0$, $z^d$ was obtained as the product of two terms $z^{d_1}, z^{d_2}$ in $s_i(z)$ and $s_j(z)$ respectively such that $d_1 + d_2 = d$. Let $S_1 \in \mathcal{F}_1$ be the set such that $\psi(S_1)$ is the binary representation of $d_1$. Similarly, let $S_2 \in \mathcal{F}_2$ be the set such that $\psi(S_2)$ is the binary representation of $d_2$. Let $\phi_1$ and $\phi_2$ be the characteristic vectors of $S_1$ and $S_2$ respectively. Then, $|S_1|=i$, $|S_2|=j$ and there is no integer $k$ between $1$ and $n$ such that $\phi_1[k]=\phi_2[k]=1$. Therefore, $S_1 \cap S_2 =\emptyset$ and $z^d=z^{val(S_1 \cup S_2)}$. Hence, the claimed set $S$ is $S_1 \cup S_2$ which is in $\mathcal{F}$ as $S_1 \cap S_2 =\emptyset$. 
	\end{proof}
	
	\begin{corollary}
		\label{cor:running-time} Given a polynomial $p(z)$ of degree at most $2^n$, there is an algorithm that computes $p(z)^{\ell}$ in $\mathcal{O}(2^n n^3 \log n \cdot l)$ time.
	\end{corollary}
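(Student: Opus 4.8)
The plan is to establish the bound by analysing a single invocation of Algorithm~\ref{alg:compute-star} and then iterating. By the inductive definition $p(z)^{\ell} = p(z)^{\ell-1} \star p(z)$, computing $p(z)^{\ell}$ amounts to performing $\ell - 1$ $\star$-products, each between a polynomial of degree at most $2^n$ and $p(z)$ itself; moreover, since Algorithm~\ref{alg:compute-star} returns a representative polynomial, the operand produced at each step has $0/1$ coefficients (and in our intended use $p(z)$ is the characteristic polynomial of a family, so it starts with $0/1$ coefficients as well). Hence it suffices to show that one $\star$-product of such polynomials runs in $\mathcal{O}(2^n n^3 \log n)$ time.

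First I would precompute the Hamming weight of every integer up to $2^{n+1}$ using the recurrence $\mathcal{H}(x) = \mathcal{H}(\lfloor x/2 \rfloor) + (x \bmod 2)$; this costs $\mathcal{O}(2^n)$ time overall and makes each Hamming projection $\mathcal{H}_i(\cdot)$, each representative-polynomial extraction (Step~\ref{step:rep-poly}), and each polynomial addition in the loop a linear scan over $\mathcal{O}(2^n)$ coefficients, i.e.\ $\mathcal{O}(2^n)$ time. The loop itself ranges over the $\mathcal{O}(n^2)$ ordered pairs $(i,j)$ with $i + j \le n$. For a fixed pair the dominant cost is Step~\ref{step:multiply}: the polynomials $s_i(z)$ and $s_j(z)$ have degree at most $2^n$, so their product has degree $d \le 2^{n+1}$, and by Lemma~\ref{lemma:fft} it can be computed with $\mathcal{O}(d \log d \log\log d) = \mathcal{O}(2^n \cdot n \cdot \log n)$ ring operations, using $\log d = \mathcal{O}(n)$ and $\log\log d = \mathcal{O}(\log n)$. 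Summing over all pairs gives $\mathcal{O}(n^2) \cdot \mathcal{O}(2^n n \log n) = \mathcal{O}(2^n n^3 \log n)$ per $\star$-product, hence $\mathcal{O}(2^n n^3 \log n \cdot \ell)$ in total.

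The only subtlety to watch is the cost model for the ring arithmetic: the coefficients produced in Step~\ref{step:multiply} are non-negative integers of magnitude at most $2^n$, i.e.\ of bit-length $\mathcal{O}(n)$, so treating additions and multiplications of these numbers as unit-cost (or absorbing an extra polylogarithmic bit-complexity factor into the slack) is what lets Lemma~\ref{lemma:fft} deliver the stated running time. Correctness needs no separate argument: it is exactly Lemma~\ref{lemma:star} applied $\ell - 1$ times, so the corollary is a running-time accounting built on that lemma, and the remaining work is merely to check that the degrees and coefficients of all intermediate polynomials stay $\mathcal{O}(2^n)$ and polynomially bounded in $2^n$, respectively.
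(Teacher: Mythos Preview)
Your proposal is correct and follows the same approach as the paper: bound one $\star$-product by counting the $\mathcal{O}(n^2)$ iterations of the \textbf{for} loop in Algorithm~\ref{alg:compute-star}, each dominated by an FFT multiplication costing $\mathcal{O}(2^n n \log n)$ via Lemma~\ref{lemma:fft}, then multiply by $\ell$. You supply considerably more detail than the paper's two-line proof (the Hamming-weight precomputation, the bit-length of intermediate coefficients, the observation that representative polynomials keep operands $0/1$), but the underlying argument is identical.
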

	\begin{proof}
		By Lemma~\ref{lemma:fft}, an execution of the Fast Fourier multiplication algorithm takes $\mathcal{O}(2^n n \log n)$ time. As the \textbf{for} loop of Algorithm \ref{alg:compute-star} is executed $n^2$ times, the total time to compute $p(z)^{\ell}$ is $\mathcal{O}(2^n n^3 \log n)$.  
	\end{proof}
	
	\noindent We now prove a result which correlates the existence of a partition of a set with the presence of a monomial in a polynomial associated with it.
	
	\begin{lemma} \label{lemma:generalized-partition} Consider a universe $U$ and a family $\mathcal{F}$ of its subsets with characteristic polynomial $p(z)$. For any $W \subseteq U$, $W$ is the disjoint union of $\ell$ sets from $\mathcal{F}$ if and only if there exists a monomial $z^{val(\psi(W))}$ in $p(z)^{\ell}$.   
	\end{lemma}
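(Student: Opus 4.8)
The plan is to prove the equivalence by induction on $\ell$, using Lemma~\ref{lemma:star} as the engine that connects the $\star$-product of characteristic polynomials to disjoint unions of the corresponding set families. First I would set up the base case $\ell = 1$: here $p(z)^1 = p(z) = p_\psi(\mathcal{F})$, and $W$ is the disjoint union of one set from $\mathcal{F}$ precisely when $W \in \mathcal{F}$, which by the definition of the characteristic polynomial is exactly the condition that $z^{val(\psi(W))}$ appears in $p(z)$. (Strictly speaking $p(z)$ as defined may have repeated or cancelling terms, but since all coefficients in $p_\psi(\mathcal{F})$ are nonnegative integers, a monomial $z^{val(\psi(S))}$ is present iff $S \in \mathcal{F}$; I would note this to keep the argument clean.)

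For the inductive step, assume the statement holds for $\ell - 1$, and recall that by definition $p(z)^{\ell} = p(z)^{\ell-1} \star p(z)$. I would apply Lemma~\ref{lemma:star} with $\mathcal{F}_1$ being the family whose characteristic polynomial is $p(z)^{\ell-1}$ — by the induction hypothesis this is exactly the family $\mathcal{F}^{(\ell-1)}$ of all sets that are disjoint unions of $\ell-1$ members of $\mathcal{F}$ — and $\mathcal{F}_2 = \mathcal{F}$. Lemma~\ref{lemma:star} then tells us that $p(z)^{\ell}$ is the characteristic polynomial of $\{S_1 \cup S_2 \mid S_1 \in \mathcal{F}^{(\ell-1)},\, S_2 \in \mathcal{F},\, S_1 \cap S_2 = \emptyset\}$. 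A set $W$ lies in this collection iff $W = S_1 \cup S_2$ with $S_1$ a disjoint union of $\ell-1$ sets of $\mathcal{F}$, $S_2 \in \mathcal{F}$, and $S_1 \cap S_2 = \emptyset$ — which is precisely the statement that $W$ is a disjoint union of $\ell$ sets from $\mathcal{F}$. Unwinding the definition of characteristic polynomial once more, $W$ belongs to this collection iff $z^{val(\psi(W))}$ is a monomial of $p(z)^{\ell}$, completing the induction.

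One small technical point I would be careful about: Lemma~\ref{lemma:star} is phrased in terms of the polynomials $p_\psi(\mathcal{F}_1)$, $p_\psi(\mathcal{F}_2)$ being genuine characteristic polynomials of set families, so in the inductive step I must know that $p(z)^{\ell-1}$ really is $p_\psi(\mathcal{F}')$ for the appropriate family $\mathcal{F}'$ with all coefficients $0$ or $1$ — this is exactly what the induction hypothesis delivers (it identifies the support of $p(z)^{\ell-1}$ as $\{\psi(W) : W \text{ a disjoint union of } \ell-1 \text{ sets of } \mathcal{F}\}$), and the $\star$-operation combined with taking representative polynomials in Algorithm~\ref{alg:compute-star} keeps coefficients in $\{0,1\}$. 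I do not expect any serious obstacle here; the only thing requiring a moment's thought is bookkeeping the passage between "a monomial is present in the polynomial" and "the corresponding set is in the family," and making sure the disjointness condition in Lemma~\ref{lemma:star} lines up exactly with the requirement that an $\ell$-fold union be a \emph{disjoint} union, but this is immediate from how $\star$ was defined.
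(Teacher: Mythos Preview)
Your proposal is correct and follows essentially the same approach as the paper: both arguments hinge on the recursion $p(z)^{\ell} = p(z)^{\ell-1} \star p(z)$ together with Lemma~\ref{lemma:star}, and both proceed by induction on $\ell$. The only cosmetic difference is that the paper handles the forward direction by a direct unfolding of the $\star$-product (no induction) and reserves the induction for the converse, whereas you package both directions into a single induction via the strengthened hypothesis that $p(z)^{\ell-1}$ is literally $p_\psi(\mathcal{F}^{(\ell-1)})$; your observation that the representative-polynomial step in Algorithm~\ref{alg:compute-star} keeps all coefficients in $\{0,1\}$ is exactly what justifies this strengthening.
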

	
	\begin{proof} 
		Let $W$ be the disjoint union of $S_1, S_2, \dots, S_{\ell}$ such that $S_i \in \mathcal{F}$ for all $i \in [\ell]$. For any $j \in [n]$, the $j^{\text{th}}$ bit of $\psi(W)$ is 1 if and only if there is exactly one $S_i$ such that $j^{th}$ bit of $\psi(S_i)$ is 1. Thus, $val(\psi(W)) = val(\psi(S_1)) + val(\psi(S_2)) + \dots + val(\psi(S_{\ell}))$. Now, for every $S_i$ there is a term $z^{val(\psi(S_i))}$ in $p(z)$. Further, as the $S_i$'s are pairwise disjoint, the monomial $z^{val(\psi(S_1))} \star z^{val(\psi(S_2))} \star \cdots \star z^{val(\psi(S_{\ell}))}$ which is equal to $z^{val(\psi(W))}$ is present in $p(z)^{\ell}$. 
		
		We prove the converse by induction on $\ell$. For $\ell=1$, the statement is vacuously true and for $\ell=2$, the claim holds from the proof of Lemma \ref{lemma:star}. Assume that the claim holds for all the integers which are smaller than $\ell$, that is, if there exists a monomial $z^{val(\psi(W))}$ in $p(z)^{\ell - 1}$ then $W$ can be partitioned into $\ell - 1$ disjoint sets from $\mathcal{F}$. 
		If there exists a monomial $z^{val(\psi(W))} $ in $p(z)^{\ell} = p(z)^{\ell - 1} \star p(z)$ then it is the product of two monomials, say $z^{val(\psi(W_1))}$ in $p(z)^{\ell - 1}$ and $z^{val(\psi(W_2))}$ in $p(z)$ respectively with $W_1 \cap W_2 = \emptyset$. By induction hypothesis, $W_1$ is the disjoint union of $S_1, S_2, \dots, S_{\ell - 1}$ such that $S_i \in \mathcal{F}$ for all $i \in [\ell - 1]$. Also, $W_2$ is in $\mathcal{F}$ and since $W_1 \cap W_2 = \emptyset$, $S_i \cap W_2 = \emptyset$ for each $i$. Therefore, $W$ can be partitioned into sets $S_1, S_2, \dots, S_{\ell - 1}, W_2$ each of which belong to $\mathcal{F}$.  
	\end{proof}

	\noindent Now we are in a position to prove the main theorem of this section. 
	
	\begin{theorem} Given a graph $G$ on $n$ vertices, there is an algorithm which finds its cd-chromatic number in $\mathcal{O}(2^n n^4 \log n)$ time.
	\end{theorem}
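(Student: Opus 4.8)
The plan is to reduce the computation of $\chi_{cd}(G)$ to a sequence of polynomial-power computations of the kind analysed in Lemma~\ref{lemma:generalized-partition}, applied to a carefully chosen family of sets. First I would set $U = V(G)$ with a fixed ordering of the $n$ vertices. The key observation is that a proper coloring of $G$ in which every color class is dominated corresponds exactly to a partition of $V(G)$ into independent sets, each of which is contained in $N_G(v)$ for some vertex $v$. So I would define
\[
\mathcal{F} = \{\, S \subseteq V(G) : S \text{ is an independent set in } G \text{ and } S \subseteq N_G(v) \text{ for some } v \in V(G)\,\}.
\]
This family has at most $2^n$ members, and membership of a given set in $\mathcal{F}$ is checkable in polynomial time, so its characteristic polynomial $p(z) = p_\psi(\mathcal{F})$ has degree at most $2^n$ and can be written down in $\mathcal{O}(2^n \cdot n^{\mathcal{O}(1)})$ time (by enumerating all subsets, or more cleverly by enumerating, for each vertex $v$, all independent subsets of $N_G(v)$ and taking the representative polynomial of the union).

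Next I would invoke Lemma~\ref{lemma:generalized-partition} with this $U$, this $\mathcal{F}$, and $W = V(G)$: for each $\ell$ from $1$ up to $n$, the set $V(G)$ is the disjoint union of $\ell$ members of $\mathcal{F}$ if and only if the monomial $z^{val(\psi(V(G)))}$ appears in $p(z)^{\ell}$. By the correspondence above, $V(G)$ being a disjoint union of $\ell$ sets from $\mathcal{F}$ is precisely the statement that $G$ admits a cd-coloring with $\ell$ color classes (some possibly empty — but an empty class can always be merged away, or we simply note that $\chi_{cd}(G)$ is the least such $\ell$). Hence $\chi_{cd}(G)$ is the smallest $\ell \in [n]$ for which $z^{val(\psi(V(G)))}$ is a monomial of $p(z)^{\ell}$. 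The algorithm therefore computes $p(z)^{1}, p(z)^{2}, \dots$ incrementally, each step applying Algorithm~\ref{alg:compute-star} once (i.e.\ multiplying the current power by $p(z)$ under the $\star$ operation), and stops at the first $\ell$ where the target monomial is present.

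For the running time: constructing $\mathcal{F}$ and $p(z)$ costs $\mathcal{O}(2^n n^{\mathcal{O}(1)})$. By Corollary~\ref{cor:running-time}, each successive $\star$-multiplication (one more power) costs $\mathcal{O}(2^n n^3 \log n)$, and we perform at most $n$ of them, for a total of $\mathcal{O}(2^n n^4 \log n)$; checking whether the fixed target monomial occurs in the representative polynomial is an $\mathcal{O}(2^n)$ scan and is absorbed. This matches the claimed bound. The main obstacle I anticipate is the bookkeeping around empty color classes and making sure the family $\mathcal{F}$ exactly captures "dominated color class": one must argue that restricting to nonempty independent sets each lying inside some closed-out neighbourhood loses nothing (a cd-coloring with $q$ colors where a vertex $v$ dominates class $C_i$ forces $C_i \subseteq N_G(v)$, and $C_i$ is independent since the coloring is proper), and conversely that any disjoint cover of $V(G)$ by such sets yields a genuine proper cd-coloring (disjointness gives a well-defined coloring, each block is independent so it is proper, and each block is dominated by construction). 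The remaining care is purely in invoking Lemma~\ref{lemma:generalized-partition} and Corollary~\ref{cor:running-time} with the correct parameters.
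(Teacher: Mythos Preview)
Your proposal is correct and follows essentially the same approach as the paper: you define the same family $\mathcal{F}$ of dominated independent sets, invoke Lemma~\ref{lemma:generalized-partition} with $W=V(G)$ to equate $\chi_{cd}(G)$ with the least $\ell$ for which $z^{val(\psi(V(G)))}$ appears in $p(z)^{\ell}$, and use Corollary~\ref{cor:running-time} for the time bound. Your extra remarks on constructing $p(z)$ and on empty color classes are sound and slightly more careful than the paper's own write-up, but the argument is the same.
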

	
	\begin{proof} Fix an arbitrary ordering on $V(G)$. With $V(G)$ as the universe, we define the family $\mathcal{F}$ of its subsets as follows. 
		$$\mathcal{F} := \{X \subseteq V(G) |\ X \text{ is an independent set and } \exists \ y \in V(G) \text{ s.t. } X \subseteq N(y)\}$$
		Note that every set in $\mathcal{F}$ is an independent set and there exists a vertex which dominates it. That is, $\mathcal{F}$ is the collection of the possible color classes in any cd-coloring of $G$. Let $p(z)$ be the characteristic polynomial of $\mathcal{F}$. By Lemma~\ref{lemma:generalized-partition}, if there exists a monomial $z^{val(\psi(V(G)))}$ in $p(z)^{\ell}$ then $V(G)$ can be partitioned into $\ell$ sets each belonging to $\mathcal{F}$. Hence the smallest integer $\ell$ for which there exists a monomial $z^{val(\psi(V(G)))}$ in $p(z)^{\ell}$ is $\chi_{cd}(G)$. By Corollary~\ref{cor:running-time}, $p(z)^{\ell}$ can be computed in $\mathcal{O}(2^n n^3 \log n \cdot l)$ time. 
As the cd-chromatic number of a graph is upper bounded by $n$, the claimed running time bound for the algorithm follows.
\end{proof}	
\section{FPT Algorithms for cd-Chromatic Number}
	Determining whether a graph $G$ has cd-chromatic number at most $q$ is \NP-hard on general graphs for $q \ge 4$. This implies that the 
	\textsc{cd-Coloring} problem parameterized by the number of colors is para-\NP-hard on general graphs.  Thus this necessitates the search for special classes of graphs where \textsc{cd-Coloring} 
	is \FPT. In this section we give \FPT\ algorithms for \textsc{cd-Coloring} on chordal graphs and graphs of girth at least $5$. 
	
	We start by proving that \textsc{cd-Coloring} parameterized by the number of colors and treewidth of the graph is \FPT.\ Towards this, we will use Courcelle's powerful theorem which interlinks the fixed parameter tractability of a certain graph property with its expressibility as an MSO formula. We can write many graph theoretical properties as an MSO formula. Following are three examples which we will use in writing an MSO formula to check whether a graph has cd-chromatic number at most $q$. 
	
	\begin{itemize}
		\item  To check whether $V_1, V_2, \dots ,V_q$ is a partition of $V(G)$.
		$${\sf Part}(V_1, V_2, \dots ,V_q) \equiv \forall u \in V(G)[\exists i \in [q] [(u \in V_i) \land (\forall j \in [q][i \neq j \Rightarrow u \not\in V_j)]]]$$
		\item To check whether a given vertex set $V_i$ is an independent  set or not.
		$${\sf IndSet}(V_i) \equiv \forall u \in V_i [\forall v \in V_i [\lnot adj(u, v)]]$$
		\item To check whether given vertex set $V_i$ is dominated by some vertex or not.
		$${\sf Dom}(V_i) \equiv \exists u \in V(G)[\forall v \in V_i[adj(u, v)]]$$
	\end{itemize}
	 We use $\phi(G, q)$ to denote the MSO formula which states that $G$ has cd-chromatic number at most $q$. We use the formulas defined above as macros in $\phi(G, q)$.
	
	\vspace{0.25cm}
	\begin{tabular}{ccl}
		$\phi(G, q)$ & $\equiv$ & $\exists V_1, V_2, \dots, V_q \subseteq V(G)[{\sf Part}(V_1, V_2, \dots, V_q) \land$\\
		& & ${\sf IndSet}(V_1) \land \dots \land {\sf IndSet}(V_q) \land {\sf Dom}(V_1) \land \cdots \land {\sf Dom}(V_q)$]
	\end{tabular}
	
	 It is easy to see that the length of $\phi(G, q)$ is upper bounded by a linear function of $q$. By applying Theorem~\ref{thm:courcelle} we obtain the following result.
	
	\begin{theorem} \label{thm:general-graph-fpt} \textsc{cd-Coloring}  parameterized by the  number of colors and the treewidth of the input graph is \FPT.
	\end{theorem}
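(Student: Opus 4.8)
The plan is to obtain the theorem as a direct application of Courcelle's theorem (Theorem~\ref{thm:courcelle}) to the MSO sentence $\phi(G,q)$ assembled above from the macros ${\sf Part}$, ${\sf IndSet}$, and ${\sf Dom}$. The two substantive points are (i) verifying that $\phi(G,q)$ is satisfied in $G$ precisely when $\chi_{cd}(G) \le q$, and (ii) checking that the size of $\phi(G,q)$ depends only on $q$, so that all of its cost is absorbed into the ``$f$'' of Courcelle's theorem rather than into the linear factor in $n$.

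First I would verify correctness of $\phi(G,q)$. For one direction, assume $\chi_{cd}(G) \le q$ and fix a cd-coloring of $G$ with color classes $C_1,\dots,C_{q'}$ where $q' \le q$. Each $C_i$ is an independent set (properness of the coloring) and is contained in $N(y_i)$ for some $y_i \in V(G)$ (the class-domination condition), so setting $V_i := C_i$ for $i \le q'$ satisfies ${\sf IndSet}(V_i) \land {\sf Dom}(V_i)$, while $(V_1,\dots,V_{q'})$ satisfies ${\sf Part}$. Padding with empty parts $V_{q'+1} = \dots = V_q = \emptyset$ keeps ${\sf Part}$ true and makes ${\sf IndSet}$ and ${\sf Dom}$ hold vacuously on those parts (here one uses $V(G) \neq \emptyset$, so that ${\sf Dom}(\emptyset)$ is satisfied by any vertex), whence $\phi(G,q)$ holds. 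Conversely, given sets $V_1,\dots,V_q$ witnessing $\phi(G,q)$, they form a partition of $V(G)$ into independent sets each dominated by a single vertex; discarding the empty parts and colouring each remaining $V_i$ by colour $i$ yields a proper colouring with at most $q$ colours in which every colour class lies inside the neighbourhood of some vertex, i.e.\ a cd-coloring, so $\chi_{cd}(G) \le q$. The degenerate case $V(G) = \emptyset$, where $\chi_{cd}(G) = 0 \le q$ trivially, is handled separately.

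Next I would invoke the bound already noted above, namely that $||\phi(G,q)||$ is bounded by a function of $q$ alone (indeed linear in $q$) and is independent of $G$. Theorem~\ref{thm:courcelle} then yields an algorithm that decides whether $\phi(G,q)$ is satisfied in $G$ in time $f(||\phi(G,q)||,\tw(G)) \cdot n \le g(q,\tw(G)) \cdot n$ for some computable $g$, after first constructing a tree decomposition of $G$ (which is itself \FPT\ in $\tw(G)$). Together with the correctness of $\phi(G,q)$, this is precisely an \FPT\ algorithm for \textsc{cd-Coloring} with the combined parameter $(q,\tw)$, proving the theorem.

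There is no genuine obstacle here: the modelling effort went into writing $\phi(G,q)$, and Courcelle's theorem is used as a black box. The only places needing a little care are the padding-with-empty-classes step (ensuring that a cd-coloring using fewer than $q$ colours still witnesses $\phi(G,q)$), the trivial empty-graph case, and the observation that the integer $q$ occurring in $\phi(G,q)$ is part of the formula itself, so its contribution is charged to $f$ and not to the polynomial factor in $n$.
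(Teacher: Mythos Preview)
Your proposal is correct and follows exactly the paper's approach: write the MSO sentence $\phi(G,q)$ from the macros ${\sf Part}$, ${\sf IndSet}$, ${\sf Dom}$, observe that $||\phi(G,q)||$ is bounded by a function of $q$ alone, and invoke Courcelle's theorem. You are simply more explicit than the paper about the correctness of $\phi(G,q)$ (padding with empty classes, the vacuous satisfaction of ${\sf Dom}(\emptyset)$, the empty-graph corner case), but the underlying argument is identical.
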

	
	\subsection{Chordal Graphs}
	
As the graph gets more structured, we expect many \NP-hard problems to get \emph{easier} in some sense on the restricted class of graphs having that structure.
For example, \textsc{Chromatic-Coloring} is \NP-hard on general graphs but it is polynomial time solvable on chordal graphs.
However, \textsc{cd-Coloring}  is \NP-hard even on the chordal graphs and we show that it is \FPT\ when parameterized by the number of colors on chordal graphs.  
	
\begin{theorem} \textsc{cd-Coloring} parameterized by the number of colors is \FPT\ on chordal graphs.
\end{theorem}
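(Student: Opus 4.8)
The plan is to reduce the problem, in polynomial time, to the bounded-treewidth setting already handled by Theorem~\ref{thm:general-graph-fpt}. The structural fact I would exploit is the classical identity $\tw(G) = \omega(G) - 1$ for every chordal graph $G$: one direction ($\tw(G) \ge \omega(G)-1$) holds for all graphs, and the other follows from the existence of a clique tree, a tree decomposition of $G$ whose bags are exactly the maximal cliques of $G$, which moreover can be computed in polynomial time (e.g., from a perfect elimination ordering produced by Lex-BFS).

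First I would observe that any $q$-cd-coloring of $G$ is in particular a proper $q$-coloring, so $\chi_{cd}(G) \ge \chi(G) \ge \omega(G)$. Hence, given an instance $(G,q)$ with $G$ chordal, the algorithm computes $\omega(G)$ --- which is polynomial-time solvable on chordal graphs --- and immediately rejects if $\omega(G) > q$. In the surviving case we have $\tw(G) = \omega(G) - 1 \le q - 1$, so the treewidth of $G$ is bounded by a function of the parameter $q$.

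Then I would invoke Theorem~\ref{thm:general-graph-fpt}: since both the number of colors $q$ and the treewidth (at most $q-1$) are bounded in terms of $q$, the MSO formula $\phi(G,q)$ of length $\Oh(q)$ can be evaluated on $G$ in time $f(q)\cdot n$ by Courcelle's theorem (Theorem~\ref{thm:courcelle}), which decides whether $\chi_{cd}(G)\le q$. Together with the polynomial-time preprocessing, this yields an \FPT\ algorithm.

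I do not expect a genuine obstacle here: the only step that needs care is the bound $\tw(G)\le q-1$, which relies essentially on chordality --- for general graphs the clique number does not bound the treewidth, consistent with \textsc{cd-Coloring} being para-\NP-hard in general. If one wanted a concrete (better) dependence on $q$ instead of the tower function hidden in Courcelle's theorem, one could replace the MSO black box by an explicit dynamic program over the clique tree whose state records, per bag, the partial coloring of the bag together with, for each color class, a bit indicating whether that class has already been entirely dominated by some vertex; but this refinement is not required for the qualitative statement.
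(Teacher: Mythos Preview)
Your proposal is correct and follows essentially the same route as the paper: use $\tw(G)=\omega(G)-1$ for chordal graphs, reject when $\omega(G)>q$ since any cd-coloring is a proper coloring, and in the remaining case invoke Theorem~\ref{thm:general-graph-fpt} with treewidth bounded by $q-1$. The additional remarks you make about computing the clique tree and about replacing Courcelle's theorem by an explicit dynamic program are not in the paper but are accurate and harmless.
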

\begin{proof} For a chordal graph $G$, $\tw(G) = \omega(G) - 1$ where $\omega(G)$ is the size of a maximum clique in $G$ \cite{golumbic2004algorithmic}. 
Since, a cd-coloring is also a proper coloring, no two vertices in a clique can be in the same color class. Thus, if $\omega(G) \geq k$ then we can conclude that $(G, k)$ is NO instance of \textsc{cd-Coloring}.  Otherwise, $\omega(G) \le k$ which implies that $\tw(G) \le k$. This bound and Theorem~\ref{thm:general-graph-fpt} imply that \textsc{cd-Coloring} parameterized by the number of colors is \FPT\ on chordal graphs. 
	\end{proof}
	
	\subsection{Graphs with girth at least $5$}
	In this section, we show that \textsc{cd-coloring} on graphs of girth at least five is \FPT\ with respect to the solution size as the parameter. By Observation~\ref{obs:graph-connected}, we can assume that the input graph $G$ is connected. We can define cd-coloring of a connected graph as a proper coloring such that every color class is contained in the open neighbourhood of some vertex. In other words, we do not allow a vertex to dominate itself. One can verify that the two definitions of cd-coloring are identical on connected graphs. We now define the notion of a {\em total-dominating set} of a graph $G$.  A set $S\subseteq V(G)$ is called a {\em total-dominating set} if $V(G)=\bigcup_{v\in S} N(v)$. That is, for every vertex $v\in V(G)$, there exists a vertex $u\in S$, $u\neq v$, such that $v\in N(u)$. 
Our interest in total-dominating set is because of its relation to cd-coloring in graphs that do not contain triangles, that is, graphs of girth at least 4.
In particular, we need the following lemma.
The first proof of this has appeared in \cite{merouane2015dominated}.
For the sake of completeness, we present a proof here.

\begin{lemma}[Theorem~$4$ in \cite{merouane2015dominated}]\label{lemma:mim-TDS-CDcol} If $g(G) \ge 4$, then the size of a minimum total dominating set is equal $\chi_{cd}(G)$.
\end{lemma}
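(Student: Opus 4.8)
The plan is to prove the equality by establishing the two inequalities separately: writing $\gamma_t(G)$ for the size of a minimum total dominating set, I will show $\chi_{cd}(G)\le \gamma_t(G)$ and $\gamma_t(G)\le \chi_{cd}(G)$. Throughout I use that, by Observation~\ref{obs:graph-connected}, it suffices to treat $G$ as connected (so $G$ has no isolated vertex and both quantities are well defined), and that on a connected graph a cd-coloring may be taken to be a proper coloring in which every color class lies in the \emph{open} neighbourhood of some vertex, i.e.\ no vertex dominates itself.

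For $\chi_{cd}(G)\le \gamma_t(G)$, I would start from a minimum total dominating set $S=\{v_1,\dots,v_t\}$. Since $S$ is total dominating, every vertex $u$ has at least one neighbour in $S$; pick one such neighbour and colour $u$ by its index. Then the colour class of colour $i$ is contained in $N(v_i)$, hence is dominated by $v_i$. The key point is that $g(G)\ge 4$ means $G$ is triangle-free, so $N(v_i)$ is an independent set, and therefore each colour class is independent and the resulting colouring is proper. This produces a cd-coloring using at most $t=\gamma_t(G)$ colours.

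For $\gamma_t(G)\le \chi_{cd}(G)$, I would take an optimal cd-coloring with colour classes $C_1,\dots,C_q$ where $q=\chi_{cd}(G)$, and for each $i$ choose a vertex $w_i$ with $C_i\subseteq N(w_i)$. Setting $S=\{w_1,\dots,w_q\}$: every vertex $u$ lies in some $C_i$, so $u\in N(w_i)$, i.e.\ $u$ has a neighbour (necessarily distinct from $u$, as $G$ has no loops) in $S$. Hence $S$ is a total dominating set of size at most $q$, giving $\gamma_t(G)\le q=\chi_{cd}(G)$. Note that this direction does not use the girth hypothesis at all.

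I do not expect a serious obstacle here; the proof is short. The only points that need care are (i) using the ``a vertex does not dominate itself'' convention consistently, so that a colour class sitting inside $N(w_i)$ genuinely certifies $w_i$ as a total-domination neighbour of each of its members, and (ii) recognising that triangle-freeness is precisely what upgrades ``$C_i\subseteq N(v_i)$'' to ``$C_i$ is independent'', which is the crux of the inequality $\chi_{cd}(G)\le \gamma_t(G)$.
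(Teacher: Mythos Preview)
Your proposal is correct and follows essentially the same approach as the paper's proof: both directions are argued identically, with the paper defining the colour classes via the greedy rule $V_i=N(v_i)\setminus(V_1\cup\cdots\cup V_{i-1})$ where you instead let each vertex pick an arbitrary neighbour in the total dominating set, which amounts to the same thing. Your observation that the inequality $\gamma_t(G)\le\chi_{cd}(G)$ does not use the girth hypothesis is also correct and matches the paper.
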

\begin{proof} 
Let $\phi$ be a cd-coloring of $G$ that uses $\chi_{cd}(G)$ colors and let $V_1, \dots ,V_q$ be the color classes in this coloring. Then, for every color class $V_i$, there is a vertex $v_i$ such that $V_i \subseteq N(v_i)$. Let $X$ denote the set of these vertices. Then, $X$ has at most $q$ vertices and by definition, it is a total dominating set of $G$. Hence, the size of a minimum total dominating set of a graph is at most the cd-chromatic number of the graph.
		
Suppose $X = \{v_1, v_2, \dots, v_k\}$ is a minimum total dominating set of $G$. We construct a cd-coloring of $G$ using at most $k$ colors. We define the color classes in the following way. Let $V_1 = N(v_1)$ and for $i = 2, \dots, k$, define $V_i = N(v_i) \setminus (V_1 \cup V_2 \cup \dots \cup V_{i-1})$. Note that $V_1, \dots , V_q$ forms a partition of $V(G)$. Since, $g(G)\geq 4$, it follows that each $V_i$ is an independent set. Furthermore, since $X$ is a total dominating set, for each $i \in [k]$, we have a vertex $v_i\in X$ such that $V_i\subseteq N(v_i)$. Hence, this gives a cd-coloring of $G$. Therefore, the cd-chromatic number of a graph is at most the cardinality of a minimum total dominating set. Now the lemma follows by combining the above two inequalities.
\end{proof}
	
	 Lemma~\ref{lemma:mim-TDS-CDcol} shows that to prove that \textsc{cd-Coloring} is \FPT\ on graphs of girth at least four, it suffices to show that finding a total dominating set of size at most $k$ is \FPT\ on these graphs. This leads to the \textsc{Total Dominating Set} problem.  Given a graph $G$ and an integer $k$, the \textsc{Total Dominating Set} problem asks whether there exists a total dominating set of size at most $k$. Observe that we can test whether $G$ has a total dominating set of size at most $k$ by enumerating all subsets $S$ of $V(G)$ of size at most $k$ and checking whether any of them forms a total-dominating set. This immediately gives an algorithm with running time $n^{\Oh(k)}$ for  \textsc{cd-Coloring} on graphs with girth at least $4$, as the checking part can be done in polynomial time. It is not hard to modify the reduction given in~\cite{raman2008short} to show that \textsc{Total Dominating Set} is $W[2]$ hard on bipartite graphs. Thus, Lemma~\ref{lemma:mim-TDS-CDcol} implies that even \textsc{cd-Coloring} is $W[2]$ hard on bipartite graphs. Hence, if we need to show that  \textsc{cd-Coloring} is \FPT, we must assume that the girth of the input graph is at least $5$. In the rest of this section, we show that \textsc{cd-Coloring} is \FPT\ on graphs with girth at least $5$ by showing that \textsc{Total Dominating Set} is \FPT\ on those graphs. Before proceeding further, we note some simple properties of graphs with girth at least $5$.
	
	\begin{observation} \label{obs:nbd-ind} For a graph $G$, if $g(G) \ge 5$ then for any $v$ in $V(G)$, $N(v)$ is an independent set and for any $u, v$ in $V(G)$, $|N(v) \cap N(u)| \le 1$. 
	\end{observation}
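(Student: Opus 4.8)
The plan is to prove each of the two assertions by contradiction, in both cases producing a short cycle the moment the property fails and then invoking the hypothesis $g(G) \ge 5$, which in particular forbids cycles of length $3$ and of length $4$.

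First I would treat the claim that $N(v)$ is an independent set. Suppose it is not; then there exist two distinct vertices $x, y \in N(v)$ with $xy \in E(G)$. Since $x, y \in N(v)$ we have $vx, vy \in E(G)$, and since $x \ne y$ and $v \notin \{x, y\}$ (no vertex is its own neighbour), the three distinct vertices $v, x, y$ together with the edges $vx$, $xy$, $yv$ form a triangle, i.e.\ a cycle of length $3$, contradicting $g(G) \ge 5$.

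Next I would treat the bound $|N(u) \cap N(v)| \le 1$ for distinct $u, v$ (when $u = v$ the inequality is not what is intended, and this case never arises in the way the observation is used). Suppose $x$ and $y$ are two distinct vertices of $N(u) \cap N(v)$. From $x, y \in N(u)$ and $x, y \in N(v)$ we obtain the edges $ux$, $xv$, $vy$, $yu$; moreover $x, y \notin \{u, v\}$ and, by assumption, $u \ne v$ and $x \ne y$, so $u, x, v, y$ are four pairwise distinct vertices, and traversing $u, x, v, y$ and back to $u$ along those edges yields a cycle of length $4$, again contradicting $g(G) \ge 5$. Putting the two parts together establishes the observation.

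The only point requiring any care is the routine bookkeeping that the vertices appearing in these short closed walks are genuinely distinct, so that what we produce are honest $3$- and $4$-cycles rather than degenerate walks; beyond that there is no real obstacle.
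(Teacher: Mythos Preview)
Your proof is correct and is precisely the standard justification: a non-independent neighbourhood yields a triangle, and two common neighbours of distinct vertices yield a $4$-cycle. The paper itself states this observation without proof, so there is nothing further to compare; your care in checking that the vertices involved are pairwise distinct (and your remark that the case $u=v$ is not intended) is appropriate.
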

	
	 Raman and Saurabh \cite{raman2008short} defined a variation of \textsc{Set Cover} problem, namely,  \textsc{Bounded Intersection Set Cover}. An input to the problem consists of a universe $\mathcal{U}$, a collection $\mathcal{F}$ of subsets of $\mathcal{U}$ and a positive integer $k$ with the property that for any two $S_i, S_j$ in $\mathcal{F}$, $|S_i \cap S_j| \leq c$ for some constant $c$ and the objective is to check whether  there exists a sub-collection $\mathcal{F}_0$ of $\mathcal{F}$ of size at most $k$ such that $\bigcup_{S \in \mathcal{F}_0} = \mathcal{U}$. In the same paper, the authors proved that the \textsc{Bounded Intersection Set Cover} is \FPT\ when parameterized by the solution size. \textsc{Total Dominating Set} on $(G, k)$ where $G$ has girth at least $5$ can be reduced to \textsc{Bounded Intersection Set Cover} with $\mathcal{U} = V(G)$ and $\mathcal{F} = \{N(v) |\  \forall v \in V(G)\}$. By Observation~\ref{obs:nbd-ind}, we can fix the constant $c$ to be $1$. Hence we have the following lemma.
	
	\begin{lemma} \label{lemma:tds-fpt} On graphs with girth at least $5$, \textsc{Total Dominating Set} is \FPT\ when parameterized by the solution size.
	\end{lemma}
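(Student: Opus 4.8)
The plan is to reduce \textsc{Total Dominating Set} on a graph $G$ with $g(G)\ge 5$ to the \textsc{Bounded Intersection Set Cover} problem of Raman and Saurabh~\cite{raman2008short}, and then invoke their fixed-parameter tractability result for that problem. Concretely, I would take the universe to be $\mathcal{U} := V(G)$ and the set family to be $\mathcal{F} := \{N(v) : v \in V(G)\}$, leaving the parameter $k$ unchanged; this instance is computable in polynomial time.

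The first step is to argue correctness of the reduction. A set $S \subseteq V(G)$ satisfies $V(G) = \bigcup_{v \in S} N(v)$ precisely when the subfamily $\{N(v) : v \in S\}$ covers $\mathcal{U}$, so every total dominating set of size at most $k$ yields a set cover of size at most $k$. Conversely, given a subfamily $\mathcal{F}_0 \subseteq \mathcal{F}$ with $|\mathcal{F}_0| \le k$ and $\bigcup \mathcal{F}_0 = \mathcal{U}$, for each member of $\mathcal{F}_0$ pick one vertex $v$ whose neighbourhood equals that member; the resulting at-most-$k$ vertices form a total dominating set of $G$. (The map $v \mapsto N(v)$ need not be injective, but since $\mathcal{F}$ is a set of sets, every element of $\mathcal{F}_0$ has at least one such preimage, so this causes no trouble.) Hence $(G,k)$ is a yes-instance of \textsc{Total Dominating Set} if and only if $(\mathcal{U},\mathcal{F},k)$ is a yes-instance of set cover.

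The second step is to verify that the produced instance genuinely has bounded pairwise intersection, which is where the girth hypothesis is spent and which is exactly Observation~\ref{obs:nbd-ind}: when $g(G)\ge 5$, any two distinct vertices $u,v$ satisfy $|N(u)\cap N(v)|\le 1$ (two common neighbours would close a $4$-cycle, and a common neighbour of adjacent $u,v$ would form a triangle), so distinct members of $\mathcal{F}$ intersect in at most one element and the intersection constant can be fixed to $c=1$. With $c$ a fixed constant, the Raman--Saurabh algorithm solves \textsc{Bounded Intersection Set Cover} in $f(k)\cdot(|\mathcal{U}|+|\mathcal{F}|)^{O(1)}$ time, and composing with the reduction gives an $f(k)\cdot n^{O(1)}$ algorithm for \textsc{Total Dominating Set} on $G$, proving the lemma.

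I do not anticipate a genuine obstacle: once Observation~\ref{obs:nbd-ind} is in hand the reduction is immediate, and the only edge cases — an isolated vertex (which makes both instances no-instances, since its empty neighbourhood cannot be covered and it cannot itself be totally dominated) and repeated neighbourhoods — do not affect correctness. I would also remark that the girth bound $5$ is tight for this route: at girth $4$ two neighbourhoods can share arbitrarily many vertices, and indeed, as noted earlier, \textsc{Total Dominating Set} (hence \textsc{cd-Coloring}) is $W[2]$-hard on bipartite graphs; combined with Lemma~\ref{lemma:mim-TDS-CDcol}, the present lemma is what will yield fixed-parameter tractability of \textsc{cd-Coloring} on graphs of girth at least $5$.
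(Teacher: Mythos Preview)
Your proposal is correct and follows essentially the same route as the paper: reduce \textsc{Total Dominating Set} on $G$ with $g(G)\ge 5$ to \textsc{Bounded Intersection Set Cover} via $\mathcal{U}=V(G)$, $\mathcal{F}=\{N(v):v\in V(G)\}$, invoke Observation~\ref{obs:nbd-ind} to fix the intersection bound $c=1$, and apply the Raman--Saurabh \FPT\ result. You supply more detail than the paper (explicit correctness of both directions of the reduction and the edge cases of isolated vertices and repeated neighbourhoods), but the argument is the same.
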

	
	 We now prove that the problem has a polynomial kernel and use it to design another \FPT\ algorithm. 
	
	\begin{lemma} \label{lemma:tds-kernel} \textsc{Total Dominating Set} admits a kernel on $\mathcal{O}(k^3)$ vertices on the class of graphs with girth at least 5.
	\end{lemma}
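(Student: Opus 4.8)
The plan is to work directly on the graph, exploiting the two structural consequences of girth $\ge 5$ recorded in Observation~\ref{obs:nbd-ind}: every open neighbourhood is an independent set, and any two distinct vertices have at most one common neighbour (the ``bounded intersection'' property, which is exactly what made the reduction to \textsc{Bounded Intersection Set Cover} work in Lemma~\ref{lemma:tds-fpt}). I would design three groups of reduction rules and finish with a counting argument showing that a reduced yes-instance has $\Oh(k^3)$ vertices.

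First, trivial cleanup and pendant handling: if $G$ has an isolated vertex it has no total dominating set, so output a constant-size NO-instance; and if a vertex $s$ has two or more neighbours of degree $1$, keep only one of them. The latter is safe because $s$ is forced into every total dominating set (its pendant neighbours can only be dominated by $s$) and one pendant neighbour already witnesses this; since there can be at most $k$ such forced support vertices in a yes-instance, only $\Oh(k)$ degree-$1$ vertices survive. Second, a ``high degree forces inclusion'' rule, based on the following observation. If a vertex $v$ does \emph{not} lie in a total dominating set $T$ with $|T|\le k$, then every neighbour of $v$ is dominated by some $t\in T$, and since $|N(t)\cap N(v)|\le 1$ for each such $t$, we get $\deg(v)\le |T|\le k$. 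Hence every vertex of degree at least $k+1$ belongs to \emph{every} total dominating set of size at most $k$; if there are more than $k$ of them we output a NO-instance, and otherwise we let $H$ be this set of forced high-degree vertices, with $|H|\le k$.

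The heart of the argument — and the step I expect to be the main obstacle — is a reduction rule that deletes ``redundant'' low-degree vertices so that the remaining ones can be bounded. The intended rule is: a vertex $v$ of degree at most $k$ whose domination is already guaranteed by some forced vertex of $H$, and which is not essential for dominating any of its own neighbours (each of them being dominated by something other than $v$ in every relevant solution), may be deleted. Proving safeness of this rule is the delicate part: one must show that deleting such a $v$ neither destroys a small total dominating set (here $v$ might have been used as a dominator, and one argues a forced vertex of $H$ can take over that role) nor creates one, which needs a careful case analysis of the possible roles of $v$. Once this rule is applied exhaustively, I would take a reduced yes-instance with solution $T\supseteq H$, $|T|\le k$, write $T=H\cup T'$ with $|T'|\le k$ and $\deg(t)\le k$ for all $t\in T'$, and count: the vertices dominated by some vertex of $T'$ number at most $k^2$; the vertices dominated by at least two vertices of $T$ number at most $\binom{k}{2}$ by bounded intersection; and for each $h\in H$, a surviving neighbour $v$ of $h$ that is dominated only by $h$ must carry a ``private'' distance-$2$ witness $w_v\in N(v)\setminus\{h\}$ that is not secured by $H$, these witnesses are pairwise distinct (again because $|N(v)\cap N(v')|\le 1$), and in a reduced instance each such witness is itself pinned near $T'$ or $H$, contributing only $\Oh(k^2)$ survivors per $h$ and hence $\Oh(k^3)$ in total.

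The wrap-up is routine: all reduction rules run in polynomial time, and after applying them exhaustively, if $|V(G)|$ exceeds the constant times $k^3$ bound coming from the counting argument, we output a constant-size NO-instance; otherwise $G$ itself is a kernel with $\Oh(k^3)$ vertices. The two points that require care are the safeness proof of the redundant-vertex rule and assembling the three-layer count (the $\Oh(k)$ forced vertices, the $\Oh(k^2)$ multiply-dominated vertices, and the $\Oh(k^2)$ survivors around each forced vertex) without double counting.
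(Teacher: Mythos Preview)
Your high-degree observation (every vertex of degree $\ge k+1$ lies in every total dominating set of size $\le k$) is correct and is also the paper's starting point. The gap is in your third step. Your ``redundant-vertex'' rule is phrased with the hypothesis that each neighbour of $v$ is ``dominated by something other than $v$ in every relevant solution''; this is a semantic condition quantifying over all solutions, not a polynomial-time checkable predicate on the graph, so as written you do not have a kernelization algorithm at all. The subsequent counting then has nothing to stand on: you assert that a surviving private neighbour of some $h\in H$ carries a distance-$2$ witness ``pinned near $T'$ or $H$'' and that this gives $\Oh(k^2)$ survivors per $h$, but neither the existence of the witness nor the $\Oh(k^2)$ bound is derived from any rule you have actually specified. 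The parts of your count that \emph{do} work (at most $k^2$ vertices dominated by $T'$, at most $\binom{k}{2}$ vertices dominated by two members of $T$) are fine, but they do not touch the unbounded piece.

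The paper sidesteps all of this with a purely structural layering and a single syntactic rule. With $H$ as you define it, set $J=N(H)\setminus H$ and $R=V(G)\setminus(H\cup J)$. There are no $H$--$R$ edges, so $R$ must be totally dominated by at most $k$ vertices of $J\cup R$, each of degree $\le k$; hence $|R|\le k^2$, and then $|J\cap N(R)|\le k\cdot|R|=\Oh(k^3)$. The only nontrivial reduction rule is applied to $J^\star=J\setminus N(R)$: if $u,v\in J^\star$ satisfy $N(u)\cap H\subseteq N(v)\cap H$, delete $u$. Safeness is immediate because every vertex of $J$ is already dominated by $H$ (which is contained in every small solution) and a vertex of $J^\star$ has no neighbour in $R$, so its only possible role in a solution is to dominate vertices of $H$, a role $v$ can take over from $u$. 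After exhaustively applying the rule, vertices of $J^\star$ with exactly one $H$-neighbour number at most $|H|\le k$, and those with at least two $H$-neighbours number at most $\binom{k}{2}$ since any two vertices of $H$ share at most one common neighbour by the girth assumption. Summing the pieces gives $\Oh(k^3)$; your pendant rule and solution-dependent counting are unnecessary.
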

	\begin{proof} We start the proof with the following claim which says that every high degree vertex should be included in every total dominating set of size at most $k$. 
		
		\begin{claim} In a graph $G$ with $g(G) \ge 5$, if there is a vertex $u$ with degree at least $k + 1$, then any total dominating set of size at most $k$ contains $u$. 
\end{claim}
\begin{proof} Suppose there exists a total dominating set $X$ of $G$ of size at most $k$ which does not contain $u$. Since $N(u)$ (having size at least $k + 1$) is dominated by $X$ and no vertex can dominate itself, by the Pigeon Hole Principle, there exists a vertex, say $w$, in $X$ which is adjacent to at least two vertices, say, $v_1, v_2$ in $N(u)$. This implies that $w,v_1,v_2,u$ form a cycle of length $4$, contradicting the fact that girth of $G$ is at least 5.  
\end{proof}
		
		 Suppose $G$ has a total dominating set of size at most $k$. Construct a tri-partition of $V(G)$ as follows:
		\begin{eqnarray*}
			H & = &  \{u \in V(G) ~|~ |N(u)| \ge k + 1\}; \\  
			J & = &  \{v \in V(G) ~|~ v\notin H,~\exists u \in H \text{ such that } (u,v) \in E(G)\};  \\
			R & = & V(G) \setminus (H \cup J)
		\end{eqnarray*} 
		By the above claim, $H$ is contained in every total dominating set of size at most $k$. Hence, the size of $H$ is upper bounded by $k$. Note that there is no edge between a vertex in $H$ and a vertex in $R$. Thus, $R$ has to be dominated by at most $k$ vertices from $J \cup R$. However, the degree of vertices in $J \cup R$ is at most $k$ and hence $|R| \le \mathcal{O}(k^2)$ and $|J \cap N(R)|$ is upper bounded by $\mathcal{O}(k^3)$. We will now bound the size of $J^\star=J \setminus N(R)$. For that, we first apply the following reduction rule on the vertices in $J^\star$.
		
		\begin{Reduction Rule} \label{rr:nbd-subset} For $u, v \in J^\star$, if $N(u) \cap H \subseteq N(v) \cap H$ then delete $u$.
		\end{Reduction Rule}
		The correctness of this reduction follows from the observation that all the vertices in $J$ have been dominated by the vertices in $H$. The only reason any vertex in $J^\star$ is part of a total dominating set is because that vertex is used to dominate some vertex in $H$. If this is the case then the vertex $u$ in the solution can be replaced by the vertex $v$. In the reverse direction, if $X$ is a total dominating set of $G - \{u\}$ and $|X| \leq k$, then $H \subseteq X$. Hence $u$ is dominated by $x \in X \cap H$ in $G$ too. That is, $X$ is a total dominating set of $G$. 
		
		All that remains is to bound the size of $J^\star$.  We partition $J^\star$ into two sets namely $J_1$ and $J_2$. The set $J_1$ is the set of vertices which are adjacent to exactly one vertex in $H$ whereas  each vertex in $J_2$ is adjacent to at least two vertices in $H$. After exhaustive application of Reduction Rule~\ref{rr:nbd-subset}, no two vertices in $J_1$ can be adjacent to one vertex in $H$ and hence $|J_1| \le |H| \le k$. Any vertex in $J_2$ is adjacent to at least two vertices in $H$. For every vertex $u$ in $J_2$, we assign a pair of vertices in $H$ to which $u$ is adjacent. By Observation~\ref{obs:nbd-ind}, no two vertices in $J_2$ can be assigned to the same pair and hence the size of $J_2$ is upper bounded by $\binom{k}{2} \le k^2$. Combining all the bounds, we get a kernel with $\mathcal{O}(k^3)$ vertices.
	\end{proof}
	
	
	 Combining Lemmas ~\ref{lemma:mim-TDS-CDcol} and \ref{lemma:tds-kernel} we obtain the following theorem.
	
	\begin{theorem} On graphs with girth at least $5$, \textsc{cd-Coloring} admits an algorithm running in $\Oh(2^{\Oh(q^{3})} q^{12} \log q^3)$ time and an $\mathcal{O}(q^3)$ sized vertex kernel, where $q$ is number of colors.
	\end{theorem}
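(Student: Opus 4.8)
The plan is to get both the kernel and the algorithm for free by composing three facts already established: the identity $\chi_{cd}(G)=$ (minimum size of a total dominating set) whenever $g(G)\ge 4$ (Lemma~\ref{lemma:mim-TDS-CDcol}), the $\mathcal{O}(k^3)$-vertex kernel for \textsc{Total Dominating Set} on graphs of girth at least $5$ (Lemma~\ref{lemma:tds-kernel}), and the $\mathcal{O}(2^n n^4 \log n)$-time exact algorithm for the cd-chromatic number. The key point is simply that all three tools live on the same graph class, so a kernel for \textsc{Total Dominating Set} can be transported back to a kernel for \textsc{cd-Coloring}.

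Concretely, I would first invoke Observation~\ref{obs:graph-connected} to reduce to a connected input $G$ with $g(G)\ge 5$, treating connected components separately. Since $g(G)\ge 4$, Lemma~\ref{lemma:mim-TDS-CDcol} lets me rephrase the question ``$\chi_{cd}(G)\le q$?'' as ``does $G$ have a total dominating set of size at most $q$?''. Next I would run the kernelization of Lemma~\ref{lemma:tds-kernel} with $k=q$ to obtain, in polynomial time, an equivalent \textsc{Total Dominating Set} instance $(G',q)$ with $|V(G')|=\mathcal{O}(q^3)$. Because that kernelization never adds edges (Reduction Rule~\ref{rr:nbd-subset} and the accompanying bounding arguments only delete vertices, up to possibly returning a trivial constant-size no-instance), the girth cannot drop, so $g(G')\ge 5\ge 4$ and Lemma~\ref{lemma:mim-TDS-CDcol} applies to $G'$ as well. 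Hence ``$G'$ has a total dominating set of size at most $q$'' is equivalent to ``$\chi_{cd}(G')\le q$''. Chaining the equivalences, $(G,q)$ is a yes-instance of \textsc{cd-Coloring} if and only if $(G',q)$ is, and $G'$ has $\mathcal{O}(q^3)$ vertices and girth at least $5$; this is the claimed vertex kernel.

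For the running-time bound I would then just solve the kernel with the exact cd-chromatic-number algorithm. Writing $n'=|V(G')|=\mathcal{O}(q^3)$, that algorithm runs in $\mathcal{O}\bigl(2^{n'}(n')^4\log n'\bigr)=\mathcal{O}\bigl(2^{\mathcal{O}(q^3)}\, q^{12}\log q^3\bigr)$ time, and adding the polynomial time spent on kernelization does not change this bound. (Alternatively, one could run the \FPT\ algorithm of Lemma~\ref{lemma:tds-fpt} on $G'$ instead of the exact algorithm.)

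I do not expect a genuine obstacle here: the theorem is essentially a one-line composition of earlier results, and the only things that need an explicit sentence in the write-up are (i) that the reduction rules of Lemma~\ref{lemma:tds-kernel} preserve $g(G')\ge 5$, so that Lemma~\ref{lemma:mim-TDS-CDcol} may legitimately be reapplied to the kernel, and (ii) that the colour budget $q$ of \textsc{cd-Coloring} is exactly the budget $k$ of \textsc{Total Dominating Set} throughout the chain of equivalences. A further mild point worth flagging is the passage to connected graphs (where the convention that no vertex totally dominates itself is in force), which is handled by Observation~\ref{obs:graph-connected} together with the discussion preceding Observation~\ref{obs:nbd-ind}.
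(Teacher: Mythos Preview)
Your proposal is correct and follows essentially the same approach as the paper, which simply states that the theorem is obtained by combining Lemmas~\ref{lemma:mim-TDS-CDcol} and~\ref{lemma:tds-kernel}. You are in fact more explicit than the paper about the details (girth preservation under the kernelization, the reduction to connected components, and the substitution $n'=\mathcal{O}(q^3)$ into the exact algorithm's running time), all of which are implicit in the paper's one-line proof.
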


\section{Complexity of CD-Partization}
	In this section, we study the complexity of \cdp. As recognizing graphs with cd-chromatic number at most $q$ is \NP-hard on general graphs for $q \geq 4$, the deletion problem is also \NP-hard on general graphs for such values of $q$. For $q=1$, the problem is trivial as $\chi_{cd}(G)=1$ if and only if $G$ is the graph on one vertex. In this section, we show \NP-hardness for $q \in \{2,3\}$. We remark that $\mathcal{G}=\{G \mid \chi_{cd}(G) \leq q\}$ is not a hereditary graph class and so the generic result of Lewis and Yannakakis \cite{node-del} does not imply the claimed \NP-hardness. 
	\subsection{Para-\NP-hardness in General Graphs}
	Consider the following problem. 
	
	\defdecproblem{Partization}{Graph $G$, integers $k$ and $q$}{Does there exist $S \subseteq V(G)$, $|S| \leq k$, such that $\chi(G-S)\leq q$?}
	
	 Once again if $q$ is fixed, we refer to the problem as \textsc{$q$-Partization}. Observe that the classical \NP-complete problems \textsc{Vertex Cover} \cite{garey} and \textsc{Odd Cycle Transversal} \cite{garey} are \textsc{1-Partization} and \textsc{2-Partization}, respectively. Now, we proceed to show the claimed hardness.
	\begin{theorem}
		\textsc{$q$-\cdp} is \NP-complete for $q \in \{2,3\}$.
	\end{theorem}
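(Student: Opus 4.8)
The plan is to establish membership in \NP\ and then prove \NP-hardness for $q=2$ and $q=3$ by two variants of a single reduction from \textsc{Vertex Cover}. Membership is immediate: given a candidate set $S$ we check $|S|\le k$ and verify $\chi_{cd}(G-S)\le q$ in polynomial time, since deciding whether a graph has cd-chromatic number at most $q$ is polynomial for $q\le 3$~\cite{caldam16}. Hence \textsc{$q$-\cdp} is in \NP\ for $q\in\{2,3\}$.

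For hardness I would use a deletion-robust universal gadget. Given a \textsc{Vertex Cover} instance $(G,k)$: for $q=2$ let $G'$ be $G$ together with $k+1$ pairwise non-adjacent new vertices, each adjacent to every vertex of $V(G)$; for $q=3$ let $G'$ be $G$ together with two new sets $A,B$ of size $k+1$ each, with all $A$--$B$ edges present, every vertex of $A\cup B$ adjacent to every vertex of $V(G)$, and no edges inside $A$ or inside $B$. Both constructions are polynomial. I claim $G$ has a vertex cover of size at most $k$ if and only if $G'$ admits a set of at most $k$ vertices whose deletion gives cd-chromatic number at most $q$.

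For the forward direction, take $S$ to be a minimum vertex cover of $G$, so $|S|\le k$, the set $I:=V(G)\setminus S$ is independent, and $I\neq\emptyset$ (taking $S$ minimum is precisely what rules out the bad case $I=\emptyset$ for $q=2$). For $q=2$, $G'-S=K_{k+1,|I|}$ with both sides nonempty, and $\chi_{cd}(K_{m,n})=2$ for $m,n\ge1$. For $q=3$, colouring $A,B,I$ by colours $1,2,3$ is proper, and a vertex of $B$ dominates class $A$ while a vertex of $A$ dominates both class $B$ and class $I$, so $\chi_{cd}(G'-S)\le3$. Conversely, let $|S'|\le k$ with $\chi_{cd}(G'-S')\le q$ and set $S'_G=S'\cap V(G)$; it suffices to show $S'_G$ is a vertex cover, since $|S'_G|\le k$. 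Because the gadget has $k+1$ copies on each relevant side while $|S'|\le k$, at least one gadget vertex $a$ survives in $G'-S'$ (for $q=3$, one $a\in A$ and one $b\in B$ survive), still adjacent to all of $V(G)\setminus S'_G$. A cd-colouring is in particular proper, so for $q=2$ the vertex $a$ forces all of $V(G)\setminus S'_G$ into the single colour class avoiding $a$, and for $q=3$ the pair $a,b$ together with any $v\in V(G)\setminus S'_G$ induces a triangle, forcing $v$ to the one colour not used by $a$ and $b$. Either way $V(G)\setminus S'_G$ is monochromatic, hence independent, so $S'_G$ covers every edge of $G$. (If $V(G)\setminus S'_G=\emptyset$ this is trivial.)

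The main obstacle I anticipate is exactly this converse direction, and within it the need for the $(k+1)$-fold redundancy. With only a constant-size gadget, a deletion set could erase the whole gadget, after which $G-S'_G$ would only have to be bipartite (for $q=2$) or $3$-cd-colourable (for $q=3$)---and e.g.\ a disjoint union of copies of $C_5$ is $3$-cd-colourable but far from edgeless---so one would recover merely an odd cycle transversal, not a vertex cover, and the equivalence would break. Since $\mathcal G=\{G\mid\chi_{cd}(G)\le q\}$ is not hereditary there is no shortcut via \cite{node-del}, so this robustness argument is genuinely necessary. The remaining work---checking that the exhibited colourings are proper and have every colour class dominated, and the degenerate cases $I=\emptyset$ and $V(G)=S'_G$---is routine.
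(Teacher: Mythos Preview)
Your proof is correct but takes a genuinely different route from the paper's. The paper gives a single uniform reduction from $(q-1)$-\textsc{Partization} to $q$-\textsc{cd-Partization}: add one universal vertex $v$ together with $k+q+2$ pendant leaves. The pendants make $v$ undeletable (removing $v$ creates more than $k+q$ isolated vertices, each needing its own cd-colour), so $v$ survives, consumes one colour, and---being universal---dominates every other class; the cd-constraint then becomes vacuous and one is left with the condition $\chi(G-S)\le q-1$. Thus the paper reduces from \textsc{Vertex Cover} for $q=2$ and from \textsc{Odd Cycle Transversal} for $q=3$.

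You instead reduce both cases from \textsc{Vertex Cover}, using two bespoke gadgets: $k+1$ independent universal vertices for $q=2$, and a $K_{k+1,k+1}$ of universal vertices for $q=3$. Your converse argument is that a surviving gadget vertex (respectively, a surviving gadget edge inducing a triangle with any $v\in V(G)$) forces all of $V(G)\setminus S'_G$ into a single colour class, hence into an independent set. This works and is arguably more elementary for $q=3$, since you never touch \textsc{OCT}. The trade-off is that the paper's construction is uniform and slots the result into the natural $(q-1)$-\textsc{Partization} $\to$ $q$-\textsc{cd-Partization} hierarchy, whereas yours needs a separate gadget per value of $q$. Your handling of the degenerate cases ($I=\emptyset$ via a minimum vertex cover, $V(G)\setminus S'_G=\emptyset$) is adequate.
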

	\begin{proof}
		The problem is in \NP\ as determining if the cd-chromatic number of a graph is at most $q \in \{1,2,3\}$ is polynomial-time solvable. Given an instance $(G,k)$ of \textsc{$q$-Partization} where $q \in \{1,2\}$, we construct the instance $(G',k)$ of \textsc{$(q+1)$-\cdp} as follows: $G'$ is obtained from $G$ by adding a new vertex $v$ adjacent to every vertex in $V(G)$ and adding $k+q+2$ new vertices $v_1,\cdots,v_{k+q+2}$ adjacent to $v$. We claim that $G$ has a set of $k$ vertices whose deletion results in a $q$-colorable graph if and only if $G'$ has a set of $k$ vertices whose deletion results in a $(q+1)$-cd-colorable graph. 
		
		Consider a set $S$ of $k$ vertices such that $\chi(G-S) \leq q$. Then, $G'-S$ is $(q+1)$-cd-colorable as a new color can be assigned to $v$ and any of the $q$ colors of $G-S$ can be assigned to $v_1,\cdots,v_{k+q+2}$. The color class containing $v$ is a singleton set. This class is dominated by all vertices in $G'-(S \setminus \{v\})$. Further, $v$ dominates each of the other $q$ color classes as $v$ is a universal vertex in $G'$. 
		
		Conversely, let $S' \subseteq V(G')$ be a minimal set of at most $k$ vertices such that $\chi_{cd}(G'-S') \leq q+1$. Now, if $v \in S'$, then vertices $v_1,\cdots,v_{k+q+2}$ are isolated in $G-\{v\}$ implying that either $|\{v_1,\cdots,v_{k+q+2}\} \cap S'| \geq k+1$ or $\chi_{cd}(G'-S') > q+1$. So, we can assume that $v \notin S'$. Further, as $S'$ is minimal, it follows that $\{v_1,\cdots,v_{k+q+2}\} \cap S'=\emptyset$. Also, as $v$ is a universal vertex in $G'$, we have that $\chi(G-(S' \setminus \{v\})) \leq q$. So, $S'$ is a subset of $V(G)$ of size at most $k$ such that $G-S'$ is $q$-colorable. 
	\end{proof}
	\subsection{\NP-hardness and Fixed-Parameter Tractability in Split Graphs}
	A graph is a {\em split graph}  if its vertex set can be partitioned into a clique and an independent set. As split graphs are perfect (clique number is equal to the chromatic number for every induced subgraph), we have the following observation.
	\begin{obs}
		\label{split-r-col}
		A split graph $G$ is $r$-colorable if and only if $\omega(G) \leq r$. 
	\end{obs}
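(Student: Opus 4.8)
The statement is the conjunction of two implications, one of which holds for arbitrary graphs and one of which is a direct consequence of a quoted structural fact, so the plan is simply to treat the two directions separately.

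For the forward direction I would note that this requires nothing about split graphs. If $f\colon V(G)\to[r]$ is a proper $r$-coloring and $K$ is a clique with $|K|=\omega(G)$, then the restriction $f|_K$ is injective, since any two vertices of $K$ are adjacent and hence receive distinct colors; therefore $\omega(G)=|K|\le r$. This already gives one implication for every graph.

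For the converse I would invoke the fact, recorded just above the statement, that split graphs are perfect, i.e. $\chi(G')=\omega(G')$ for every induced subgraph $G'$ of $G$, and in particular $\chi(G)=\omega(G)$. Hence if $\omega(G)\le r$ then $\chi(G)\le r$, so $G$ has a proper $r$-coloring. Combining the two implications yields the claim. If one wishes to avoid appealing to perfection, the converse also admits a short self-contained proof from the split partition $V(G)=K\uplus I$ with $K$ a clique and $I$ independent: give the $|K|$ vertices of $K$ pairwise distinct colors; each $v\in I$ with a non-neighbour in $K$ is colored with the color of such a non-neighbour; and all remaining vertices of $I$ (those adjacent to every vertex of $K$) are pairwise non-adjacent and so can all share one additional color. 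This uses at most $|K|+1$ colors, and the extra color is used exactly when some $v\in I$ is complete to $K$, in which case $K\cup\{v\}$ is a clique and $\omega(G)=|K|+1$; otherwise $\omega(G)=|K|$. Either way the number of colors used equals $\omega(G)\le r$.

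There is essentially no obstacle here; the only point requiring a moment's care is the last one in the self-contained version, namely that the vertices of $I$ that are complete to $K$ can indeed be colored without a blow-up — which works precisely because $I$ is independent, so a single fresh color suffices for all of them simultaneously, and this fresh color is needed exactly when $\omega(G)$ exceeds $|K|$.
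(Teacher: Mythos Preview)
Your proposal is correct and matches the paper's approach exactly: the paper does not give a proof of this observation at all beyond the preceding sentence ``As split graphs are perfect (clique number is equal to the chromatic number for every induced subgraph), we have the following observation,'' which is precisely your appeal to perfection. Your additional self-contained argument from the split partition is sound and more than the paper provides.
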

	The following result is known for the corresponding deletion problem.
	\begin{theorem}[\cite{cor,yan}]
		\textsc{Partization on Split Graphs} is \NP-complete.
	\end{theorem}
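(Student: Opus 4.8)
The plan is to establish membership in \NP, which is immediate, and then to give a polynomial-time many-one reduction from \textsc{Independent Set} to \textsc{Partization on Split Graphs}. Throughout I would use Observation~\ref{split-r-col} to replace the condition ``$\chi(G-S)\le q$'' by the cleaner equivalent ``$\omega(G-S)\le q$''. For \NP-membership, note that an induced subgraph of a split graph is again a split graph, so given a candidate set $S$ one can verify a yes-certificate by computing $\omega(G-S)$, which is polynomial on split graphs.

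For the hardness I would start from an instance $(H,p)$ of \textsc{Independent Set} with $V(H)=\{h_1,\dots,h_n\}$ and build a split graph $G$ as follows. The clique side is $C=\{c_1,\dots,c_n\}$, one vertex per vertex of $H$, with all $\binom{n}{2}$ edges present among them. The independent side is $I=\{x_e : e\in E(H)\}$, one vertex per edge of $H$; for $e=\{h_i,h_j\}$ the vertex $x_e$ is made adjacent to every vertex of $C$ \emph{except} $c_i$ and $c_j$. Then $C$ is a clique, $I$ is independent, so $G$ is a split graph, and the construction is clearly polynomial. Finally I would set $q=n-p$ and $k=p$.

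The crux is that this choice of budget forces the whole deletion set inside $C$. If $\omega(G-S)\le q$, then the clique $C\setminus S$ has size at most $q=n-p$, so $|S\cap C|\ge p$; together with $|S|\le k=p$ this forces $|S\cap C|=p$ and $S\cap I=\emptyset$, so $S$ is a $p$-element subset of $C$, which we identify with a $p$-element subset of $V(H)$. The only remaining cliques of $G-S$ are the sets $\{x_e\}\cup\big(N(x_e)\cap(C\setminus S)\big)$; for $e=\{h_i,h_j\}$ this equals $\{x_e\}\cup\big(C\setminus(S\cup\{c_i,c_j\})\big)$, of size $n+1-|S\cup\{c_i,c_j\}|$. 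This is at most $q=n-p$ for every edge $e$ if and only if $|S\cup\{c_i,c_j\}|\ge p+1$ for every edge, i.e.\ $S$ contains at most one of $c_i,c_j$ for every edge $\{h_i,h_j\}$ of $H$ --- exactly the statement that $S$ is an independent set of $H$. Conversely, if $S\subseteq V(H)$ is independent with $|S|=p$, the same computation shows $|C\setminus S|=n-p=q$ and every $\{x_e\}$-clique has size at most $q$, so $\omega(G-S)\le q$ and $G-S$ is $q$-colorable by Observation~\ref{split-r-col}. Hence $(G,k,q)$ is a yes-instance if and only if $H$ has an independent set of size $p$.

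I expect the one genuinely delicate point to be calibrating $q$ and $k$: taking $q=n-p$ and $k=p$ is what pins the entire deletion set inside the central clique and leaves no slack to delete vertices of $I$, and it is precisely this rigidity that converts the requirement ``every $x_e$-clique stays small'' into the clean combinatorial condition ``$S$ is independent in $H$''. Once the gadget is set up this way the rest is routine verification, and \NP-completeness of \textsc{Independent Set} finishes the argument.
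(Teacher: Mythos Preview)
Your argument is correct. The paper does not actually prove this theorem: it is quoted as a known result from \cite{cor,yan}, and the only remark the paper makes is that the original hardness was obtained ``by a reduction from \textsc{Set Cover}''. So there is no in-paper proof to match; you are supplying one where the paper supplies none.

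That said, your route genuinely differs from the one the paper alludes to. The cited reduction (and the paper's own subsequent reduction for \textsc{cd-Partization} in Theorem~\ref{split-hard}) starts from \textsc{Set Cover}: elements sit on the independent side, sets sit on the clique side, non-containment becomes adjacency, and the deletion budget is $m-k$ on the clique side. Your construction instead starts from \textsc{Independent Set}, puts the \emph{vertices} of $H$ on the clique side and the \emph{edges} of $H$ on the independent side, and calibrates $k=p$, $q=n-p$ so that the entire deletion set is forced into the clique. What the Set Cover route buys is a direct pipeline to the later results in the paper (the $W[2]$-hardness in $q$ and the kernel lower bound in $q+k$ both piggyback on Set Cover parameters); what your route buys is a slightly cleaner verification, since the ``every $x_e$-clique stays small'' condition translates on the nose into ``$S$ hits at most one endpoint of every edge''. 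Either reduction proves the statement; yours is a valid alternative.

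One cosmetic point: when you write ``the only remaining cliques of $G-S$ are the sets $\{x_e\}\cup(N(x_e)\cap(C\setminus S))$'', you should say ``the only \emph{maximal} cliques besides $C\setminus S$ itself''. You do handle $C\setminus S$ correctly elsewhere, so this is phrasing rather than a gap.
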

	This hardness  was shown by a reduction from \textsc{Set Cover} \cite{garey}. We modify this reduction to show that \cdp\ is \NP-complete on split graphs. The problem is in \NP\ as the cd-chromatic coloring of a split graph can be verified in polynomial time due to the following result.
	
	\begin{theorem} [\cite{caldam16}]
		\label{split-cd}
		If $G$ is a connected split graph $G$, then $\omega(G)=\chi_{cd}(G)$. Furthermore, there is an $\Oh({|V(G)|}^2)$ time algorithm that returns a minimum cd-coloring of $G$. 
	\end{theorem}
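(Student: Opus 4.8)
The plan is to prove the equality as two inequalities and to extract the $\Oh(|V(G)|^2)$ algorithm from the (constructive) proof of the harder one.

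\emph{The trivial direction.} A cd-coloring is in particular a proper coloring, so $\chi_{cd}(G)\ge\chi(G)$, and by Observation~\ref{split-r-col} (applied with $r=\omega(G)$) we have $\chi(G)=\omega(G)$; hence $\chi_{cd}(G)\ge\omega(G)$.

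\emph{Normalizing the split partition.} For the reverse inequality I would start from any split partition $(C,I)$, $C$ a clique and $I$ independent, and repeatedly move into $C$ every vertex of $I$ adjacent to all of $C$; this keeps $I$ independent and ends with $C$ a \emph{maximal} clique of $G$. The first claim to establish is that then $|C|=\omega(G)$: a maximum clique $K^{\star}$ meets the independent set $I$ in at most one vertex $v$, and if it does, maximality of $C$ provides a vertex $u\in C$ with $uv\notin E(G)$, so $(K^{\star}\setminus\{v\})\cup\{u\}$ is a clique of size $\omega(G)$ lying inside $C$. Write $C=\{c_1,\dots,c_t\}$ with $t=\omega(G)$; if $t=1$ then $I=\emptyset$ (otherwise connectivity forces a vertex of $I$ adjacent to $c_1$, contradicting maximality) and $G=K_1$, so assume $t\ge 2$. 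Since $G$ is connected and $I$ is independent, each $v\in I$ has $\emptyset\ne N(v)\subseteq C$, and maximality of $C$ gives $N(v)\ne C$; thus $\overline{N}(v)=\{\,j\in[t]: c_j\notin N(v)\,\}$ is a proper nonempty subset of $[t]$.

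\emph{The construction.} Let $\sigma$ be the cyclic permutation of $[t]$ with $\sigma(j)=j+1$ for $j<t$ and $\sigma(t)=1$; it is fixed-point-free and its only invariant subsets are $\emptyset$ and $[t]$. Color $c_j$ with color $j$; and for each $v\in I$, since $\sigma$ does not fix the proper nonempty set $\overline{N}(v)$, there is $j\in\overline{N}(v)$ with $\sigma(j)\notin\overline{N}(v)$, that is, $c_j\notin N(v)$ while $c_{\sigma(j)}\in N(v)$ — color $v$ with that $j$. One checks that this is a proper coloring with exactly $t=\omega(G)$ colors (proper on the clique $C$, each $v\in I$ avoids the color of an adjacent $c_j$, and $I$ is independent), and that color class $j$ consists of $c_j$ together with vertices of $I$ that are all adjacent to $c_{\sigma(j)}$; since $\sigma(j)\ne j$, the edge $c_j c_{\sigma(j)}$ is present, so $c_{\sigma(j)}$ dominates color class $j$. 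This exhibits a cd-coloring with $\omega(G)$ colors, so $\chi_{cd}(G)\le\omega(G)$, and equality follows.

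\emph{Algorithm and the main obstacle.} The argument is effective: a split partition and its extension to a maximal clique can be computed in $\Oh(|V(G)|^2)$ time, after which each of the at most $|V(G)|$ vertices of $I$ needs a single scan of its neighbourhood ($\Oh(|V(G)|)$ time) to select a valid color $j$, the dominator $c_{\sigma(j)}$ being immediate — which gives the claimed $\Oh(|V(G)|^2)$ bound. I expect the delicate point to be the normalization step, i.e.\ justifying that $C$ may be taken maximal and that this forces $|C|=\omega(G)$ so that the construction uses exactly $\omega(G)$ colors; everything afterward is a short verification. A secondary subtlety is the self-domination convention, which is why connectivity is assumed and why only clique vertices (never a vertex itself) are ever used as dominators.
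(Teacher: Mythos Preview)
The paper does not prove this theorem at all: it is quoted verbatim from the reference and used as a black box, so there is no in-paper argument to compare your proposal against.

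That said, your proof is correct and pleasant. The lower bound via perfection is immediate, and for the upper bound the cyclic-shift construction is an elegant device: by choosing for each $v\in I$ an index $j$ with $c_j\notin N(v)$ but $c_{\sigma(j)}\in N(v)$, you simultaneously secure properness (against $c_j$) and a dominator ($c_{\sigma(j)}$) for the whole class. The normalization step---pushing $I$-vertices into $C$ until $C$ is maximal and then arguing $|C|=\omega(G)$ via the swap $(K^\star\setminus\{v\})\cup\{u\}$---is exactly what is needed and is correctly justified. The running-time analysis is straightforward.

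One cosmetic point: the degenerate case $G=K_1$ does not satisfy $\omega(G)=\chi_{cd}(G)$ under the open-neighbourhood convention the paper uses (no vertex can dominate a singleton component), so the statement should be read for $|V(G)|\ge 2$. You flag this correctly as a convention issue rather than a defect in the argument.
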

	
	\begin{theorem}
		\label{split-hard}
		\textsc{\cdp} on split graphs is \NP-hard.
	\end{theorem}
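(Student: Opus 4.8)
The plan is to reduce from \textsc{Set Cover} (which is \NP-complete \cite{garey}), reusing the reduction of \cite{cor,yan} that proves \textsc{Partization on Split Graphs} \NP-complete and adapting it to the cd-coloring setting. (Membership in \NP\ is immediate from Theorem~\ref{split-cd}, so only hardness needs an argument.) The conceptual bridge is that, by Theorem~\ref{split-cd} together with Observation~\ref{split-r-col}, on a \emph{connected} split graph one has $\chi_{cd} = \omega = \chi$; hence, on any split graph $G$ for which $G - S$ stays connected for every candidate solution $S$ with $|S|\le k$, the questions ``$\chi_{cd}(G-S)\le q$'' and ``$\chi(G-S)\le q$'' coincide, and the correctness analysis of the \textsc{Partization} reduction transfers directly.

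The only thing that can go wrong is disconnection: deleting vertices from a split graph never raises $\omega$ (hence never raises $\chi$), but it can raise $\chi_{cd}$ by creating isolated vertices, since, combining Observation~\ref{obs:graph-connected} with Theorem~\ref{split-cd} and the fact that the non-trivial part of a split graph is connected, one gets $\chi_{cd}(G') = \omega(G') + (\text{number of isolated vertices of }G')$ whenever the split graph $G'$ has an edge. I would therefore modify the construction so that no cheap solution can profit from isolating a vertex: add a set of $k+1$ ``guard'' vertices to the clique side, each made adjacent to every vertex of the independent side (so that at most $k$ deletions cannot isolate any independent-side vertex), raise the target to $q' := q + (k+1)$ to absorb the enlarged clique, and attach to each guard a private pendant vertex on the independent side, so that deleting a guard isolates a pendant and costs an extra colour.

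With this in place, the forward direction is routine: a set cover of size $k$ yields, exactly as in \cite{cor,yan}, a deletion of $k$ vertices lying entirely inside the original clique; all guards survive, so $G-S$ is connected, and the parameters are chosen so that $\omega(G-S)\le q'$, giving $\chi_{cd}(G-S)=\omega(G-S)\le q'$. For the reverse direction one argues that a solution $S$ with $|S|\le k$ must avoid every guard (deleting one would isolate its pendant and force $\chi_{cd}(G-S)\ge\omega(G-S)+1$, which, since $|C\setminus S|$ is still large, exceeds $q'$), hence need not touch any pendant either, and therefore spends all $k$ deletions inside the original clique; a short counting of $|C\setminus S|$ against $q'$ then shows that $S$ is a valid \textsc{Partization} solution of the unpadded instance and hence encodes a set cover of size $k$.

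The main obstacle is precisely this disconnection accounting. One cannot invoke the \textsc{Partization}-on-split reduction as a black box, because its correctness proof manipulates $\chi$, not $\chi_{cd}$, and the two diverge exactly on disconnected split graphs; the guard/pendant padding and the shifted target $q'$ must be tuned so that the identity $\chi_{cd}(G-S)=\omega(G-S)+\#\{\text{isolated vertices}\}$ makes every deletion other than of original clique vertices strictly wasteful. Verifying this --- in particular that no solution can gain by isolating an independent-side vertex, and that the clique-size bookkeeping is tight in both directions --- is where the real work of the proof lies.
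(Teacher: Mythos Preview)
Your plan is correct and follows the same overall strategy as the paper: reduce from \textsc{Set Cover}, build the standard split graph (clique $C$ indexed by sets, independent set $I$ indexed by elements, non-edges encoding membership), and then pad with a gadget that forces connectivity of $G-S$ for every small $S$, so that Theorem~\ref{split-cd} collapses the cd-chromatic bound to a clique-size bound and the \textsc{Partization} correctness argument carries over.

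The difference is only in the gadget. You add $k+1$ universal ``guard'' clique-vertices, each with a private pendant, and shift the target by $k+1$. The paper instead adds a \emph{single} universal vertex $w_0$ together with $k+k'+2$ pendants attached only to it, shifting the target by just~$1$. Its reverse direction is then a two-line argument: deleting $w_0$ would isolate more pendants than the entire budget can absorb, so any minimal solution keeps $w_0$; one surviving universal vertex already makes $G-S$ connected, and minimality rules out deleting pendants, so the solution lies inside $C\cup I$ and the clique-size calculation finishes exactly as in \textsc{Partization}. Your multi-guard construction works as well---the counting you sketch is sound, since the enlarged clique sits exactly $k$ above $q'$, so all $k$ deletions must hit the clique, and each guard deletion then either isolates a pendant or burns a second deletion on it---but it needs the per-guard case analysis you yourself flag as ``the real work'', whereas the single-universal-vertex gadget sidesteps that entirely.
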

	\begin{proof}
		Consider a \textsc{Set Cover} instance $(U, \mathcal{F}, k)$ where $U=\{x_1,\cdots,x_n\}$ is a finite set and $\mathcal{F}$ is a family $\{S_1,\cdots,S_m\}$ of subsets of $U$. The problem is to determine if there is a collection of at most $k$ sets in $\mathcal{F}$ such that each element of $U$ is in at least one set of the collection. The corresponding instance of \textsc{$cd$-Partization} is $(G, k'=m-k, q=k+1)$ where $G$ is a split graph on the vertex set $C \cup I \cup \{w_0,w_1,\cdots,w_{k+k'+2}\}$ where $C=\{u_i \mid S_i \in \mathcal{F}\}$ and $I=\{v_i \mid x_i \in U\}$. Also, $(v_i,u_j) \in E(G)$ if and only if $x_i \notin S_j$ and $w_0$ is adjacent to every vertex in $C \cup I \cup \{w_1,\cdots,w_{k+k'+2}\}$. Further, $I \cup \{w_1,\cdots,w_{k+k'+2}\}$ and $C$ induce an independent set and a clique, respectively, in $G$. We claim that a set $\mathcal{F}' \subseteq \mathcal{F}$ of size $k$ is a set cover if and only if $G-S'$ is $q$-cd-colorable where $S'=\{u_i \in C \mid S_i \in \mathcal{F} \setminus \mathcal{F}'\}$ and $|S'|=k'$. 
		
		Consider a set cover $\mathcal{F}' \subset \mathcal{F}$ of size $k$. If there is a clique $Q$ (without loss of generality assume $w_0 \in Q$) of size $k+2$ in $G -S'$, then $Q$ must contain an element $v_i \in I$ that is adjacent to $k$ vertices in $C\setminus S'$. However, since $\mathcal{F}'$ is a set cover, $v_i$ is non-adjacent to at least one $u_j$ in $C\setminus S'$ leading to a contradiction. Thus, $S'$ has a non-empty intersection with every $(k+2)$-clique in $G$. As $G$ is a split graph, it is $(k+1)$-colorable due to Observation \ref{split-r-col}. Further, $G-S'$ is $(k+1)$-cd-colorable as the color class containing $\{w_0\}$ is a singleton set (since it is an universal vertex) which is dominated by itself and the other color classes are dominated by $w_0$. 
		
		Conversely, consider a minimal subset $S'$ of $k'$ vertices such that $G - S'$ is $(k+1)$-cd-colorable. Now, if $w_0 \in S'$, then vertices $w_1,\cdots,u_{k+k'+2}$ are isolated in $G-\{w_0\}$ implying that either $|\{w_1,\cdots,w_{k+k'+2}\} \cap S'| \geq k'+1$ or $\chi_{cd}(G-S') > k+1$. So, we can assume that $w_0 \notin S'$. Further, as $S'$ is minimal, it follows that $\{w_1,\cdots,w_{k+k'+2}\} \cap S'=\emptyset$. Now, all vertices in $S'$ must belong to $C$. If there exists $v_i \in S' \cap I$, there is a clique of size $k+2$ in $G - S'$ as $C$ is a clique. Also, no vertex in $I$ is adjacent to all nodes in $C \setminus S'$ as if there is such a vertex $v_i$ then there is a $(k+2)$-clique in $G - S'$. Thus, every vertex in $I$ is nonadjacent to at least one element in $C \setminus S'$ implying that $\{s_i \in \mathcal{F} \mid u_i \in C \setminus S'\}$ is a set cover of $(U, \mathcal{F})$ of size at most $k$. 
\end{proof}
		
As \textsc{Set Cover} parameterized by solution size is $\W[2]$-hard \cite{fpt-book}, we have the following result.
\begin{corollary}
		\textsc{\cdp} on split graphs parameterized by $q$ is $\W[2]$-hard.
	\end{corollary}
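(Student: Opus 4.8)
The plan is to observe that the polynomial-time construction given in the proof of Theorem~\ref{split-hard} is already a parameterized reduction once we take $q$ as the parameter of the target problem. Recall that, given a \textsc{Set Cover} instance $(U,\mathcal{F},k)$ with $|U|=n$ and $|\mathcal{F}|=m$, that reduction outputs the split-graph instance $(G,k'=m-k,q=k+1)$ of \cdp, and Theorem~\ref{split-hard} already establishes that $(U,\mathcal{F},k)$ is a yes-instance of \textsc{Set Cover} if and only if $(G,k',q)$ is a yes-instance of \cdp. So the equivalence of instances is in hand; what remains is to check the two bookkeeping conditions required of a parameterized reduction.

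First, the construction runs in time polynomial in $n+m$: the graph $G$ has $|C|+|I|+(k+k'+3)=m+n+(m+3)=\Oh(n+m)$ vertices, its adjacency relation between $I$ and $C$ is read off directly from the membership relation of the set system, and the universal vertex $w_0$ together with the auxiliary vertices $w_1,\dots,w_{k+k'+2}$ and their incidences are added in linear time. In particular this is an \FPT-time (indeed polynomial-time) procedure. Second, and this is the only point that genuinely has to be verified, the parameter of the constructed instance must be bounded by a function of the source parameter: here $q=k+1$, which is a function of $k$ alone. Crucially, $q$ does not depend on $n$ or $m$, even though the deletion budget $k'=m-k$ does; this is harmless, since a parameterized reduction imposes a bound only on the parameter, not on the budget.

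Putting these together, the map $(U,\mathcal{F},k)\mapsto(G,\,m-k,\,k+1)$ is a parameterized reduction from \textsc{Set Cover} parameterized by the solution size $k$ to \cdp\ on split graphs parameterized by $q$. Since \textsc{Set Cover} parameterized by the solution size is $\W[2]$-hard~\cite{fpt-book}, and $\W[2]$-hardness is preserved under parameterized reductions, it follows that \cdp\ on split graphs parameterized by $q$ is $\W[2]$-hard. I do not anticipate any real obstacle here: essentially all of the combinatorial work was already carried out in Theorem~\ref{split-hard}, and the corollary amounts to observing that the reduction established there is parameter-preserving in the sense appropriate to the parameter $q$.
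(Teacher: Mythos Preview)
Your proof is correct and follows exactly the approach the paper intends: the corollary is stated immediately after Theorem~\ref{split-hard} with only the remark that \textsc{Set Cover} parameterized by solution size is $\W[2]$-hard, so the paper treats it as an immediate consequence of that reduction being parameter-preserving (since $q=k+1$). Your write-up simply makes explicit the verification that the construction is a parameterized reduction, which the paper leaves implicit.
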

	
	 Now, we show that the problem is $\FPT$ with respect to $q$ and $k$.
	
	\begin{theorem} \cdp\ on split graphs is $\FPT$ with respect to parameters $q$ and $k$.
		Furthermore, the problem does not admit a polynomial kernel unless \containment.
	\end{theorem}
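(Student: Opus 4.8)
Fix a partition $V(G)=C\uplus I$ with $C$ a clique and $I$ an independent set (computable in polynomial time). Since a cd-coloring is in particular a proper coloring, $\chi_{cd}(G-S)\ge \omega(G-S)\ge |C|-|S|$ for every $S$; hence if $|C|>q+k$ we may answer \NO\ outright. Otherwise $|C|\le q+k$, and we branch over all $2^{|C|}\le 2^{q+k}$ choices of $S_C:=S\cap C$ (discarding those with $|S_C|>k$ or $|C\setminus S_C|>q$). For a fixed guess write $C'=C\setminus S_C$ and let $k''=k-|S_C|$ be the remaining budget; it remains to decide which vertices of $I$ to delete so that $\chi_{cd}(G[C'\cup(I\setminus S_I)])\le q$.

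For the fixed $C'$, partition $I$ into $A=\{v\in I: C'\subseteq N(v)\}$, $Z=\{v\in I: N(v)\cap C'=\emptyset\}$ and $M=I\setminus(A\cup Z)$. For any $I'\subseteq I$ and $C'\neq\emptyset$, in the split graph $G'=G[C'\cup I']$ the set $C'\cup(A\cap I')\cup(M\cap I')$ induces a connected split graph $H$ with $\omega(H)=|C'|+1$ if $A\cap I'\neq\emptyset$ and $|C'|$ otherwise, while every vertex of $Z\cap I'$ is isolated in $G'$; so by Theorem~\ref{split-cd} and Observation~\ref{obs:graph-connected}, $\chi_{cd}(G')=|C'|+[A\cap I'\neq\emptyset]+|Z\cap I'|$, and $\chi_{cd}(G')=|I'|$ when $C'=\emptyset$. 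Consequently deleting a vertex of $M$ never decreases $\chi_{cd}(G')$, and the vertices within $A$ (respectively within $Z$) are interchangeable, so the residual problem reduces to a two-line optimization: either keep at least one vertex of $A$ and delete enough of $Z$ so that at most $q-|C'|-1$ of its vertices survive, or delete all of $A$ and delete enough of $Z$ so that at most $q-|C'|$ of its vertices survive. We compute the cheaper of these two feasible options and compare it with $k''$; the original instance is a \YES-instance iff some branch succeeds. This runs in $2^{\Oh(q+k)}\cdot n^{\Oh(1)}$ time, which proves fixed-parameter tractability.

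\textbf{Plan for the kernel lower bound.} I would observe that the reduction in the proof of Theorem~\ref{split-hard} is a polynomial parameter transformation with respect to $q+k$: after the trivial reduction rule that answers \YES\ for a \textsc{Set Cover} instance with $m$ sets whenever the target cover size is at least $m$, we may assume the target size $k_{SC}$ satisfies $k_{SC}<m$, and the reduction then outputs a \cdp\ instance on a split graph with $q=k_{SC}+1$ and deletion budget $m-k_{SC}$, so $q+k\le 2m$. Since \textsc{Set Cover} parameterized by the number of sets (equivalently, \textsc{Hitting Set} parameterized by the size of its universe) admits no polynomial kernel unless \containment\ --- established by the standard ``colors and IDs'' cross-composition (see, e.g., \cite{fpt-book}) --- and polynomial parameter transformations transfer kernel incompressibility for \NP-hard problems, it follows that \cdp\ on split graphs parameterized by $q$ and $k$ admits no polynomial kernel unless \containment. (Alternatively, one can cross-compose directly into \cdp\ on split graphs by layering a selector gadget on top of that reduction.)

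\textbf{Expected main obstacle.} The delicate point is the exact evaluation of $\chi_{cd}$ of a split graph obtained after vertex deletions, in particular correctly charging for the isolated vertices newly created inside $I$ and for the single extra clique vertex contributed by an $A$-vertex; this identity is exactly what makes the branching algorithm correct, and (in the guise of clique-number bookkeeping) is also what underlies the correctness of the Theorem~\ref{split-hard} reduction invoked for the lower bound. The composition/transformation bookkeeping for the incompressibility part is then routine once the source lower bound is cited.
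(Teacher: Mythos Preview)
Your proposal is correct, but the \FPT\ argument takes a noticeably different route from the paper's. The paper uses a straightforward bounded search tree: compute a maximum clique $Q$; if $|Q|\le q$ then by Theorem~\ref{split-cd} the instance is already a \YES; otherwise pick any $q{+}1$ vertices of $Q$, note that every solution must delete at least one of them, and branch---this gives an $\Oh^*((q+1)^k)$ algorithm in two lines. Your approach instead bounds $|C|\le q+k$, enumerates all $2^{q+k}$ choices of $S\cap C$, and for each residual clique $C'$ derives the exact formula $\chi_{cd}(G[C'\cup I'])=|C'|+[A\cap I'\neq\emptyset]+|Z\cap I'|$ from Theorem~\ref{split-cd} and Observation~\ref{obs:graph-connected}, reducing the decision on $I$ to a two-case cost comparison. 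This is more work but yields a structural picture of which deletions on the independent side can ever help (namely: vertices in $M$ are useless, $A$ is all-or-nothing, and only the count of surviving $Z$-vertices matters), which the paper's hitting-set branching does not expose. The running-time guarantees $(q+1)^k$ versus $2^{q+k}$ are incomparable in general.

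For the kernel lower bound your argument is essentially the paper's: the reduction of Theorem~\ref{split-hard} produces an instance with $q+k=m+1$ (so your bound $q+k\le 2m$ is looser than necessary but fine), and one then invokes the known incompressibility of \textsc{Set Cover} parameterized by $k'+m$ (equivalently by $m$) from~\cite{fpt-book}. Your added preprocessing step handling $k_{SC}\ge m$ is harmless and the paper leaves it implicit.
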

	\begin{proof}
		Compute a maximum clique $Q$ of $G$ in polynomial time. If $|Q| \leq q$, then the input instance is an YES instance as $\chi_{cd}(G) \leq q$ from Theorem \ref{split-cd}. Otherwise, choose an arbitrary subset of size $q+1$ from $Q$. Since any solution contains at least one of the $q+1$ vertices, a straightforward branching algorithm runs in $\Oh^*((q+1)^k)$ time. Now, we move on to the kernelization hardness. \textsc{Set Cover} is known not to admit a polynomial kernel when parameterized by the solution size $k'$ and family size $m$ as combined parameters unless \containment~\cite{fpt-book}.  
		The reduction in Theorem \ref{split-hard} produces instances of  \cdp\ where solution size $k$ is $m-k'$ and $q$ is $k'+1$ implying that $q+k$ is $m+1$. Therefore, an $(q+k)^{\Oh(1)}$ kernel for \cdp\ implies an $m^{\Oh(1)}$ kernel for \textsc{Set Cover} which is unlikely.
		
	\end{proof}
\section{Deletion to 3-cd-Colorable Graphs}
In a graph $G$, an edge $e=(u,v)$ is said to be a dominating edge if $N(u) \cup N(v)=V(G)$.
Let $\overline{N[v]}$ denote the set $V(G) \setminus N[v]$. The following characterization of 3-cd-colorable graphs is known from \cite{caldam16}.

\begin{theorem}[\cite{caldam16}] \label{3-char}
A connected graph $G$ satisfies $\chi_{cd}(G) \leq 3$ if and only if $G$ is one of the following types.\\
(Type 0) $G$ is a graph on at most 3 vertices. \\
(Type 1) $G$ is a bipartite graph with a dominating edge.\\
(Type 2) $G$ has a vertex $v$ such that $G-v$ is a bipartite graph with a dominating edge.\\
(Type 3) $G$ has an ordered pair $(x,y)$ of adjacent vertices such that, 
\vspace{-.15cm}
\begin{itemize}
 \item $V(G)=\{x,y\} \uplus X \uplus Y$, 
 \item $G[X \cup \{y\}]$ is a bipartite graph with at least one edge,
 \item $Y \cup \{x\}$ is an independent set, $Y \cup \{x\} \subseteq N(y)$ and $X \cup \{y\}  \subseteq N(x)$.
\end{itemize}
\vspace{-.15cm}
(Type 4) $G$ has an ordered set $(x,y,z)$ of vertices inducing a triangle such that,
\vspace{-.15cm}
\begin{itemize}
 \item $V(G)=\{x,y,z\} \uplus X \uplus Y \uplus Z$, 
 \item $X \subseteq N(x)$, $Y \subseteq N(y)$ and $Z \subseteq N(z)$,
 \item $X \cup \{y\}$, $Y \cup \{z\}$ and $Z \cup \{x\}$ are independent sets.
\end{itemize}
\vspace{-.15cm}
(Type 5) $G$ has an ordered triple $(x,y,z)$ of vertices such that,
\vspace{-.15cm}
\begin{itemize}
 \item $V(G)=\{x,y\} \uplus X \uplus Y \uplus Z$, 
 \item $z \in X\cup Y$, $(x,y) \notin E(G)$ and $(x,z), (y,z) \in E(G)$,
 \item $X \subseteq N(x)$, $Y \subseteq N(y)$ and $Z \subseteq N(z)$,
 \item $X$, $Y$, $Z \cup \{x\}$ and $Z \cup \{y\}$ are independent sets.
\end{itemize}
\end{theorem}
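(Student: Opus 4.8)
The plan is to prove both directions. For the ``if'' direction I would, for each of the six types, read a proper coloring with at most three classes directly off the stated decomposition and check that each class lies in the (open) neighborhood of some vertex. For instance, in Type~4 the sets $X\cup\{y\}$, $Y\cup\{z\}$, $Z\cup\{x\}$ are independent by hypothesis, they partition $V(G)$, and they are dominated by $x$, $y$, $z$ respectively using $X\subseteq N(x)$, $Y\subseteq N(y)$, $Z\subseteq N(z)$ and the triangle edges; Types~3 and~5 are analogous, while Types~0--2 reduce to the sub-claim that a connected bipartite graph with parts $A$, $B$ and a dominating edge $(a,b)$, $a\in B$, $b\in A$, is $3$-cd-colorable via the classes $A\subseteq N(a)$ and $B\subseteq N(b)$ (plus, for Type~2, the singleton $\{v\}$ dominated by any neighbor of $v$, with trivial graphs handled separately). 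These verifications are mechanical.

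The ``only if'' direction is the substantial one. Let $G$ be connected with $\chi_{cd}(G)=q\le 3$ and fix a cd-coloring with nonempty classes $C_1,\dots,C_q$, so each $C_i\subseteq N(d_i)$ for some vertex $d_i\notin C_i$. If $q=1$ the graph is trivial (Type~0). If $q=2$ then $G$ is bipartite, $d_1\ne d_2$, each $d_i$ lies in the class it does not dominate, hence $d_1\sim d_2$ and $N(d_1)\cup N(d_2)=V(G)$, so $(d_1,d_2)$ is a dominating edge and $G$ has Type~1. The real content is $q=3$. The key observation is the adjacency dichotomy: for $i\ne j$ we have $d_i\sim d_j$ unless $d_i$ and $d_j$ both belong to the third class $C_k$. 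Hence $|D|\ge 2$ for $D=\{d_1,d_2,d_3\}$, and $G[D]$ is a triangle, a path on three vertices, or (when $|D|=2$) a single edge.

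I would then dispatch these three sub-cases. If $G[D]$ is a triangle, the membership pattern of the $d_i$ in the classes is a $3$-cycle; taking $\{x,y,z\}=D$ and letting $X$, $Y$, $Z$ be the three color classes each with its resident dominator deleted gives a Type~4 decomposition. If $G[D]$ is a path, its two leaves are the pair of dominators sharing a class; declaring them $x,y$, the center $z$, and defining $X$, $Y$, $Z$ from the color classes gives a Type~5 decomposition. If $|D|=2$, the repeated dominator $d$ dominates two classes and lies in the third, so it is adjacent to precisely those two classes, which therefore span a bipartite subgraph $H$; if $H$ has an edge one reads off a Type~3 decomposition with $x=d$ and $y$ the third dominator, and if not, $G$ is bipartite with dominating edge $(d,d_3)$ and so has Type~1, with the residual borderline configurations accounted for by Type~2.

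I expect the $q=3$ bookkeeping to be the main obstacle. One must establish the adjacency dichotomy and confirm exhaustiveness, and then, in each case, fix the correct correspondence between $\{x,y,z\}$ and the dominators and between $X$, $Y$, $Z$ and the color classes before verifying every independence condition and every containment $X\subseteq N(x)$, etc., demanded by the target type. The subtle points are controlling how a dominator's neighbors distribute over several color classes when carving out $X$, $Y$, $Z$, and keeping the ``$\uplus$''-decompositions genuine when some pieces are empty; the $|D|=2$ analysis, which branches into Types~1, 2 and~3, is the most intricate part.
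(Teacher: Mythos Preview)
The paper does not prove this theorem; it is quoted from \cite{caldam16} and used as a black box for the algorithms in Section~6. There is therefore no proof in the present paper against which to compare your proposal.

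That said, your strategy is the natural one and appears sound. The ``if'' direction is indeed mechanical, and for the ``only if'' direction your adjacency dichotomy (for $i\neq j$, $d_i\sim d_j$ unless both lie in the remaining class $C_k$) is correct and forces $G[D]$ to be a triangle, a $P_3$, or a single edge, matching Types~4, 5, and the $|D|=2$ branch respectively. One remark on the last branch: your own analysis (the repeated dominator $d$ lies in $C_3$ and dominates $C_1\cup C_2$, while $d_3\in C_1\cup C_2$ dominates $C_3$) already yields Type~3 when $G[C_1\cup C_2]$ has an edge and Type~1 when it does not, so Type~2 need not arise separately in the ``only if'' direction---the theorem only asserts that $G$ is \emph{some} listed type, not that the list is irredundant. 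If you flesh this out into a full proof you will also want to check the degenerate configurations (empty pieces in the $\uplus$ decompositions, singleton classes, and the ``at least one edge'' clause in Type~3), but these are exactly the bookkeeping issues you already flagged.
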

	
	 We refer to the ordered sets in Types 3, 4 and 5 as dominators. In \cite{caldam16}, they are called as d-pair, cd-triangle and NB-triplet respectively. Now, we proceed to solve \textsc{$3$-\cdp}. Let $G$ be the input graph on $n$ vertices, $m$ edges and $k$ be a positive integer. Consider a set $S \subseteq V(G)$ such that $H=G-S$ is $3$-cd-colorable. Then, $H$ is of one of the types listed in Theorem \ref{3-char}. Before we proceed to describe algorithms for each of these types, we list the following well-known results on \textsc{Vertex Cover} and \textsc{Odd Cycle Transversal} that we use in our algorithms.
	
	\begin{theorem}[\cite{vc-fpt-best}]
		\label{vc-best}
		Given a graph $G$ and a positive integer $k$, there is an algorithm running in $\Oh^*(1.2738^k)$ time that determines if $G$ has a vertex cover of size at most $k$ or not.
	\end{theorem}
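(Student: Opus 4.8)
Since this is a result quoted from \cite{vc-fpt-best}, the plan is to reconstruct its proof as a bounded-search-tree algorithm whose exponential base is pushed down by a carefully chosen measure. First I would set up the elementary branching: maintain an instance $(G,k)$, return \YES{} if $G$ has no edges, \NO{} if $k<0$, and otherwise pick a vertex $v$ of maximum degree and recurse on $(G-v,\,k-1)$ and on $(G-N[v],\,k-\deg(v))$ --- correct because every minimum vertex cover either contains $v$ or contains all of $N(v)$. While $\Delta(G)\ge 3$ the branching vector is at least $(1,3)$, and when $\Delta(G)\le 2$ the graph is a disjoint union of paths and cycles and is solved in polynomial time; this already yields an $\Oh^*(c^k)$ algorithm with $c$ the root of $x^3=x^2+1$ (about $1.466$). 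Moreover, branching on a vertex of degree at least $7$ gives the vector $(1,7)$, whose characteristic root lies below $1.2738$, so essentially all the remaining work is confined to instances with $3\le\Delta(G)\le 6$.

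To drive the base down to $1.2738$ I would interleave reduction rules with the branching and refine the branching through a long local case analysis, all measured by a potential function rather than by $k$ alone. The reduction rules, applied exhaustively before each branch, are the standard ones: delete isolated vertices; if $\deg(v)=1$ put its neighbour in the cover; if $\deg(v)=2$ fold $v$ (put both neighbours in if they are adjacent, otherwise contract $\{v\}\cup N(v)$ to a single vertex and decrease $k$ by one); and a Nemhauser--Trotter / crown-type rule that, via a maximum matching in the bipartite double cover of $G$, shrinks the instance to at most $2k$ vertices (this keeps polynomial factors small and gives the sharper $1.2738^k+kn$ form). After these, every vertex has degree $\ge 3$, and the leftover cases --- branching vertices of degree $3,4,5,6$ --- are handled one local configuration at a time: branch on $v$ versus $N(v)$, but strengthen the ``$v$ not in the cover'' branch by also forcing the \emph{mirrors} of $v$ into the cover (a standard safe operation), and in the residual dense configurations either exhibit a further reduction or argue directly that a small dense subgraph forces enough vertices into any cover. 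The running-time bound then follows by checking that every branching vector $(a_1,\dots,a_r)$ arising in the analysis has characteristic root of $T(k)=\sum_i T(k-a_i)$ at most $1.2738$.

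The main obstacle --- and where virtually the whole proof lives --- is this last case analysis: there are many non-isomorphic local neighbourhoods of a degree-$3$ or degree-$4$ branching vertex, each demanding its own reduction or branching rule with a provably good vector, and one must simultaneously verify that no reduction ever increases the measure. A naive worst-case-per-node accounting is not tight enough; what certifies the $1.2738$ base is an amortized, measure-and-conquer style analysis in which low-degree vertices carry smaller weight, so that a ``bad'' branch is always compensated by structural simplification elsewhere in the instance. By comparison, the kernelization step and the $\Delta(G)\le 2$ base case are routine.
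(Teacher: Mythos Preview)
The paper does not prove this theorem at all: it is quoted verbatim from \cite{vc-fpt-best} and used purely as a black box in the running-time analyses of Algorithms~\ref{type1}--\ref{type5}. So there is nothing in the paper to compare your reconstruction against; any proof you supply is already ``a different route'' by default.

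That said, your sketch is a fair high-level summary of how tight branching bounds for \textsc{Vertex Cover} are obtained, but it drifts toward the toolkit of exact \textsc{Maximum Independent Set} algorithms rather than the actual Chen--Kanj--Xia argument. Mirror branching and degree-weighted measure-and-conquer are techniques that analyse running time as a function of $n$; the $1.2738^k$ bound is parameterized by $k$, and Chen--Kanj--Xia's analysis is structured differently. Their proof hinges on (i) interleaving the Nemhauser--Trotter kernel (so one may assume $n\le 2k$ throughout) with the branching, (ii) the \emph{struction} operation and a notion of ``good pairs''/tuples rather than mirrors, and (iii) an amortization that exploits the $n\le 2k$ invariant to trade drops in $n$ against drops in $k$ across consecutive branching steps. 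Your outline correctly identifies that the crux is a long local case analysis with a non-naive measure, but the specific mechanism you describe (low-degree vertices carrying smaller weight) would not, on its own, certify the constant $1.2738$ as a function of $k$. For the purposes of this paper none of this matters---the result is invoked, not reproved---so a one-line citation is the appropriate ``proof'' here.
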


	\begin{theorem}[\cite{oct-fpt-best}]
		\label{oct-best}
		Given a graph $G$ and a positive integer $k$, there is an algorithm running in $\Oh^*(2.3146^k)$ time that determines if $G$ has an odd cycle transversal of size at most $k$ or not.
	\end{theorem}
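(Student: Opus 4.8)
The plan is to obtain this by the standard iterative-compression route, sharpened with a linear-programming-based branching step; the bound itself is quoted from~\cite{oct-fpt-best}, so I only indicate the shape of the argument. First I would set up iterative compression: order $V(G)$ as $v_1,\dots,v_n$, put $G_i=G[\{v_1,\dots,v_i\}]$, and carry along an odd cycle transversal $X_i$ of $G_i$ of size at most $k$. Passing from $G_i$ to $G_{i+1}$, the set $X_i\cup\{v_{i+1}\}$ is an odd cycle transversal of $G_{i+1}$ of size at most $k+1$, so the whole task reduces, up to polynomial-time and $\Oh^*$-preserving overhead, to the \emph{compression} problem: given an odd cycle transversal $W$ of $G$ with $|W|=k+1$, either find one of size at most $k$ or certify that none exists.

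Next I would reformulate the compression problem. Since $G-W$ is bipartite, fix a proper $2$-coloring $(A,B)$ of $G-W$. A candidate set $X'$ is an odd cycle transversal of $G$ exactly when $G-X'$ is $2$-colorable; the interaction of $X'$ with $W$ --- which vertices of $W$ are deleted, and on which side the remaining ones are placed --- determines, after handling $A$ and $B$ appropriately, a pure covering/separation instance on an auxiliary graph built from two copies of each vertex of $W$ together with constraints coming from the edges of $G$. The decisive observation is that this auxiliary instance is an instance of \textsc{Vertex Cover} parameterized above the optimum of its LP relaxation, with the gap between the integral budget and the LP value bounded by $k$.

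Then I would invoke the branching algorithm for \textsc{Vertex Cover} above the LP bound. The LP relaxation of \textsc{Vertex Cover} is half-integral and enjoys the persistence (Nemhauser--Trotter) property, so one may assume every variable equals $1/2$; branching on a vertex $u$ --- either $u$ enters the cover, or all of $N(u)$ does --- strictly raises the LP optimum in each branch, hence decreases the measure ``budget minus LP value'', and a careful branching-vector analysis yields the base $2.3146$. Plugging this into each compression call and summing over the $n$ rounds gives the claimed $\Oh^*(2.3146^k)$ running time.

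The hard part is the middle step: designing the auxiliary graph so that its vertex covers are in exact correspondence with valid (side-assignment, deletion-set) pairs for the compression problem, while simultaneously keeping the instance's excess over its LP optimum bounded by the remaining budget --- this is precisely what stops the running time from degrading to the $3^k$ one gets by naively enumerating how $W$ splits into deleted, left, and right parts. Getting that gap bound right, together with the bookkeeping needed so that the dominant term across all compression rounds remains $2.3146^k$, is the crux.
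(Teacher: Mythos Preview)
The paper does not prove this theorem at all: it is stated with a citation to \cite{oct-fpt-best} and then used as a black box in the algorithms of Section~6. There is therefore no ``paper's own proof'' to compare your proposal against.

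That said, your sketch is a reasonable high-level outline of the argument in the cited reference (iterative compression, reduction of the compression step to \textsc{Vertex Cover} parameterized above the LP lower bound, and LP-based branching with half-integrality and persistence to obtain the base $2.3146$). For the purposes of the present paper, none of this detail is needed; the theorem is invoked only as a subroutine.
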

	 Here we use notation $\Oh^*$ to suppress functions which are polynomial in size of input. As we would subsequently show, our algorithms reduce the problem of finding an optimum deletion set into finding appropriate vertex covers and constrained odd cycle transversals. The current best parameterized algorithm for finding a vertex cover can straightaway be used as a subroutine in our algorithm while the one for finding an odd cycle transversal requires the following results. Consider a graph $G$ and let $v$ be a vertex in $G$. Define the graph $G'$ to be the graph obtained from $G$ by deleting $v$ and adding a new vertex $v_{ij}$ for each pair $v_i$, $v_j$ of neighbors of $v$; further $v_{ij}$ is adjacent to $v_i$ and $v_j$. 
	
	\begin{lemma}
		\label{constrained-oct1}
		$G$ has a minimal odd cycle transversal of size at most $k$ that excludes vertex $v$ if and only if $G'$ has a minimal odd cycle transversal of size at most $k$.
	\end{lemma}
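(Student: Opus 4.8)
The plan is to establish both directions of the equivalence by relating odd cycles through $v$ in $G$ to odd cycles through the gadget vertices $v_{ij}$ in $G'$. The key structural observation is that any cycle in $G$ passing through $v$ enters and leaves $v$ along two distinct neighbours $v_i, v_j$, and in $G'$ this ``length-2 passage through $v$'' (the path $v_i\,v\,v_j$) is replaced by the length-2 path $v_i\,v_{ij}\,v_j$; hence the parity of such a cycle is preserved when we translate between the two graphs. Cycles of $G$ avoiding $v$ are unchanged in $G' - \{v_{ij} : i,j\}$, and conversely any cycle of $G'$ using some $v_{ij}$ must use both its edges $v_iv_{ij}$ and $v_jv_{ij}$ (since $v_{ij}$ has degree exactly $2$), so it corresponds to a walk in $G$ through $v$ of the same parity. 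I would formalize this as a parity-preserving bijection between (i) the cycles of $G$ and (ii) the cycles of $G'$, where a cycle using $t$ of the new vertices $v_{ij_1},\dots,v_{ij_t}$ maps to a closed walk in $G$ visiting $v$ exactly $t$ times; a short argument (splitting the closed walk at repetitions of $v$) shows that if such a closed walk is odd then it contains an odd \emph{cycle}, which suffices for the transversal argument.

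For the forward direction, suppose $X$ is a minimal odd cycle transversal of $G$ with $|X|\le k$ and $v\notin X$. I claim $X$ is also an odd cycle transversal of $G'$: if $G'-X$ contained an odd cycle $C'$, then by the observation above $C'$ yields an odd closed walk in $G-X$ (replacing each $v_{ij}$-passage by the $v$-passage $v_i\,v\,v_j$), hence an odd cycle in $G-X$, contradicting that $X$ is an odd cycle transversal of $G$. Then I pass from $X$ to a \emph{minimal} odd cycle transversal $X'\subseteq X$ of $G'$; since $|X'|\le|X|\le k$ this gives the required minimal transversal of $G'$. (One should note $X'$ need not equal $X$, but minimality of the \emph{output} set is all that is asserted.)

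For the reverse direction, let $X'$ be a minimal odd cycle transversal of $G'$ with $|X'|\le k$. First I argue we may assume $X'$ contains none of the new vertices $v_{ij}$: if $v_{ij}\in X'$, then because $v_{ij}$ lies on a cycle of $G'$ only together with its two incident edges, any odd cycle of $G'$ through $v_{ij}$ also passes through $v$; using minimality of $X'$, deleting $v$ instead dominates at least as many relevant odd cycles, so we can reroute — more carefully, I would show $(X'\setminus\{v_{ij}\})\cup\{v\}$ is still an odd cycle transversal and then re-extract a minimal one, but this enlarges the analysis, so the cleaner route is: take \emph{any} odd cycle transversal $Y$ of $G'$ of size $\le k$ and directly build one for $G$ of size $\le k$ avoiding $v$, then invoke minimality at the end. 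Concretely, given such $Y$, let $Y^\circ$ be obtained by replacing each $v_{ij}\in Y$ by one of $v_i,v_j$ (say $v_i$); then $Y^\circ$ hits every odd cycle of $G$ through $v$ except possibly... — here I must be careful, and this is exactly the main obstacle (see below). Once an odd cycle transversal of $G$ of size $\le k$ excluding $v$ is in hand, shrinking it to a minimal one completes the proof.

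\textbf{Main obstacle.} The delicate point is the reverse direction: a transversal $Y$ of $G'$ that uses a gadget vertex $v_{ij}$ ``only'' blocks odd cycles running through the specific neighbour pair $(v_i,v_j)$, whereas in $G$ the vertex $v$ itself is a single cut point for \emph{all} neighbour pairs simultaneously; conversely, blocking $v$ in $G$ might force us to block many different $v_{ij}$'s in $G'$. The role of the hypothesis ``\emph{minimal} odd cycle transversal'' is precisely to control this: I expect the proof to show that in a minimal solution $X'$ of $G'$, one cannot have two gadget vertices $v_{ij}, v_{i'j'}$ both chosen if instead the single vertex $v$ would do — and symmetrically that a minimal solution of $G$ either excludes $v$ or, if forced to include $v$, corresponds to a minimal solution of $G'$ including a suitable single gadget vertex. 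Pinning down this exchange argument cleanly, and verifying that the size never increases under the translation in either direction, is where the real work lies; everything else is the routine parity bookkeeping sketched above.
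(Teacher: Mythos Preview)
Your approach is sound in outline, but you have stalled at a phantom obstacle in the reverse direction. The paper's proof is shorter because it works with bipartitions rather than cycles: for the forward direction, take a bipartition $(X,Y)$ of $G-O$ with $v\in X$; then $N(v)\subseteq Y\cup O$, so placing every new vertex $v_{ij}$ into the $X$-side (and dropping $v$) gives a bipartition of $G'-O$ directly, with no parity bookkeeping.

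For the reverse direction your replacement $v_{ij}\mapsto v_i$ is exactly what the paper does, and it works immediately. The point you are missing is that after the replacement the resulting set $Y^\circ$ is still an odd cycle transversal \emph{of $G'$}: since $v_{ij}$ has degree $2$, every cycle of $G'$ through $v_{ij}$ also passes through $v_i$, so the swap loses nothing. Iterating over all gadget vertices in $Y$ yields $Y^\circ\subseteq V(G)\setminus\{v\}$ with $|Y^\circ|\le|Y|$ and $G'-Y^\circ$ bipartite. Now any odd cycle of $G-Y^\circ$ either avoids $v$ (and hence lies in $G'-Y^\circ$, impossible) or uses $v$ via neighbours $v_a,v_b\notin Y^\circ$, in which case replacing $v$ by $v_{ab}\notin Y^\circ$ gives an odd cycle in $G'-Y^\circ$, again impossible. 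Your worry that $v$ handles ``all neighbour pairs simultaneously'' is a red herring: each pair $(v_a,v_b)$ has its own dedicated vertex $v_{ab}$ in $G'$, and $Y$ already had to cope with every one of them. Minimality plays no role beyond the final shrinking step you already noted.
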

	\begin{proof}
		Consider an odd cycle transversal $O$ of $G$ excluding $v$ and let $(X,Y)$ be a bipartition of $G-O$. Without loss of generality, let $v \in X$. Then, every vertex in $N(v)$ is either in $O$ or in $Y$. Thus, $X'=(X \setminus \{v\}) \cup (V(G') \setminus V(G))$ is an independent set in $G'$. Consequently, $(X',Y)$ is a bipartition of $G'-O$ implying that $O$ is an odd cycle transversal of $G'$. Conversely, any odd cycle transversal $O'$ of $G'$ can be modified to one that excludes each vertex in $\{v_{ij} \in V(G') \mid v_i,v_j \in N(v)\}$ without increasing the size since any induced odd cycle through $v_{ij}$ is also an induced odd cycle through $v_i$ and $v_j$. Then, it follows that $O'$ is an odd cycle transversal of $G$ that excludes $v$.  
	\end{proof}
	 Let $P,Q \subseteq V(G)$ be two disjoint sets. Let $G''$ be the graph constructed from $G$ by adding an independent set $I_P$ of $k+1$ new vertices each of which is adjacent to every vertex in $P$ and an independent set $I_Q$ of $k+1$ new vertices each of which is adjacent to every vertex in $Q$. Further, every vertex in $I_P$ is adjacent to every vertex in $I_Q$. 
	
	\begin{lemma}
		\label{constrained-oct2}
		$G$ has a minimal odd cycle transversal $O$ of size at most $k$ such that $G-O$ has a bipartition $(X,Y)$ with $P \subseteq X$ and $Q \subseteq Y$ if and only if $G''$ has a minimal odd cycle transversal of size at most $k$.
	\end{lemma}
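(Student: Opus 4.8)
The plan is to prove both implications by transferring bipartitions back and forth between $G$ and $G''$, exploiting the one structural feature that matters about the gadget: $I_P\cup I_Q$ induces a complete bipartite graph $K_{k+1,k+1}$, and no vertex set of size at most $k$ can contain a whole side of it. Consequently, for any candidate transversal $O$ with $|O|\le k$, both $I_P\setminus O$ and $I_Q\setminus O$ are non‑empty, and since $I_P$ is complete to $I_Q$ their survivors are forced onto opposite sides of every bipartition of the part of $G''$ that remains.

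For the ``only if'' direction, take a minimal odd cycle transversal $O$ of $G$ with $|O|\le k$ and a bipartition $(X,Y)$ of $G-O$ with $P\subseteq X$ and $Q\subseteq Y$; in particular $O$ is disjoint from $P\cup Q$. I would check that $(X\cup I_Q,\ Y\cup I_P)$ is a bipartition of $G''-O$: $X,Y$ are independent by hypothesis, $I_P,I_Q$ are independent by construction, the only $V(G)$‑neighbours of $I_P$ lie in $P\subseteq X$ and the only $V(G)$‑neighbours of $I_Q$ lie in $Q\subseteq Y$, and every $I_P$–$I_Q$ edge crosses the partition. Hence $G''-O$ is bipartite, so $O$ is an odd cycle transversal of $G''$ of size at most $k$, and an inclusion‑minimal subset of $O$ that still meets every odd cycle of $G''$ is the desired minimal transversal.

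For the ``if'' direction, let $O''$ be a minimal odd cycle transversal of $G''$ with $|O''|\le k$ and fix a bipartition $(A,B)$ of $G''-O''$. Pick $a\in I_P\setminus O''$ and $b\in I_Q\setminus O''$ (possible as $|I_P|=|I_Q|=k+1>|O''|$); since $ab\in E(G'')$ we may assume $a\in A$, $b\in B$. Every surviving vertex of $I_Q$, and every surviving vertex of $P$, is adjacent to $a$, so $I_Q\setminus O''\subseteq B$ and $P\setminus O''\subseteq B$; symmetrically $I_P\setminus O''\subseteq A$ and $Q\setminus O''\subseteq A$. Next I would argue $O''\cap(I_P\cup I_Q)=\emptyset$: if $a'\in O''\cap I_P$, then all surviving neighbours of $a'$ lie in $(P\cup I_Q)\setminus O''\subseteq B$, so $(A\cup\{a'\},B)$ is a bipartition of $G''-(O''\setminus\{a'\})$ and $O''\setminus\{a'\}$ is already an odd cycle transversal of $G''$, contradicting minimality; the case $a'\in O''\cap I_Q$ is symmetric. (This is the same ``a low‑degree added vertex cannot carry a fresh odd cycle'' idea used for Lemma~\ref{constrained-oct1}.) Hence $O''\subseteq V(G)$, so $G-O''$ is an induced subgraph of the bipartite graph $G''-O''$ and therefore bipartite, i.e.\ $O''$ is an odd cycle transversal of $G$ with $|O''|\le k$; restricting $(A,B)$ to $V(G)$ gives a bipartition of $G-O''$ that separates $P$ from $Q$ as required.

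The step I expect to be the main obstacle is reconciling the minimality requirement with the prescribed bipartition, rather than the gadget argument itself. On the ``only if'' side, passing to a minimal sub‑transversal inside $G''$ keeps the size bound but one must make sure it is still compatible with the bipartition of $G-O$. On the ``if'' side, the transversal $O''$ we obtain is minimal for $G''$ and lies in $V(G)$, but it need not be minimal for $G$, and naively shrinking it inside $G$ can reconnect components of $G-O''$ that meet $P$ or $Q$ and thereby drag a $P$‑vertex to the $Q$‑side, breaking the separation. The clean way around this is to argue directly at the level of size‑$\le k$ transversals with the stated bipartition (i.e.\ keep $O''$ itself, or phrase the whole statement in terms of $P\setminus O$ and $Q\setminus O$), and only afterwards observe that such a transversal can be taken minimal for the algorithmic application; once this bookkeeping is set up correctly, everything else is the routine bipartition verification above.
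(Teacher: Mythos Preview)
Your argument is essentially the paper's: for the forward direction exhibit the bipartition $(X\cup I_Q,\,Y\cup I_P)$ of $G''-O$, and for the converse use the $K_{k+1,k+1}$ gadget together with minimality of $O''$ to force $O''\cap(I_P\cup I_Q)=\emptyset$, so that $O''\subseteq V(G)$ and the restriction of the bipartition separates $P$ from $Q$. You spell out the minimality step (why a vertex of $I_P\cup I_Q$ in $O''$ could be dropped) more carefully than the paper, which simply asserts ``as $O'$ is minimal, $I_P\subseteq A$ and $I_Q\subseteq B$''.

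Your concern about minimality on the $G$ side in the converse is legitimate, and the paper's proof does not address it either: what both arguments actually yield is an odd cycle transversal of $G$ of size at most $k$ with the required bipartition, not a \emph{minimal} one, and one can build small examples (e.g.\ a triangle attached by a single vertex $v$ to the midpoint of a path $p\text{--}v\text{--}q$ with $P=\{p\}$, $Q=\{q\}$) where $G''$ has a minimal OCT of size at most $k$ but no minimal OCT of $G$ separates $P$ from $Q$. Your proposed fix --- drop the word ``minimal'' on the $G$ side and keep $O''$ itself --- is exactly right and is all the downstream application needs.
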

	\begin{proof}
		Suppose $G-O$ has a bipartition $(X,Y)$ such that $P \subseteq X$ and $Q \subseteq Y$. Then, $G''-O$ has a bipartition $(X',Y')$ where $X'=X \cup I_Q$ and $Y'=Y \cup I_P$. Thus, $O$ is an odd cycle transversal of $G''$ too. Conversely, consider a minimal odd cycle transversal $O'$ of size $k$ of $G''$. Clearly, $O'$ excludes at least one vertex $a$ from $I_P$ and at least one vertex $b$ from $I_Q$. Consider an arbitrary bipartition $(A,B)$ of $G''-O'$ and let $a \in A$ and $b\in B$. Then, as $O'$ is minimal $I_P \subseteq A$ and $I_Q \subseteq B$. That is, $O' \cap (I_P \cup I_Q) =\emptyset$. Further, as any two vertices $p \in I_P$ and $q \in I_Q$ are adjacent, $I_P \cap V(G''-O') \subseteq A$ and $I_Q \cap V(G''-O') \subseteq B$. Thus, $P \subseteq B$ and $Q \subseteq A$. 
	\end{proof}

	\subsection{Deletion to Types 0, 1 and 2}
	It is trivial to check if $G$ has a solution whose deletion results in a graph $H$ with at most 3 vertices. So, deletion to Type 0 is easy. Now, suppose $H$ is of Type 1. Then, we need to identify an edge of $G$ that is a dominating edge for $H$. We describe an algorithm based on this observation. 
	
	\begin{algorithm}[H]
		\DontPrintSemicolon
		\SetKwFunction{Union}{Union} \SetKwFunction{FindCompress}{FindCompress} 
		\SetKwInOut{Input}{Input}\SetKwInOut{Output}{Output}
		\Input{A graph $G$ and a positive integer $k$.}
		\Output{$S \subseteq V(G)$, $|S| \leq k$ such that $G-S$ is of Type 1 (if one exists).}
		\BlankLine
		
		\nl \For{each edge $(x,y)$ in $G$} 
		{
			Let $X'=N(x)\cap \overline{N[y]}$ and $Y'=N(y)\cap \overline{N[x]}$.\;
			Let $S'$ be $V(G) \setminus (X' \cup Y')$ and decrease $k$ by $|S'|$.\;
			\nl \For{each $k_1$ and $k_2$ such that $k_1+k_2 \leq k$} 
			{
				\nl Compute a vertex cover $S_1$ of $G[X']$ with $|S_1| \leq k_1$ (if one exists).\;
				/* $(X' \setminus S_1) \cup \{y\}$ is an independent set */\;
				\nl Compute a vertex cover $S_2$ of $G[Y']$ with $|S_2| \leq k_2$ (if one exists).\;
				/* $(Y' \setminus S_2) \cup \{x\}$ is an independent set */\;
				\If{$S_1$ and $S_2$ are non-empty sets}
				{\Return{$S' \cup S_1 \cup S_2$ }
				}
			}
		}
		\caption{Deletion-to-Type1$(G,k)$}
		\label{type1}
	\end{algorithm}
	
	\begin{lemma}
		\label{type1-thm}
		Algorithm \ref{type1} runs in $\Oh^*(1.2738^k)$ time. 
	\end{lemma}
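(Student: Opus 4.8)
The plan is a direct running-time accounting: I would show that every super-polynomial contribution to the running time of Algorithm~\ref{type1} comes from the two invocations of the vertex-cover subroutine of Theorem~\ref{vc-best}, and that the relevant parameters of those two calls are controlled by $k$. To begin, we may assume $k \le n$, since otherwise $S = V(G)$ is trivially a valid solution; in particular any factor polynomial in $m$ and $k$ is polynomial in the input size and is absorbed into the $\Oh^*(\cdot)$ notation. The outer \textbf{for} loop ranges over the edges of $G$ and hence runs at most $m$ times. A single outer iteration first computes $X' = N(x) \cap \overline{N[y]}$, $Y' = N(y) \cap \overline{N[x]}$, the set $S' = V(G)\setminus(X'\cup Y')$, and the updated value of $k$; all of this is done in polynomial time. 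The inner \textbf{for} loop ranges over pairs $(k_1,k_2)$ of non-negative integers with $k_1+k_2 \le k$ and so runs at most $(k+1)^2$ times (if the updated $k$ is negative, the loop is simply empty and the iteration is skipped).

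\textbf{The key estimate.} I would then bound the cost of one iteration of the inner loop. By Theorem~\ref{vc-best} (and the standard fact that the corresponding algorithm can be made to output a vertex cover of size at most $k_i$, when one exists, within the same time bound), computing $S_1$ for $G[X']$ takes $\Oh^*(1.2738^{k_1})$ time and computing $S_2$ for $G[Y']$ takes $\Oh^*(1.2738^{k_2})$ time. The two calls are performed sequentially, so the iteration costs $\Oh^*(1.2738^{k_1} + 1.2738^{k_2})$ time. Since $k_1,k_2 \ge 0$ and $k_1+k_2 \le k$, we have $k_1 \le k$ and $k_2 \le k$, hence this is $\Oh^*(1.2738^{k})$. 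Here $k$ is the value after the decrement by $|S'|$, which never exceeds the original parameter, so the bound is also valid with respect to the input $k$. Multiplying by the $\Oh(m)$ outer iterations and the $\Oh(k^2)$ inner iterations and folding the polynomial factor $m\cdot k^2$ into $\Oh^*$ yields a total running time of $\Oh^*(1.2738^{k})$.

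\textbf{Where the care is needed.} There is no genuine obstacle here; the argument is routine bookkeeping. The one thing worth being explicit about is that the guessed split $k_1+k_2 \le k$ is exactly what keeps each vertex-cover parameter bounded by $k$, so that the branching base $1.2738$ is never raised to an exponent larger than $k$ — otherwise one might naively arrive at a bound like $\Oh^*(1.2738^{2k})$. In fact the constraint $k_1 + k_2 \le k$ makes even a multiplicative accounting harmless: $1.2738^{k_1}\cdot 1.2738^{k_2} = 1.2738^{k_1+k_2} \le 1.2738^{k}$, so whether one treats the two calls as additive or as nested, the claimed bound follows.
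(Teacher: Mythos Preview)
Your proof is correct and follows essentially the same route as the paper: count $m$ outer iterations and $\Oh(k^2)$ inner iterations, bound each vertex-cover call by Theorem~\ref{vc-best}, and absorb the polynomial factors into $\Oh^*$. The paper additionally weaves in the correctness argument for why the guessed edge and the sets $X',Y'$ yield exactly the Type~1 instances, but as far as the running-time claim in the lemma is concerned your accounting is equivalent and arguably tidier in making explicit that $k_1,k_2\le k$ is what caps the exponent.
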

	\begin{proof}
		The outer loop (step 1) is executed at most $m$ times (once for each edge) and the inner loop (step 2) is executed at most $k^2$ times. Let $(x,y)$ be an edge in $G$. We need to extend $\{x\}$ and $\{y\}$ into independent sets $Y$ and $X$ respectively, such that $X$ is dominated by $x$ and $Y$ is dominated by $y$. Clearly, neighbors of $x$ and non-neighbors of $y$ cannot be in $Y$. Similarly, neighbors of $y$ and non-neighbors of $x$ cannot be in $X$. No common neighbor of $x$ and $y$ can be in either $X$ or $Y$. Thus, the candidates for $X$ and $Y$ are $X'=N(x)\cap \overline{N[y]}$ and $Y'=N(y)\cap \overline{N[x]}$ respectively. All vertices in $V(G) \setminus (X' \cup Y')$ are in any solution. Let $k'=k-|V(G) \setminus (X' \cup Y')|$. Then, $G$ has a $3$-cd-partization solution $S$ of size at most $k$ such that $G-S$ is of Type 1 with $(u,v)$ as a dominating edge if and only if there exists integers $k_1$, $k_2$ with $k_1+k_2 \leq k'$ such that $G[X']$ has a vertex cover of size at most $k_1$ and $G[Y']$ has a vertex cover of size at most $k_2$. Now, Steps 3 and 4 take $\Oh^*(1.2738^k)$ time from Theorem \ref{vc-best}. Thus, the overall running time is $\Oh^*(1.2738^k)$.  
	\end{proof}
	
	 Suppose $H$ is of Type 2. Then, for each vertex $v$ of $G$, we simply run Algorithm \ref{type1} on $G-\{v\}$ with parameter $k$. 
	
	\begin{algorithm}[H]
		\DontPrintSemicolon
		\SetKwFunction{Union}{Union} \SetKwFunction{FindCompress}{FindCompress} 
		\SetKwInOut{Input}{Input}\SetKwInOut{Output}{Output}
		\Input{A graph $G$ and a positive integer $k$.}
		\Output{$S \subseteq V(G)$, $|S| \leq k$ such that $G-S$ is of Type 2 (if one exists).}
		\BlankLine
		
		\nl \For{each vertex $x$ in $G$} 
		{
			Deletion-to-Type1$(G-\{x\},k)$.\;
		}
		\caption{Deletion-to-Type2$(G,k)$}
		\label{type2}
	\end{algorithm}
	
	\begin{lemma}
		\label{type2-thm}
		Algorithm \ref{type2} runs in $\Oh^*(1.2738^k)$ time. 
	\end{lemma}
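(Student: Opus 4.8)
The statement to prove is Lemma~\ref{type2-thm}: Algorithm~\ref{type2} runs in $\Oh^*(1.2738^k)$ time.

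The plan is straightforward, since Algorithm~\ref{type2} is essentially a loop that calls Algorithm~\ref{type1} as a subroutine. First I would recall that by Lemma~\ref{type1-thm}, a single invocation of \textsc{Deletion-to-Type1}$(G',k)$ on a graph $G'$ runs in $\Oh^*(1.2738^k)$ time. Algorithm~\ref{type2} performs exactly $|V(G)| = n$ such invocations --- one for each vertex $x \in V(G)$, on the graph $G - \{x\}$ with the same parameter $k$. Since $G - \{x\}$ has at most $n$ vertices, the running time of each call is still $\Oh^*(1.2738^k)$, where the $\Oh^*$ notation absorbs the polynomial dependence on the size of $G - \{x\}$ (hence on $n$).

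The key observation is that the correctness also follows immediately: if $H = G - S$ is of Type~2, then by Theorem~\ref{3-char} there is a vertex $v \in V(H)$ such that $H - v$ is a bipartite graph with a dominating edge, i.e., of Type~1. Setting $x = v$, the set $S$ is a solution of size at most $k$ such that $(G - \{x\}) - S$ is of Type~1, which the call \textsc{Deletion-to-Type1}$(G - \{x\}, k)$ will detect. Conversely, if some call \textsc{Deletion-to-Type1}$(G-\{x\},k)$ returns a set $S$, then $(G - \{x\}) - S$ is of Type~1 and hence $G - S$ is of Type~2 (with $x$ as the distinguished vertex), and $|S| \leq k$.

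For the running time bound itself: the outer loop runs $n$ times, and each iteration costs $\Oh^*(1.2738^k)$ by Lemma~\ref{type1-thm}; multiplying by the polynomial factor $n$ keeps the bound at $\Oh^*(1.2738^k)$ since $\Oh^*$ suppresses polynomial factors in the input size. I do not anticipate any real obstacle here --- the only thing to be careful about is confirming that Lemma~\ref{type1-thm} is stated with the parameter $k$ (not a reduced parameter), so that passing the unchanged $k$ to each subcall is legitimate, and that the branching over the distinguished vertex of Type~2 is correctly captured by iterating over all $x \in V(G)$. Both are immediate from the definitions.
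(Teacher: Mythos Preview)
Your proposal is correct and follows essentially the same approach as the paper: the paper's proof simply observes that Algorithm~\ref{type2} calls Algorithm~\ref{type1} at most $n$ times, so the running time is $\Oh^*(1.2738^k)$. Your additional correctness discussion is sound but goes beyond what the lemma (a pure running-time claim) requires.
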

	\begin{proof}
		As Algorithm \ref{type2} calls Algorithm \ref{type1} at most $n$ times, its running time is $\Oh^*(1.2738^k)$.  
	\end{proof}

	\subsection{Deletion to Type 3}
	Suppose $H$ is of Type 3 with dominator $(x,y)$. Then, the following holds.
	
	\begin{obs}[\cite{caldam16}]
		$\overline{N_H[x]}$ is an independent set and $\overline{N_H[x]} \subseteq N_H(y)$. Further, $N_H(x)$ induces a bipartite graph with at least one edge.
	\end{obs}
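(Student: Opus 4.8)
The plan is to simply unpack the definition of a Type~3 graph from Theorem~\ref{3-char}. Since $H$ is of Type~3 with dominator $(x,y)$, I would fix the partition $V(H)=\{x,y\}\uplus X\uplus Y$ guaranteed by the definition, so that $H[X\cup\{y\}]$ is a bipartite graph with at least one edge, $Y\cup\{x\}$ is an independent set, $Y\cup\{x\}\subseteq N_H(y)$ and $X\cup\{y\}\subseteq N_H(x)$. Everything will follow from these four facts.

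The first step is to pin down $N_H(x)$ exactly. One inclusion, $X\cup\{y\}\subseteq N_H(x)$, is immediate. For the reverse inclusion, note that $N_H(x)\subseteq V(H)\setminus\{x\}=\{y\}\cup X\cup Y$, and no vertex of $Y$ can be adjacent to $x$ because $Y\cup\{x\}$ is an independent set; hence $N_H(x)=X\cup\{y\}$ and $N_H[x]=\{x\}\cup X\cup\{y\}$.

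From this identification the rest is routine. We get $\overline{N_H[x]}=V(H)\setminus N_H[x]=Y$. Since $Y\cup\{x\}$ is independent, its subset $Y=\overline{N_H[x]}$ is independent, which gives the first assertion; and since $Y\subseteq Y\cup\{x\}\subseteq N_H(y)$, we obtain $\overline{N_H[x]}\subseteq N_H(y)$, the second assertion. Finally, $N_H(x)=X\cup\{y\}$ and $H[X\cup\{y\}]$ is bipartite with at least one edge by the very definition of Type~3, which is precisely the third assertion.

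I do not expect a genuine obstacle here: the statement is essentially a restatement of the Type~3 conditions once $N_H(x)$ has been computed, and the only mild point to verify is that $x$ has no neighbour inside $Y$, which is immediate from the independence of $Y\cup\{x\}$.
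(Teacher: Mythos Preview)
Your argument is correct and is exactly the natural way to verify this observation: you identify $N_H(x)=X\cup\{y\}$ by combining the inclusion $X\cup\{y\}\subseteq N_H(x)$ with the independence of $Y\cup\{x\}$, and then all three assertions drop out immediately. The paper does not give its own proof of this observation---it is simply cited from \cite{caldam16}---so there is nothing to compare against; your derivation from the Type~3 definition is the intended justification.
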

	 This observation leads to the following algorithm.
	
	\begin{algorithm}[H]
		\DontPrintSemicolon
		\SetKwFunction{Union}{Union} \SetKwFunction{FindCompress}{FindCompress} 
		\SetKwInOut{Input}{Input}\SetKwInOut{Output}{Output}
		\Input{A graph $G$ and a positive integer $k$.}
		\Output{$S \subseteq V(G)$, $|S| \leq k$ such that $G-S$ is of Type 3 (if one exists).}
		\BlankLine
		
		\nl \For{each ordered pair $(x,y)$ of adjacent vertices in $G$} 
		{
			Let $Y'= N(y) \cap \overline{N[x]}$ and $X'=N(x)$.\;
			Let $S'$ be $V(G) \setminus (X' \cup Y')$ and decrease $k$ by $|S'|$.\;
			\nl \For{each $k_1$ and $k_2$ such that $k_1+k_2 \leq k$} 
			{
				\nl Compute a vertex cover $S_1$ of $G[Y']$ with $|S_1| \leq k_1$ (if one exists).\;
				/* $(Y' \setminus S_1) \cup \{x\}$ is an independent set */\;
				\nl Compute a minimal odd cycle transversal $S_2$ of at most $k_2$ vertices (if one exists) in $G[X']$ such that $y \notin S_2$.\;
				\If{$S_1$ and $S_2$ are non-empty sets}
				{\Return{$S' \cup S_1 \cup S_2$ }
				}
			}
		}
		\caption{Deletion-to-Type3$(G,k)$}
		\label{type3}
	\end{algorithm}
	
	\begin{lemma}
		\label{type3-thm}
		Algorithm \ref{type3} runs in $\Oh^*(2.3146^k)$ time. 
	\end{lemma}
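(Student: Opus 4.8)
The plan is to establish the bound by mirroring the analysis carried out for Algorithm~\ref{type1} in Lemma~\ref{type1-thm}: count the iterations of the two nested loops, bound the cost of the computations performed inside a single innermost iteration, and multiply. Correctness of the returned set, which is implicit in the statement, I would argue exactly as in the Type~1 case together with the preceding observation: if $H=G-S$ is of Type~3 with dominator $(x,y)$, then $Y:=V(H)\setminus N_H[x]$ is an independent set contained in $N(y)$ and $N_H(x)$ induces a bipartite graph with $y$ on one side. Hence, for the correct guess of the ordered adjacent pair $(x,y)$, every vertex outside $\{x,y\}\cup X'\cup Y'$ (with $X'=N(x)\ni y$ and $Y'=N(y)\cap\overline{N[x]}$) must lie in $S$, the restriction of $S$ to $Y'$ must be a vertex cover of $G[Y']$, and the restriction of $S$ to $X'$ must be an odd cycle transversal of $G[X']$ avoiding $y$; the two ``if'' directions are verified directly. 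This part is routine and I would keep it brief, concentrating on the timing.

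For the timing, I would first note that the outer loop of Algorithm~\ref{type3} runs at most $\Oh(n^2)$ times (once per ordered pair of adjacent vertices) and the inner loop at most $k^2$ times (once per pair $(k_1,k_2)$ with $k_1+k_2\le k$). Fix one iteration of the inner loop. The vertex cover $S_1$ of $G[Y']$ of size at most $k_1$ is computed in $\Oh^*(1.2738^{k_1})$ time by Theorem~\ref{vc-best}. For the constrained odd cycle transversal $S_2$ of $G[X']$ of size at most $k_2$ that excludes $y$ (here one uses $y\in N(x)=X'$), I would invoke Lemma~\ref{constrained-oct1} applied to the graph $G[X']$ with $v=y$: construct the auxiliary graph, which has only a polynomial number of extra vertices $v_{ij}$ and is therefore built in polynomial time, run the algorithm of Theorem~\ref{oct-best} on it to obtain an odd cycle transversal of size at most $k_2$ in $\Oh^*(2.3146^{k_2})$ time, extract a minimal one in polynomial time by greedily discarding redundant vertices, and translate it back via Lemma~\ref{constrained-oct1} into a minimal odd cycle transversal of $G[X']$ missing $y$. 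Thus a single inner iteration costs $\Oh^*(1.2738^{k_1}\cdot 2.3146^{k_2})$.

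Since $1.2738\le 2.3146$ and $k_1+k_2$ never exceeds the original parameter $k$, this is at most $\Oh^*(2.3146^{k_1+k_2})=\Oh^*(2.3146^{k})$; multiplying by the $\Oh(n^2k^2)$ loop iterations only contributes a polynomial factor, absorbed by $\Oh^*$, so the total running time is $\Oh^*(2.3146^{k})$, as claimed. The only genuinely nonroutine point is the computation of $S_2$: one must argue that requiring the odd cycle transversal to avoid the specific vertex $y$ costs nothing in the exponential base, and this is precisely what Lemma~\ref{constrained-oct1} provides, by reducing to an unconstrained instance on a polynomially larger graph with the same budget $k_2$. Everything else is bookkeeping parallel to Lemma~\ref{type1-thm}.
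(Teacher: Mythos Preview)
Your proposal is correct and follows essentially the same approach as the paper: both iterate over ordered adjacent pairs, split the budget, call the vertex-cover algorithm of Theorem~\ref{vc-best} on $G[Y']$, and handle the constrained odd cycle transversal on $G[X']$ by reducing via Lemma~\ref{constrained-oct1} to an unconstrained instance solved by Theorem~\ref{oct-best}. The only cosmetic difference is that you bound a single inner iteration by the product $1.2738^{k_1}\cdot 2.3146^{k_2}$ (a slight overcount, since the two subroutines run sequentially and the cost is really their sum), but this still yields the same $\Oh^*(2.3146^k)$ bound.
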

	\begin{proof}
		The outer loop (step 1) is executed at most $2m$ times (as there are two ordered pairs for each edge) and the inner loop (step 2) is executed at most $k^2$ times. Consider an edge $(x,y)$ in $G$. If $(x,y)$ is a dominator in $H$, then we need to extend $\{x\}$ into an independent set $Y$ that is dominated by $y$ and extend $\{y\}$ into an induced bipartite graph $B$ (with at least one edge) such that $V(B)$ is dominated by $x$. Observe that $Y$ contains only neighbors of $y$ and $V(B)$ contains only neighbors of $x$. Further, a neighbor of $y$ that is not adjacent to $x$ cannot be in $V(B)$ and a neighbor of $y$ that is adjacent to $x$ cannot be in $Y$. Thus, the candidates for $V(B)$ and $Y$ are $X'=N(x)$ and $Y'=N(y) \cap \overline{N[x]}$ respectively. All vertices in $V(G) \setminus (X' \cup Y')$ are in any solution. Let $k'=k-|V(G) \setminus (X' \cup Y')|$. Now, $G$ has a $3$-cd-partization solution $S$ of size at most $k$ such that $G-S$ is of Type 3 with $(x,y)$ as a dominator if and only if there exists integers $k_1$ and $k_2$ with $k_1+k_2 \leq k'$ such that $G[Y']$ has a vertex cover of size at most $k_1$ and $G[X']$ has an odd cycle transversal of size $k_2$ not containing $y$ such that the resultant bipartite graph is non-edgeless. Clearly step 3 takes $\Oh^*(1.2738^k)$ time. For step 4, we need to find a minimal odd cycle transversal that excludes vertex $y$. We construct a graph $G'$ obtained from $G[X']$ by deleting $y$ and adding a new vertex $y_{ij}$ for each pair $y_i$, $y_j$ of neighbors of $y$; further $y_{ij}$ is adjacent to $y_i$ and $y_j$. From Lemma \ref{constrained-oct1}, we have that $G[X']$ has a minimal odd cycle transversal of size at most $k_2$ not containing $y$ if and only if $G'$ has a minimal odd cycle transversal of size at most $k_2$. Now, by using Theorem \ref{oct-best}, it follows that step 4 takes $\Oh^*(2.3146^k)$ time and this gives us the claimed running time of the algorithm.  
	\end{proof}

	\subsection{Deletion to Type 4}
	Suppose $H$ is of Type 4 and has $(x,y,z)$ as a dominator. Then, we have the following observation.
	
	\begin{obs}[\cite{caldam16}]
		$N_H(x) \cap N_H(y) \cap N_H(z)=\emptyset$ and $\overline{N_H[x]} \cap \overline{N_H[y]} \cap \overline{N_H[z]}=\emptyset$. Further, $X=N_H(x) \cap \overline{N_H[y]}$, $Y=N_H(y) \cap \overline{N_H[z]}$ and $Z=N_H(z) \cap \overline{N_H[x]}$.
	\end{obs}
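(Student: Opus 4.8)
The plan is to work directly from the structural constraints defining Type~4 together with the fact that $\{x,y,z\}, X, Y, Z$ partition $V(H)$. All three claims will follow from a short case analysis on which block of this partition a given vertex lies in, using the three independence conditions (that $X \cup \{y\}$, $Y \cup \{z\}$, $Z \cup \{x\}$ are independent), the three containment conditions ($X \subseteq N(x)$, $Y \subseteq N(y)$, $Z \subseteq N(z)$), and the fact that $(x,y,z)$ induces a triangle.

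First I would handle $N_H(x) \cap N_H(y) \cap N_H(z) = \emptyset$. Take any $v$ adjacent to all of $x$, $y$, $z$; since $H$ is loopless, $v \notin \{x,y,z\}$, so $v$ lies in exactly one of $X$, $Y$, $Z$. If $v \in X$ then $v$ is adjacent to $y$, contradicting that $X \cup \{y\}$ is independent, and the cases $v \in Y$ (adjacent to $z$) and $v \in Z$ (adjacent to $x$) are symmetric. Next, for $\overline{N_H[x]} \cap \overline{N_H[y]} \cap \overline{N_H[z]} = \emptyset$, take $v$ outside all three closed neighbourhoods; again $v \notin \{x,y,z\}$, so $v \in X \cup Y \cup Z$, but $v \in X$ forces $v \in N(x)$ by $X \subseteq N(x)$, contradicting $v \notin N[x]$, and likewise for $Y$ and $Z$.

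For the identities it suffices to prove $X = N_H(x) \cap \overline{N_H[y]}$ and then invoke the cyclic symmetry $x \to y \to z \to x$, $X \to Y \to Z \to X$, which preserves the whole Type~4 structure. The inclusion $X \subseteq N_H(x) \cap \overline{N_H[y]}$ is immediate: $X \subseteq N(x)$ by definition, and since $X \cup \{y\}$ is independent no vertex of $X$ equals or is adjacent to $y$, so $X \cap N_H[y] = \emptyset$. Conversely, let $v \in N_H(x) \cap \overline{N_H[y]}$. Then $v \neq x$ (loopless) and $v \neq y$; moreover $v \neq z$, because $z \in N(y)$ by the triangle while $v \notin N[y]$. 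Hence $v \in X \cup Y \cup Z$. It cannot be in $Y$, since $Y \subseteq N(y)$ would give $v \in N[y]$; it cannot be in $Z$, since $Z \cup \{x\}$ being independent would give $v \notin N(x)$. Therefore $v \in X$, which finishes the reverse inclusion.

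I do not anticipate a genuine obstacle here, as the content is a direct unfolding of the Type~4 definition; the only point requiring care is the book-keeping that each ``stray'' vertex (a common neighbour, a vertex missing all three closed neighbourhoods, or a member of $N_H(x) \cap \overline{N_H[y]}$) really lands in $X \cup Y \cup Z$ rather than in $\{x,y,z\}$, which is precisely where the triangle hypothesis on $(x,y,z)$ is used, for instance to rule out $v = z$ above.
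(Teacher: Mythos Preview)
Your argument is correct. Note, however, that the paper itself does not give a proof of this observation: it is simply quoted from \cite{caldam16} and used as a black box, so there is no in-paper proof to compare against. Your derivation is precisely the natural unfolding of the Type~4 definition that one would supply to justify the cited statement, and each step (ruling out $v\in\{x,y,z\}$ using the triangle, then eliminating the wrong block among $X,Y,Z$ via the independence and containment conditions) is sound.
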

	
	 Now, we have the following algorithm.
	
	\begin{algorithm}[H]
		\DontPrintSemicolon
		\SetKwFunction{Union}{Union} \SetKwFunction{FindCompress}{FindCompress} 
		\SetKwInOut{Input}{Input}\SetKwInOut{Output}{Output}
		\Input{A graph $G$ and a positive integer $k$}
		\Output{$S \subseteq V(G)$, $|S| \leq k$ such that $G-S$ is of Type 4 (if one exists)}
		\BlankLine
		
		\nl \For{each ordered triple $(x,y,z)$ of pairwise adjacent vertices of $G$} 
		{
			Let $X'=N(x) \cap \overline{N[y]}$, $Y'=N(y) \cap \overline{N[z]}$ and $Z'=N(z) \cap \overline{N[x]}$.\;
			Let $S'$ be $V(G) \setminus (X' \cup Y' \cup Z')$ and decrease $k$ by $|S'|$.\;
			\nl \For{each $k_1$, $k_2$ and $k_3$ such that $k_1+k_2+k_3 \leq k$} 
			{
				\nl Compute a vertex cover $S_1$ of $G[X']$ with $|S_1| \leq k_1$ (if one exists).\;
				/* $(X' \setminus S_1) \cup \{y\}$ is an independent set */\;
				\nl Compute a vertex cover $S_2$ of $G[Y']$ with $|S_2| \leq k_2$ (if one exists).\;
				/* $(Y' \setminus S_2) \cup \{z\}$ is an independent set */\;
				\nl Compute a vertex cover $S_3$ of $G[Z']$ with $|S_3| \leq k_3$ (if one exists).\;
				/* $(Z' \setminus S_3) \cup \{x\}$ is an independent set */\;
				\If{$S_1$, $S_2$ and $S_3$ are non-empty sets}
				{\Return{$S' \cup S_1 \cup S_2 \cup S_3$ }
				}
			}
		}
		\caption{Deletion-to-Type4$(G,k)$}
		\label{type4}
	\end{algorithm}
	
	\begin{lemma}
		\label{type4-thm}
		Algorithm \ref{type4} runs in $\Oh^*(1.2738^k)$ time. 
	\end{lemma}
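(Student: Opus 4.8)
The plan is to mirror the proofs of Lemmas~\ref{type1-thm} and \ref{type3-thm}: show that the two nested loops of Algorithm~\ref{type4} run only a polynomial number of times, and charge all of the super-polynomial cost to the vertex-cover subroutine, which is invoked with a total budget of at most $k$.

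For the running time itself, I would first bound the loop counts. The outer loop ranges over ordered triples $(x,y,z)$ of pairwise adjacent vertices, of which there are at most $6\binom{n}{3}=\Oh(n^{3})$, and the inner loop ranges over nonnegative integers $k_{1},k_{2},k_{3}$ with $k_{1}+k_{2}+k_{3}\le k$, of which there are $\Oh(k^{3})$; both quantities are polynomial and disappear into the $\Oh^{*}$ notation. Inside a single iteration the only non-polynomial work is the three vertex-cover computations on $G[X']$, $G[Y']$, $G[Z']$ with budgets $k_{1},k_{2},k_{3}$; by Theorem~\ref{vc-best} these cost $\Oh^{*}(1.2738^{k_{1}})$, $\Oh^{*}(1.2738^{k_{2}})$, $\Oh^{*}(1.2738^{k_{3}})$ respectively, and since each $k_{i}\le k$ their sum is $\Oh^{*}(1.2738^{k})$. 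Hence the overall running time is $\Oh^{*}(1.2738^{k})$.

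For correctness (which I would fold into the argument as in the earlier lemmas), fix a solution $S$ of size at most $k$ with $H=G-S$ of Type~4 and dominator $(x,y,z)$. By the observation of \cite{caldam16} stated just before the algorithm, the side sets of $H$ are $X=N_{H}(x)\cap\overline{N_{H}[y]}$, $Y=N_{H}(y)\cap\overline{N_{H}[z]}$, $Z=N_{H}(z)\cap\overline{N_{H}[x]}$. Since deleting vertices removes no edge among survivors, $X\subseteq X':=N(x)\cap\overline{N[y]}$ and likewise $Y\subseteq Y'$, $Z\subseteq Z'$; the sets $X',Y',Z'$ are pairwise disjoint and avoid $\{x,y,z\}$, so every vertex outside $\{x,y,z\}\cup X'\cup Y'\cup Z'$ lies in $S$, i.e.\ the forced set $S'$ computed by the algorithm satisfies $S'\subseteq S$. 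Setting $k_{1}=|S\cap X'|$, $k_{2}=|S\cap Y'|$, $k_{3}=|S\cap Z'|$ gives $k_{1}+k_{2}+k_{3}\le k-|S'|$, and $X'\setminus X\subseteq S\cap X'$ is a vertex cover of $G[X']$ (because $X$, being contained in the independent set $X\cup\{y\}$, is independent in $H$ hence in $G$), so $G[X']$ has a vertex cover of size at most $k_{1}$, and symmetrically for $Y',Z'$; thus at this triple and this choice of $(k_{1},k_{2},k_{3})$ the algorithm returns a set of size at most $k$. Conversely, for any set $S'\cup S_{1}\cup S_{2}\cup S_{3}$ the algorithm outputs, the sets $(X'\setminus S_{1})\cup\{y\}$, $(Y'\setminus S_{2})\cup\{z\}$, $(Z'\setminus S_{3})\cup\{x\}$ are independent (vertex-cover property of $S_{1},S_{2},S_{3}$, together with the fact that there are no edges from $X'$ to $y$, from $Y'$ to $z$, or from $Z'$ to $x$ by construction), are dominated by $x$, $y$, $z$ respectively, and together with $\{x,y,z\}$ partition the vertex set of the remaining graph; by Theorem~\ref{3-char} this certifies that the remaining graph is of Type~4 and hence $3$-cd-colorable.

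The main obstacle lies entirely in this second part, namely the bookkeeping: one must verify that $X',Y',Z'$ capture all of $X,Y,Z$ and nothing more, that $S'$ together with the three vertex covers never double-counts a vertex, and that the non-adjacencies demanded by Type~4 hold for the candidate sets by construction rather than needing to be enforced by additional deletions. Once these points are settled, the running-time bound is an immediate consequence of Theorem~\ref{vc-best} and the polynomial loop counts, exactly as in Lemma~\ref{type1-thm}.
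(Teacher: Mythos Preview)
Your proposal is correct and follows essentially the same approach as the paper: bound the outer loop by $\Oh(n^{3})$ and the inner loop by $\Oh(k^{3})$, then charge the super-polynomial cost to the three vertex-cover calls via Theorem~\ref{vc-best}. Your correctness discussion is in fact more detailed than the paper's, which simply asserts the equivalence between a Type~4 solution with dominator $(x,y,z)$ and the existence of the three vertex covers without spelling out the containments $X\subseteq X'$, $Y\subseteq Y'$, $Z\subseteq Z'$ or the converse direction.
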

	\begin{proof}
		 The outer loop (step 1) is executed at most $n^3$ times and the inner loop (step 2) is executed at most $k^3$ times. Consider a triangle $\{x,y,z\}$ in $G$. If $(x,y,z)$ is a dominator in $H$, then we need to extend $\{x\}$, $\{y\}$, $\{z\}$ into independent sets $Y$, $Z$, $X$ dominated by $y$, $z$ and $x$ respectively. Thus, the candidates for $X$, $Y$ and $Z$ are sets $X'=N(x) \cap \overline{N[y]}$, $Y'=N(y) \cap \overline{N[z]}$ and $Z'=N(z) \cap \overline{N[x]}$. All vertices in $S'=V(G) \setminus (X' \cup Y' \cup Z')$ are in any solution. Let $k'=k-|V(G) \setminus S'|$. Then, $G$ has a $3$-cd-partization solution $S$ of size at most $k$ such that $G-S$ is of Type 4 with $(x,y,z)$ as a dominator if and only if there exists integers $k_1$, $k_2$ and $k_3$ with $k_1+k_2+k_3 \leq k'$ such that $G[Y']$ has a vertex cover of size at most $k_1$, $G[Z']$ has a vertex cover of size at most $k_2$ and $G[Z']$ has a vertex cover of size at most $k_3$. Steps 3, 4 and 5 take $\Oh^*(1.2738^k)$ time from Theorem \ref{vc-best} and the overall running time is $\Oh^*(1.2738^k)$.  
\end{proof}
	
\subsection{Deletion to Type 5}
	
Suppose $H$ is of Type 5 and has $(x,y,z)$ as a dominator. Then, we have the following observation.
	
\begin{obs}[\cite{caldam16}]
$\overline{N_H[x]} \cap \overline{N_H[y]}$ is an independent set. Further, $z \in N_H(x) \cup N_H(y)$ and $\overline{N_H[x]} \cap \overline{N_H[y]} \subseteq N_H(z)$. Moreover, in $G-Z$, $N[x] \cup N[y] =V(G-Z)$, $N(x) \setminus N(y) \subseteq X$ and $N(y) \setminus N(x) \subseteq Y$.
\end{obs}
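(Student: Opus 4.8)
The plan is to read every assertion directly off the defining conditions of a Type~5 graph $H$, the crucial preliminary step being to pin down the set $\overline{N_H[x]} \cap \overline{N_H[y]}$ exactly. I would first show that $\overline{N_H[x]} \cap \overline{N_H[y]} = Z$. For $Z \subseteq \overline{N_H[x]} \cap \overline{N_H[y]}$: a vertex $v \in Z$ satisfies $v \neq x$ and $v \neq y$, and, since $Z \cup \{x\}$ and $Z \cup \{y\}$ are independent sets, $v$ is adjacent to neither $x$ nor $y$; hence $v \notin N_H[x]$ and $v \notin N_H[y]$. For the reverse inclusion, I use $V(H) = \{x,y\} \uplus X \uplus Y \uplus Z$ and observe that no vertex of $\{x,y\} \cup X \cup Y$ can lie in the intersection: $x$ and $y$ each lie in their own closed neighbourhood, every vertex of $X$ lies in $N_H[x]$ because $X \subseteq N_H(x)$, and every vertex of $Y$ lies in $N_H[y]$ because $Y \subseteq N_H(y)$. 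Given this equality, the first sentence of the observation follows since $Z$ is independent (being a subset of the independent set $Z \cup \{x\}$), and the containment $\overline{N_H[x]} \cap \overline{N_H[y]} \subseteq N_H(z)$ follows from $Z \subseteq N_H(z)$. The assertion $z \in N_H(x) \cup N_H(y)$ is immediate, since $(x,z) \in E(H)$ by the definition of Type~5.

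For the three assertions about $H - Z$, the key observation is that deleting $Z$ removes none of the vertices of $\{x,y\} \cup X \cup Y$ (in particular $z$, which lies in $X \cup Y$, survives) and destroys no edge among them, so every adjacency claim can be verified inside the Type~5 structure. Since $V(H-Z) = \{x,y\} \cup X \cup Y$ with $x \in N[x]$, $y \in N[y]$, $X \subseteq N(x)$, and $Y \subseteq N(y)$, every vertex of $H-Z$ lies in $N[x] \cup N[y]$, which gives $N[x] \cup N[y] = V(H-Z)$. For $N(x) \setminus N(y) \subseteq X$: if $v$ lies in this set then $v \neq x$, and $v \neq y$ because $(x,y) \notin E(H)$ forces $y \notin N(x)$; hence $v \in X \cup Y$, and $v \in Y$ is impossible because $Y \subseteq N(y)$ would place $v$ in $N(y)$; therefore $v \in X$. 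The inclusion $N(y) \setminus N(x) \subseteq Y$ is symmetric.

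I expect no real obstacle: the whole observation amounts to a short, bookkeeping-level case analysis built on the set equality $\overline{N_H[x]} \cap \overline{N_H[y]} = Z$. The only point that needs a moment's care is being explicit that deleting $Z$ leaves the adjacencies among $x$, $y$, $X$, and $Y$ untouched, so that the neighbourhood identities claimed in $H-Z$ are precisely the ones supplied by the Type~5 description.
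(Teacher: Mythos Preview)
Your argument is correct: the key identification $\overline{N_H[x]}\cap\overline{N_H[y]}=Z$ follows cleanly from the disjoint-union structure and the independence of $Z\cup\{x\}$ and $Z\cup\{y\}$, and everything else is read off directly from the Type~5 data as you describe. The paper itself gives no proof of this observation---it is simply quoted from \cite{caldam16}---so there is nothing to compare against; your direct verification from the Type~5 definition is exactly what is needed and is complete.
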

Now, we have the following algorithm.
	
\begin{algorithm}[t]
		\DontPrintSemicolon
		\SetKwFunction{Union}{Union} \SetKwFunction{FindCompress}{FindCompress} 
		\SetKwInOut{Input}{Input}\SetKwInOut{Output}{Output}
		\Input{A graph $G$ and a positive integer $k$}
		\Output{$S \subseteq V(G)$, $|S| \leq k$ such that $G-S$ is of Type 5 (if one exists)}
		\BlankLine
		
		\nl \For{each ordered triple $(x,y,z)$ of vertices of $G$ such that $(x,y) \notin E(G)$ and $(x,z), (y,z) \in E(G)$} 
		{
			Let $Z'$ be the set $N(z) \cap (\overline{N[y]} \cap \overline{N[x]})$.\;
			Let $S'$ be $(N(y) \cap N(z)) \setminus N(x)$.\;
			Let $B'$ be the set  $\{z\} \cup (((N(x) \cup N(y)) \setminus S')$.\;
			Let $S''$ be $V(G) \setminus (Z' \cup B')$ and decrease $k$ by $|S''|$.\;
			\nl \For{each $k_1$ and $k_2$ such that $k_1+k_2 \leq k$} 
			{
				\nl Compute a vertex cover $S_1$ of $G[Z']$ with $|S_1| \leq k_1$ (if one exists).\;
				/* $(Z' \setminus S_1) \cup \{x,y\}$ is an independent set */\;
				\nl Compute a minimal odd cycle transversal $S_2$ of $G[B']$ with $|S_2| \leq k_2$ not containing $z$ (if one exists) such that the resultant bipartite graph has a bipartition $(X,Y)$ such that $X \subseteq N(x)$, $Y \subseteq N(y)$ and $z \in Y$.\;
				\If{$S_1$ and $S_2$ are non-empty sets}
				{\Return{$S'' \cup S_1 \cup S_2$ }
				}
			}
		}
		\caption{Deletion-to-Type5$(G,k)$}
		\label{type5}
	\end{algorithm}
	
	\begin{lemma}
		\label{type5-thm}
		Algorithm \ref{type5} runs in $\Oh^*(2.3146^k)$ time. 
	\end{lemma}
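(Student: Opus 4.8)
The plan is to analyze Algorithm \ref{type5} exactly as Lemmas \ref{type3-thm} and \ref{type4-thm} handle Algorithms \ref{type3} and \ref{type4}: count the iterations of the two nested loops and identify the dominating subroutine in the loop body. The outer loop (step 1) enumerates ordered triples $(x,y,z)$ matching the adjacency pattern of a Type 5 dominator, so it runs at most $n^3$ times; the inner loop (step 2) ranges over pairs $(k_1,k_2)$ with $k_1+k_2\le k$, so it runs at most $k^2$ times. In the body, step 3 computes a vertex cover of $G[Z']$ of size at most $k_1$ in $\Oh^*(1.2738^k)$ time by Theorem \ref{vc-best}, and step 4 computes a minimal odd cycle transversal of $G[B']$ of size at most $k_2$ subject to side constraints; once step 4 is shown to run in $\Oh^*(2.3146^k)$ time, the claimed bound follows because $n^3 k^2$ is polynomial and $1.2738<2.3146$.

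The one substantive point is realizing step 4 within $\Oh^*(2.3146^k)$ time, since it asks for an odd cycle transversal $S_2$ that both excludes $z$ and leaves a bipartition $(X,Y)$ of $G[B']-S_2$ with $X\subseteq N(x)$, $Y\subseteq N(y)$ and $z\in Y$. I would use that every vertex of $B'$ lies in $\{z\}\cup N(x)\cup N(y)$ with $z\in N(x)\cap N(y)$, so the only vertices free to choose a side are those in $B'\cap N(x)\cap N(y)$; hence the side constraints are equivalent to forcing $P:=B'\cap(N(x)\setminus N(y))$ onto the $X$-side and $Q:=\{z\}\cup\bigl(B'\cap(N(y)\setminus N(x))\bigr)$ onto the $Y$-side, which is precisely a constraint of the form handled by Lemma \ref{constrained-oct2}. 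Applying that lemma to $G[B']$ with these disjoint sets $P,Q$ produces an auxiliary graph $G''$ such that $G[B']$ has the desired constrained transversal of size at most $k_2$ if and only if $G''$ has a minimal odd cycle transversal of size at most $k_2$; moreover placing $z$ in $Q$ puts $z$ on the $Y$-side and hence out of the transversal, so the requirement that $S_2$ avoid $z$ is automatically met and Lemma \ref{constrained-oct1} is not needed here. Running the algorithm of Theorem \ref{oct-best} on $G''$ then takes $\Oh^*(2.3146^k)$ time.

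What remains is the correctness bookkeeping, which is essentially a transcription of the Type 3 and Type 4 arguments. Using the observation stated just before the algorithm, one checks that if $H=G-S$ is of Type 5 with dominator $(x,y,z)$, then $V(H)$ must split into the part dominated by $z$ (contained in $Z'$ and, together with $x,y$, independent, so that a vertex cover of $G[Z']$ suffices) and the part inducing the bipartite graph with the bipartition above (contained in $B'$), while every vertex outside $Z'\cup B'\cup\{x,y\}$ is forced into $S$ — this is the set $S''$ the algorithm removes; hence $G$ admits such an $S$ of size at most $k$ with dominator $(x,y,z)$ if and only if, for some $k_1+k_2\le k-|S''|$, the graph $G[Z']$ has a vertex cover of size at most $k_1$ and $G[B']$ has the constrained minimal odd cycle transversal of size at most $k_2$, and conversely $S''\cup S_1\cup S_2$ as returned leaves a Type 5 graph with dominator $(x,y,z)$. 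The only place where real care is needed is matching the two simultaneous constraints of step 4 to the gadget of Lemma \ref{constrained-oct2}; the rest follows the earlier cases verbatim.
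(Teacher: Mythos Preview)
Your overall structure — counting $n^3$ outer iterations, $k^2$ inner iterations, bounding step~3 by Theorem~\ref{vc-best}, and reducing everything to the cost of step~4 — matches the paper exactly, as does the correctness bookkeeping you sketch.

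Where you differ is in realizing step~4. The paper applies \emph{both} Lemma~\ref{constrained-oct1} and Lemma~\ref{constrained-oct2}: it first builds $G'$ from $G[B']$ via the $z_{ij}$-gadget of Lemma~\ref{constrained-oct1} to enforce $z\notin S_2$, and then applies Lemma~\ref{constrained-oct2} to $G'$ with forced sets $P=\{z\}\cup N_y$ and $Q=N_x\cup N_{zx}\cup N_{xyz}$ (where $N_x,N_y,N_{zx},N_{xyz}$ partition the relevant neighbourhoods according to adjacency with $x,y,z$). You instead apply Lemma~\ref{constrained-oct2} once, directly to $G[B']$, with $z$ placed in the $Y$-forcing set; this simultaneously pins $z$ to the $Y$-side and keeps it out of the transversal, so Lemma~\ref{constrained-oct1} becomes redundant. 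Your forced sets also omit $N_{xyz}$, which is harmless: those vertices are neighbours of $z$, so once $z$ lies in $Y$ the bipartiteness of $G[B']-S_2$ already pushes any surviving vertex of $N_{xyz}$ to the $X$-side. Both reductions yield an auxiliary graph on which Theorem~\ref{oct-best} runs in $\Oh^*(2.3146^{k})$ time; your route is one gadget shorter.
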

	\begin{proof}
		 Consider an ordered triple $(x,y,z)$ of vertices in $G$. If $(x,y,z)$ is a dominator in $H$, then we need to extend $\{x,y\}$ into an independent set $Z$ that is dominated by $z$ and extend $\{z\}$ into a bipartite graph $B$ with bipartition $(X,Y)$ such that $X$ is dominated by $x$ and $Y$ is dominated by $y$. Thus, the candidates for $Z$ and $V(B)$ are $Z'=N(z) \cap (\overline{N[y]} \cap \overline{N[x]})$ and $B'=\{z\} \cup ((N(x) \cup N(y)) \setminus (N(y)\cap N(z)) \setminus N(x))$ respectively. All vertices in $V(G) \setminus (Z' \cup B')$ are in any solution. Let $k'=k-|V(G) \setminus (Z' \cup B')|$. Then, $G$ has a $3$-cd-partization solution $S$ of size at most $k$ such that $H=G-S$ is of Type 5 with $(x,y,z)$ as a dominator if and only if there exists integers $k_1$ and $k_2$ with $k_1+k_2 \leq k'$ such that $G[Z']$ has a vertex cover of size at most $k_1$ and $G[B']$ has an odd cycle transversal of size $k_2$ not containing $z$ such that the resultant bipartite graph has a bipartition $(X,Y)$ such that $X \subseteq N(x)$, $Y \subseteq N(y)$ and $z \in Y$. Step 3 takes $\Oh^*(1.2738^k)$ time. For step 4, we use Lemmas \ref{constrained-oct1} and \ref{constrained-oct2}. Let $G'$ be the graph obtained from $G[B']$ by deleting $z$ and adding a new vertex $z_{ij}$ for each pair $z_i$, $z_j$ of neighbors of $z$, adjacent to $z_i$ and $z_j$. Now, a minimal odd cycle transversal of $G'$  corresponds to a minimal odd cycle transversal of $G[B']$ not containing $z$. However, we also need the additional constraint that such an odd cycle transversal results in a bipartite graph $B$ which has a bipartition $(X,Y)$ such that $X \subseteq N(x)$ and $Y \subseteq N(y)$. The possible vertices in $B$ are from the set $\{z\} \cup (((N(x) \cup N(y)) \setminus (N(y) \cap N(z)) \setminus N(x))$. The following observations on vertices from this set are easy to verify.
		\begin{itemize}
			\item $N_x=N(x) \setminus (N(y) \cup N(z))$ cannot be dominated by $y$ and $N_y=N(y) \setminus (N(x) \cup N(z))$ cannot be dominated by $x$. 
			\item $N_{zx}=(N(x) \cap N(z)) \setminus N(y)$ and $N_{xyz}=N(x) \cap N(y) \cap N(z)$ cannot be in a part of the bipartition that contains $z$.
		\end{itemize}
		It follows that we need an odd cycle transversal (of size at most $k_2$) of $G[B']$ after deleting which the resultant bipartite graph has a 2-coloring in which any vertex from $P=\{z\} \cup N_y$ receives color 1 and any vertex from $Q=N_x \cup N_{zx} \cup N_{xyz}$ receives color 2. This is achieved by constructing graph $G''$ from $G'$ by adding an independent set $I_P$ of $k_2+1$ new vertices each of which is adjacent to every vertex in $P$ and an independent set $I_Q$ of $k_2+1$ new vertices each of which is adjacent to every vertex in $Q$. Further, every vertex in $I_P$ is adjacent to every vertex in $I_Q$. Now, $G[B']$ has a minimal odd cycle transversal of size at most $k_2$ not containing $z$ such that the resultant bipartite graph has a bipartition $(X,Y)$ such that $X \subseteq N(x)$, $Y \subseteq N(y)$ and $z \in Y$ if and only if $G''$ has a minimal odd cycle transversal of size at most $k_2$. Now, using Theorem \ref{oct-best}, it follows that step 4 takes $\Oh^*(2.3146^k)$ time and the overall running time is dominated (upto polynomial factors) by this computation.
\end{proof}
	
From Lemmata \ref{type1-thm}, \ref{type2-thm}, \ref{type3-thm}, \ref{type4-thm} and \ref{type5-thm}, we have the following result.
\begin{theorem}
Given a graph $G$ and an integer $k$, there is an algorithm that determines if there is a set $S$ of size $k$ whose deletion results in a graph $H$ with $\chi_{cd}(H) \leq 3$ in $\Oh^*(2.3146^k)$ time.
\end{theorem}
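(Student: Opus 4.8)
The plan is to turn the structural characterization of Theorem~\ref{3-char} into a branching-plus-subroutine algorithm: for each of the constantly many ``shapes'' that a $3$-cd-colorable graph can have, decide whether $G$ can be reduced to that shape by deleting at most $k$ vertices, and accept iff some shape works.

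First I would dispose of connectivity. If $H=G-S$ is connected, then $\chi_{cd}(H)\le 3$ iff $H$ is of one of the Types~$0$--$5$ of Theorem~\ref{3-char}. If $H$ is disconnected, Observation~\ref{obs:graph-connected} makes $\chi_{cd}(H)$ the sum of the (positive) cd-chromatic numbers of its components, so $H$ has at most three components, and a short case check shows that then either $|V(H)|\le 3$ --- handled by the trivial Type~$0$ test $k\ge n-3$ --- or $H$ is the disjoint union of an isolated vertex and a Type~$1$ graph (using that a connected graph has $\chi_{cd}=2$ exactly when it is bipartite with a dominating edge). This last case is absorbed by guessing the isolated vertex $w$, deleting $N_G(w)$ (forced), and running the Type~$1$ routine on $G-N_G[w]$ with the residual budget, which costs only an extra factor $n$. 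So it suffices to decide, for each target type $t\in\{0,1,2,3,4,5\}$, whether there is an $S$ with $|S|\le k$ and $G-S$ of type $t$.

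The algorithm then runs Algorithm~\ref{type1}, Algorithm~\ref{type2}, Algorithm~\ref{type3}, Algorithm~\ref{type4} and Algorithm~\ref{type5} (plus the $\Oh(1)$-time Type~$0$ test) and accepts iff one of them outputs a set of size at most $k$. Correctness is immediate from two facts already proved: (i) each routine is sound and complete for its own type --- this is precisely Lemmas~\ref{type1-thm}, \ref{type2-thm}, \ref{type3-thm}, \ref{type4-thm} and~\ref{type5-thm}, whose proofs reduce ``extend the guessed dominator into the prescribed independent sets and bipartite pieces'' to computing vertex covers and (constrained) odd cycle transversals of explicitly described induced subgraphs; and (ii) Theorem~\ref{3-char} says the six types together are exactly the connected $3$-cd-colorable graphs, so taking the disjunction of the tests is correct in both directions. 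For the time bound, the Type~$1$, $2$ and $4$ routines only call the \textsc{Vertex Cover} algorithm of Theorem~\ref{vc-best} inside polynomially many branches (over candidate dominators and over splits $k_1+k_2\le k$ or $k_1+k_2+k_3\le k$), giving $\Oh^*(1.2738^k)$ each, while the Type~$3$ and Type~$5$ routines additionally invoke the \textsc{Odd Cycle Transversal} algorithm of Theorem~\ref{oct-best} after the gadget constructions of Lemmas~\ref{constrained-oct1} and~\ref{constrained-oct2}, giving $\Oh^*(2.3146^k)$. The overall bound is the maximum of these, $\Oh^*(2.3146^k)$; the extra $n$ from the disconnected case is swallowed by $\Oh^*$.

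The step needing the most care --- and the genuine technical heart of the development --- is the odd-cycle-transversal side, i.e.\ Types~$3$ and~$5$: one does not want an arbitrary minimum OCT but a \emph{minimal} one that avoids a prescribed vertex and, for Type~$5$, respects a prescribed bipartition (one side inside $N(x)$, the other inside $N(y)$, with $z$ on the correct side) while still leaving a non-edgeless bipartite remainder. The delicate point is to check that the vertex-split gadget of Lemma~\ref{constrained-oct1} and the $(k+1)$-pendant-cluster gadget of Lemma~\ref{constrained-oct2} translate these constrained problems into ordinary minimal-OCT instances with the equivalence holding in both directions, so that no forbidden vertex is ever forced into the transversal. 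The remaining ingredients --- exhausting the dominators, the arithmetic over $k_1,k_2,k_3$, and the disconnected-component bookkeeping --- are routine.
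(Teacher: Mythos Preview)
Your proposal is correct and follows essentially the same strategy as the paper: invoke the five type-specific routines (Lemmas~\ref{type1-thm}--\ref{type5-thm}) together with the trivial Type~$0$ check, and take the disjunction, with the running time dominated by the $\Oh^*(2.3146^k)$ OCT calls in Types~$3$ and~$5$. You are in fact more careful than the paper in one respect: the paper tacitly assumes $H=G-S$ is connected when applying Theorem~\ref{3-char}, whereas you explicitly dispatch the disconnected case via Observation~\ref{obs:graph-connected} and reduce it to an extra guess of the isolated vertex followed by the Type~$1$ routine.
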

	
\section{Concluding Remarks}
In this work, we described exact and \FPT\ algorithms for problems associated with cd-coloring.
We also explored the complexity of finding the cd-chromatic number in graphs of girth at least $5$ and chordal graphs.
On the former graph class, we described a polynomial kernel. 
The kernelization complexity on other graph classes and whether the problem is \FPT\ parameterized by only treewidth are natural directions for further study.
It is also interesting to get an exact function when parameterized by treewidth and the number of colors.	



\bibliographystyle{elsarticle-num} 
\bibliography{ref}





\end{document}